\theoremstyle{theorem}
\newtheorem{thm}{Theorem}
\newtheorem{prop}[thm]{Proposition}
\newtheorem{lem}[thm]{Lemma}
\newtheorem*{thm*}{Theorem}
\theoremstyle{definition}
\newtheorem{defn}[thm]{Definition}
\newtheorem{rmk}[thm]{Remark}
\newtheorem{eg}[thm]{Example}
\def\on{\operatorname}
\def\sf{\mathsf}
\def\bb{\mathbb}
\def\CC{\bb{C}}
\def\RR{\bb{R}}
\def\NN{\bb{N}}
\def\ZZ{\bb{Z}}
\def\qQ{\mathcal{Q}}
\def\pP{\mathcal{P}}
\def\cC{\mathcal{C}}
\def\mM{\mathcal{M}}
\def\lL{\mathcal{L}}
\def\bB{\mathcal{B}}
\newcommand{\one}{\mathbb{1}}
\newcommand{\ket}[1]{\ensuremath{ | #1 \rangle }}
\newcommand{\bra}[1]{\ensuremath{ \langle #1 | }}
\newcommand{\set}[1]{\ensuremath{ \lbrace #1 \rbrace }}
\newcommand{\gettikzxy}[3]{%
	\tikz@scan@one@point\pgfutil@firstofone#1\relax
	\edef#2{\the\pgf@x}%
	\edef#3{\the\pgf@y}%
}
\newcommand{\PortBlockHalfSize}{2.4}
\newcommand{\Block}[4]{%
	\node (#4) at (#1,#2)
	[
		draw,
		thick,
		minimum width=\PortBlockHalfSize em,
		minimum height=\PortBlockHalfSize em
	]
	{#3};%
}
\newcommand{\VOut}[3]{%
	($(#1.south west)!{#3/(#2+1)}!(#1.south east)$)%
}
\newcommand{\VIn}[3]{%
	($(#1.north west)!{#3/(#2+1)}!(#1.north east)$)%
}
\newcommand{\HIn}[3]{%
	($(#1.north west)!{#3/(#2+1)}!(#1.south west)$)%
}
\newcommand{\HOut}[3]{%
	($(#1.north east)!{#3/(#2+1)}!(#1.south east)$)%
}
\newcommand{\HDraw}[8]{%
	\draw let
		\p1=\HOut{#1}{#2}{#3},
		\p2=\HIn{#4}{#5}{#6}
	in
		(\x1,\y1) to[out=#7,in=#8] (\x2,\y2);
}
\newcommand{\VDraw}[8]{%
	\draw[dashed] let
		\p1=\VOut{#1}{#2}{#3},
		\p2=\VIn{#4}{#5}{#6}
	in
		(\x1,\y1) to[out=#7,in=#8] (\x2,\y2);
}
\newcommand{\VIndraw}[4]{%
	\draw[dashed]
	let
		\p1=\VIn{#1}{#2}{#3}
	in
		(\x1,\y1) to ($(\x1,\y1)+(0,#4)$);
}
\newcommand{\VOutdraw}[4]{%
	\draw[dashed]
	let
		\p1=\VOut{#1}{#2}{#3}
	in
		(\x1,\y1) to ($(\x1,\y1)+(0,-#4)$);
}
\newcommand{\HIndraw}[4]{%
	\draw
	let
		\p1=\HIn{#1}{#2}{#3}
	in
		(\x1,\y1) to ($(\x1,\y1)+(-#4,0)$);
}
\newcommand{\HOutdraw}[4]{%
	\draw
	let
		\p1=\HOut{#1}{#2}{#3}
	in
		(\x1,\y1) to ($(\x1,\y1)+(#4,0)$);
}
\newcounter{commcount}\setcounter{commcount}{0}
\title{
{Double categories for adaptive quantum computation}} 
\begin{document}

\author[1]{Cihan Okay\footnote{cihan.okay@bilkent.edu.tr}}
\author[2]{Walker Stern}
\author[3]{Redi Haderi}
\author[4]{Selman Ipek}

\affil[1]{{\small{Department of Mathematics, Bilkent University, Ankara, Turkey}}}
\affil[2]{\small{Department of Mathematics, Technical University of Munich, Munich, Germany}}
\affil[3]{{\small{Department of Mathematics, Yildirim Beyazit University, Ankara, Turkey}}}
\affil[4]{{\small{Institute of Theoretical Physics,  Leibniz University Hannover, Germany}}}

	\maketitle

\begin{abstract}  
Quantum computation admits several models that emphasize different
computational primitives and forms of classical control. We develop a
unified double categorical framework for describing these models and the
conversions between them. The syntax is provided by \emph{double port
graphs}, whose horizontal wires carry quantum information and whose vertical
wires carry classical information and control. For each set of port labels,
these graphs form a double category, and this construction is functorial in
the label set. The semantics is given by the one-object  double
category of \emph{adaptive instruments}. Its associated horizontal and
vertical monoidal categories recover, respectively, 
quantum channels and stochastic maps. An assignment of an adaptive instrument to each primitive label therefore
extends canonically to a double functor on labeled double port graphs,
providing their computational semantics.
We apply this framework to prominent models of quantum computation, including the circuit model, measurement-based quantum
computation, quantum computation with magic states, and measurement-based
Pauli computation. Gadget constructions from quantum computing that
implement conversions between these models become double functors. Finally,
we show that the interaction between quantum operations and affine classical
control in measurement-based Pauli computation realizes every Boolean
function in the vertical direction, thereby providing the non-affine
classical operations required for its simulation of the circuit model.
\end{abstract}

\tableofcontents	
	
\section{Introduction}

Quantum computation can be formulated through several models, each
emphasizing different computational primitives and forms of classical
control. Understanding the structural relationships among these models is
important for comparing how they represent and implement quantum
computations. Category theory provides a systematic language for such
comparisons, allowing common structures and differences to be expressed at
a high level of abstraction. For example, categorical quantum mechanics
describes quantum processes through string diagrams
\cite{abramsky2009categorical,selinger2007dagger,
heunen2019categories,coecke2018picturing,van2020zx}. This formalism
abstracts essential features of the Hilbert space formulation of quantum
theory and allows fundamental constructions in quantum information
processing, such as quantum teleportation, to be studied compositionally.
The usual string diagrammatic framework is based on a  monoidal
category of processes, in which quantum channels form the primary
morphisms, while classical information and control are encoded internally.
In this work, we separate the classical and quantum directions within a
double categorical framework. We introduce double port graphs as a
two-dimensional syntax for computational processes and adaptive instruments
as their semantics. This separation makes the interaction between quantum
processes and classical control explicit and provides a common setting in
which different models of quantum computation and the conversions between
them can be described by double functors.

\begin{figure}[h]
\centering
\subfloat[]{
\begin{tikzpicture}[x=0.5em,y=0.5em]
\Block{0}{0}{}{Phi}
\VIndraw{Phi}{1}{1}{3}
\VOutdraw{Phi}{1}{1}{3}
\HIndraw{Phi}{1}{1}{3}
\HOutdraw{Phi}{1}{1}{3}
\end{tikzpicture}
\label{fig:a}
\vspace{1.1em}
}
\hspace{8em}
\subfloat[]{
\begin{tikzpicture}[x=0.5em,y=0.5em]

\Block{0}{0}{$\Phi$}{Phi}

\VIndraw{Phi}{1}{1}{3}
\node[above] at ($(Phi.north)+(0,3)$) {${X}$};

\VOutdraw{Phi}{1}{1}{3}
\node[below] at ($(Phi.south)+(0,-3)$) {${Y}$};

\HIndraw{Phi}{1}{1}{3}
\node[left] at ($(Phi.west)+(-3,0)$) {$V$};

\HOutdraw{Phi}{1}{1}{3}
\node[right] at ($(Phi.east)+(3,0)$) {$W$};

\end{tikzpicture}
\label{fig:b}
}
\caption{(a) A double port graph consisting of a single node.  (b) {A double port graph labeled by an} adaptive instrument $\Phi$ with input set ${X}$ and output set ${Y}$ operating on the input Hilbert space $V$ with output Hilbert space $W$. The horizontal direction represents quantum computation, whereas the vertical direction represents classical computation.}
\label{fig:phi-blocks}
\end{figure}

Our philosophy is to adopt a high-level treatment of operations represented by boxes with two kinds of composition. Such a basic component is depicted as in Figure \ref{fig:a}
where solid horizontal wires denote qubits and dashed vertical wires denote bits. Thus, quantum information flows along solid wires in the horizontal direction, while classical information flows along dashed wires in the vertical direction. We formalize this structure as a \emph{double port graph}, which captures the syntax of the two interacting modes of composition independently of their interpretation. A double port graph extends the one-dimensional notion of a port graph \cite{fong2018seven}, which is a graph with distinguished dangling edges for inputs and outputs. A key property of double port graphs is that, when regarded as a one-dimensional object (i.e., interpreting dashed wires as solid), one recovers an ordinary port graph. Treating each box as a vertex yields an acyclic directed graph, which naturally induces a temporal order on the operations, both horizontally and vertically. This temporal order has a causal nature: if an operation is controlled by the output of another, then its own outputs cannot be used as inputs to the controlling box.

We assemble $\lL$-labeled double port graphs into a double category
$\sf{DPG}_{\lL}$, where $\lL$ is a set of port labels. A morphism of port labels
$f\colon\lL\to\mM$ induces a relabeling double functor
\[
\overline{f}\colon\sf{DPG}_{\lL}\longrightarrow\sf{DPG}_{\mM}.
\]
These assignments define a functor
\[
\begin{tikzcd}
\sf{DPG}\colon &[-3em] \sf{PL} \arrow[r] & \sf{DCat}
\end{tikzcd}
\]
from the category of port label sets to the category of double categories
and double functors.
Double port graphs may themselves be used as labels. Writing
$\on{Sq}(\sf{D})$ for the set of squares of a double category $\sf{D}$,
equipped with its horizontal and vertical source and target maps, we obtain
a pasting double functor
\[
\begin{tikzcd}
\on{paste}_{\lL}\colon &[-3em]
\sf{DPG}_{\on{Sq}(\sf{DPG}_{\lL})}
\arrow[r]
&
\sf{DPG}_{\lL}.
\end{tikzcd}
\]
This functor replaces each vertex of a double port graph by the
$\lL$-labeled double port graph assigned to its label and then connects the
resulting ports according to the wiring of the ambient graph.
Consequently, a morphism of port labels
$
f\colon\mM\to\on{Sq}(\sf{DPG}_{\lL}),
$
which assigns an $\lL$-labeled double port graph to every $\mM$-label,
induces the composite double functor
\[
\begin{tikzcd}
\kappa_{\mM,\lL}\colon &[-3em]
\sf{DPG}_{\mM}
\arrow[r,"\overline{f}"]
&
\sf{DPG}_{\on{Sq}(\sf{DPG}_{\lL})}
\arrow[r,"{\on{paste}_{\lL}}"]
&
\sf{DPG}_{\lL}.
\end{tikzcd}
\]
Thus, an assignment of labels to compatible diagrams extends
functorially to a substitution of entire labeled double port graphs. This
construction provides the categorical basis for the
conversions between computational models developed later in the paper.

Quantum operations, including unitary transformations and measurements, can
be described by \emph{instruments}
\cite{davies1970operational,watrous2018theory}. An instrument is a
collection of completely positive linear maps, indexed by an output set,
whose sum is a channel. We introduce an input-indexed version of this
notion, called an \emph{adaptive instrument}. Given finite sets $X$ and $Y$ and finite-dimensional
Hilbert spaces $V$ and $W$, an adaptive instrument $\Phi$ consists of a
family
\[ 
\begin{tikzcd}
\Phi_a^b\colon &[-3em] L(V) \arrow[r] & L(W)
\end{tikzcd},
\qquad a\in X,\quad b\in Y,
\]
of completely positive maps such that, for each $a\in X$, the family
$\{\Phi_a^b\}_{b\in Y}$ is an instrument; see Figure~\ref{fig:b}. Thus, the classical input
$a$ selects the instrument to be applied, while $b$ corresponds to its output.
A typical example is an adaptive quantum measurement whose measurement
angle is determined by the output of a preceding measurement.

Given appropriately composable adaptive instruments $\Phi$ and $\Psi$, we
define their horizontal and vertical compositions, denoted respectively by
\[
\Psi\circ\Phi
\qquad\text{and}\qquad
\Psi\bullet\Phi.
\]
Horizontal composition is induced by the composition of completely positive
maps, whereas vertical composition is defined using their tensor product
together with summation over the intermediate classical outputs. We prove that these operations satisfy the interchange law, the key property of a double category governing the interaction between horizontal and vertical composition. Adaptive
instruments therefore form a one-object double category
$\sf{Inst}$.
The associated horizontal monoidal category is identified with the strict
monoidal category $\sf{Chan}_{\NN}$ of quantum channels between the Hilbert
spaces $\CC[\underline{n}]$, while the associated vertical monoidal category
is identified with the strict monoidal category $\sf{Stoch}_{\NN}$ of
stochastic maps between the finite sets $\underline{n}$
\cite{watrous2018theory,selinger2007dagger,jacobs2010convexity}. In this way,
$\sf{Inst}$ provides the semantics of our framework, with quantum processes
and classical control represented by its two categorical directions.

Our treatment of computational models begins with the double category
$\sf{QBit}$ of qubit instruments. This is the full double subcategory of
$\sf{Inst}$ whose Hilbert-space boundaries are tensor powers
$(\CC^2)^{\otimes n}$ and whose classical boundaries are the bit-string
sets $\ZZ_2^n$. 
Specifying a qubit instrument for each label in $\lL$ determines a double functor. 
\[
\begin{tikzcd}
\phi_\lL\colon &[-3em]
\sf{DPG}_{\lL} \arrow[r] &
\sf{QBit}.
\end{tikzcd}
\]
For each computational model considered below, the corresponding choice of
label set and instrument assignment yields a double functor of this
form. 
Our formulation of quantum computation begins with the circuit model (QC)
\cite{nielsen2010quantum} and then extends to measurement-based quantum
computation (MBQC) \cite{raussendorf2001one} and quantum computation with
magic states (QCM) \cite{bravyi2005universal}. We write $\cC$ for the QC
labels, $\mM$ for the MBQC labels, and $\qQ$ for the QCM labels. These
fundamental models are represented as double categories of labeled port
graphs:
\[
\sf{DPG}_\cC, \qquad \sf{DPG}_\mM,\qquad \sf{DPG}_\qQ.
\]
The label sets specify the computational primitives of each model, from
which all other allowed operations can be constructed by horizontal and
vertical composition of labeled boxes.
A computation in one model can be translated into an equivalent computation
in another, where equivalence means that the associated instruments compose
to the same overall operation. Formally, these conversions, together with
the corresponding instrument assignments, are captured by double functors,
giving the commutative diagram
\begin{equation}\label{dia:Q-C-M}
\begin{tikzcd}[column sep=huge, row sep=large]
  & \sf{DPG}_\qQ \arrow[dr,"\phi_{\qQ}"] &\\
\sf{DPG}_\cC  \arrow[rr,"\phi_\cC"] \arrow[ru,"\kappa_{\cC,\qQ}"]\arrow[dr,"\kappa_{\cC,\mM}"'] && \sf{QBit} \\
&\sf{DPG}_{\mM} \arrow[ru,"\phi_\mM"']&
\end{tikzcd}
\end{equation}
A model conversion is defined using the pasting operation on labeled double
port graphs and is typically specified by a collection of special
\emph{gadgets} that implement the instrument associated with a label in one
model using labels from another model.

In practice, quantum computation requires the approximate implementation of
arbitrary unitary maps, a property known as \emph{quantum universality}
\cite{kitaev2002classical}. The principal computational primitive in the
circuit label set $\cC$ is the $T$-gate. By contrast, neither MBQC nor QCM
contains the $T$-gate as a primitive label. 
In MBQC, a typical adaptive operation is a measurement performed at an angle
$\pm\alpha$ relative to the Pauli $X$ axis in the equatorial plane of the
Bloch sphere. The computational resource corresponding to the $T$-gate is
provided by measurements with $\alpha=\pi/4$. In QCM, by contrast, the computational resource is supplied through the preparation of the magic state
$T\ket{+}$, obtained by rotating the $X$-eigenstate $\ket{+}$ by the angle
$\pi/4$. In both adaptive models, classically controlled unitary corrections
may be applied to ensure the deterministic implementation of the desired
unitary map. In this way, adaptivity compensates for the probabilistic nature
of quantum measurements.

At the intersection of MBQC and QCM lies an interesting model known as
measurement-based Pauli computation (MBPC)
\cite{okay2024classical,danos2007pauli}. In MBPC, whose label set is denoted
by $\pP$, the state preparations are inherited from QCM, while the
entanglement structure follows that of MBQC, with measurements restricted to
the Pauli $X$ and $Y$ observables (i.e., $\alpha =0$ or $ \pi/2$). To ensure the
deterministic implementation of unitary maps, we extend the label set to
$\widetilde\pP$ by including correction boxes. The corresponding model
conversions and instrument assignments yield the following commutative
diagram:
\begin{equation}\label{dia:Q-C-P}
\begin{tikzcd}[column sep=huge, row sep=large]
&\sf{DPG}_\cC \arrow[d,"\kappa_{\cC,\widetilde{\pP}}"'] \arrow[rr,"\phi_\cC"] && \sf{QBit} \\
\sf{DPG}_{\pP}  \arrow[r,hook]&\sf{DPG}_{\widetilde \pP} \arrow[rru,"\phi_{\widetilde{\pP}}"]\arrow[rr,"\kappa_{\widetilde{\pP},\qQ}"'] && \sf{DPG}_{\qQ} \arrow[u,"\phi_\qQ"'] \\ 
\end{tikzcd}
\end{equation}
This completes the family of computational models considered in this paper.
More generally, additional model conversions can be formulated between
different pairs of these computational models.

The principal construction behind the conversion from the circuit model to
MBPC is the implementation of the $T$-gate using a gadget known as the
teleported $J(\pi/4)$-gadget
(Figure~\ref{fig:teleported-j-pi-over-4}). This gadget requires the
non-affine Boolean operation $\on{AND}$, although the classical labels in
$\pP$ generate only affine Boolean functions. This apparent obstruction
reveals an additional feature of our two-dimensional compositional
framework. Using a Greenberger--Horne--Zeilinger (GHZ) state, adaptive Pauli
measurements, and affine Boolean control, the OR-gadget realizes the
non-affine Boolean operation $\on{OR}$ entirely within MBPC
\cite{anders2009computational}; see
Definition~\ref{def:Anders Browne gadget}. Since the affine Boolean operations together with $\on{OR}$ generate all Boolean functions, which constitute the morphisms of the full subcategory $\sf{Bool}$ of the category of sets, every deterministic Boolean map can be realized in the vertical direction of $\sf{MBPC}$, the image of the double functor $\phi_\pP$. The same
construction supplies the non-affine Boolean operations required by the
circuit-to-MBPC conversion. Thus, the interaction with the horizontal
quantum direction enlarges the vertical classical structure from affine
Boolean computation to arbitrary Boolean computation, a transition closely
connected to quantum contextuality in measurement-based models
\cite{raussendorf2013contextuality,abramsky2017contextual,
raussendorf2016cohomological,abramsky2024combining}.

\begin{thm*} 
Let $\sf{MBPC}$ denote the image of the double functor $\phi_\pP$. Then,
\[
\sf{Bool}
\subset
\sf{V}(\sf{MBPC}).
\]
\end{thm*}

The paper is organized as follows. Section~\ref{sec:double cat of port graphs}
introduces double port graphs, our main syntactic framework and a
two-dimensional generalization of ordinary port graphs.
Section~\ref{sec:double cat of inst} develops the double category of adaptive
instruments, which provides the semantics of the framework. It also presents
double port graph descriptions of Boolean and quantum circuit models,
constructs their associated semantics double functors, and introduces
adaptive local instruments in preparation for the study of adaptive quantum
computation. Section~\ref{sec:adaptive quantum computation} treats the main
adaptive computational models considered in the paper, including
measurement-based quantum computation, quantum computation with magic
states, and measurement-based Pauli computation, together with the
conversions between them and their computational properties.
Appendix~\ref{sec:double category} reviews the necessary background on double
categories, while Appendix~\ref{sec:standard form} derives the standard form
of MBQC within our double categorical framework.

\paragraph{Acknowledgments.}
This work is supported by the Air Force Office of Scientific Research (AFOSR) under award number  FA9550-24-1-0257. The first author also acknowledges support from the Digital Horizon Europe project FoQaCiA, GA no. 101070558.

\section{{Double port graphs}}
\label{sec:double cat of port graphs}	

{In this section, we introduce \emph{double port graphs}. Intuitively, these are special kinds of graphs equipped with both horizontal and vertical inputs and outputs, allowing composition in two distinct directions—horizontal and vertical—thus forming a double category. We also define a labeled version, in which each node is assigned a label from a specified set. This section provides the technical groundwork used to formalize various quantum computational models developed in the following sections.   {Throughout the paper, a double category means a strict double category in the sense of Definition~\ref{def:double category} in Appendix~\ref{sec:double category}.}

	\subsection{Port graphs}
 	\label{sec:port graphs}

In what follows, we will use the term \emph{port graph}, though our notion differs from that of Spivak \cite{fong2018seven} in that the input and output labels of a vertex form an \emph{ordered} set. {Let us begin by fixing some notation.} We will denote by $\sf{Ord}$ and $\sf{Fin}$  
the category of linearly ordered finite sets and the category of finite sets. 
{W}e will denote by $\NN$ the set of natural numbers, including $0$, that is, the set of isomorphism classes in $\sf{Fin}$ or $\sf{Ord}$. 
{The skeletal versions of $\sf{Ord}$ and $\sf{Fin}$ will be denoted by $\sf{Ord}_\NN$ and $\sf{Fin}_\NN$, respectively, and their objects will be denoted} 
by $\underline{n}=\{1,\ldots,n\}$ for $n\geq 0$, {with the convention that $\underline 0$ is the empty set}.

	\begin{defn}
		For $n,m\in \NN$, define an \emph{$(m,n)$ pre-port graph} $\Gamma$ to consist of the following data.
		\begin{itemize}
			\item A {finite} set $X$ called the \emph{vertices} of $\Gamma$. 
			\item A pair of functions
			\[
			\begin{tikzcd}
				\on{in}:&[-3em] X \arrow[r] &\sf{Ord} 
			\end{tikzcd} \quad \begin{tikzcd}
				\on{out}:&[-3em] X \arrow[r] &\sf{Ord} 
			\end{tikzcd}
			\]
			which assign to each vertex sets of inputs and outputs, respectively. We will write 
			\[
				O:= \coprod_{x\in X} \on{out}(x),\qquad
				I:= \coprod_{x\in X} \on{in}(x) 
			\]
			(where the coproduct is taken in $\sf{Fin}$, i.e., is the disjoint union of sets) for the sets of all inputs and outputs, respectively.
			\item A bijection
			\[
			\begin{tikzcd}
				\iota: &[-3em] \underline{m}\amalg O\arrow[r] & I\amalg \underline{n}.
			\end{tikzcd}
			\]
		\end{itemize}
		An \emph{isomorphism of  $(m,n)$-port graphs} consists of a bijection of the vertex sets and natural bijections between the in- and out-maps, such that the induced map on sets of all inputs and sets of all outputs commutes with the $\iota$'s.
	\end{defn}
	
	To make sense of this definition, we must define the internal flow graph.

	\begin{defn}
		Let $(X,\on{out},\on{in},\iota)$ be an $(m,n)$ pre-port graph. The corresponding \emph{internal flow graph} is the oriented graph with vertices $X\amalg\underline{n}\amalg \underline{m}$, set of edges $\underline{m}\amalg O$, and source and target maps given by 
		\[
		s(e) =\begin{cases}
			x & e= \on{out}(x)\\
			e & e\in \underline{m}
		\end{cases}
		\]
		and 
		\[
		t(e) =\begin{cases}
			x & \iota(e)=\on{in}(x) \\
			\iota(e) & \iota(e)\in \underline{n}. 
		\end{cases}
		\]
	\end{defn}

	\begin{defn}\label{def:port graph}
		An \emph{$(m,n)$-port graph} is an $(m,n)$-pre-port graph satisfying the condition that the internal flow graph is acyclic. 
	\end{defn}

\begin{eg}\label{ex:port-graph}
{Here is a {$(2,3)$}-port graph} {where the vertices are depicted as boxes:}	
\begin{center}
\begin{tikzpicture}[x=0.5em,y=0.5em]

  \Block{10}{0}{$a$}{Main}

  \Block{25}{5}{$b$}{Small}

  \HIndraw{Main}{2}{1}{5}   
  \HIndraw{Main}{2}{2}{5}   
 
  \HOutdraw{Main}{3}{3}{20}  
  \HOutdraw{Small}{2}{1}{5}
  \HOutdraw{Small}{2}{2}{5}

  \HDraw{Main}{3}{1}{Small}{3}{1}{0}{180}
  \HDraw{Main}{3}{2}{Small}{3}{2}{0}{180}

  \node[font=\footnotesize, left] at ($(Main.center)+(-8,1)$) {$1$};
  \node[font=\footnotesize, left] at ($(Main.center)+(-8,-1)$) {$2$};

  \node[font=\footnotesize, above] at ($(Main.center)+(-4.2,1)$) {$1$};
  \node[font=\footnotesize, below] at ($(Main.center)+(-4.2,-1)$) {$2$};

  \node[font=\footnotesize, above] at ($(Main.center)+(4.2,1.5)$) {$1$};
  \node[font=\footnotesize, above] at ($(Main.center)+(4.8,-1.7)$) {$2$};
  \node[font=\footnotesize, below] at ($(Main.center)+(4.2,-1.5)$) {$3$};

  \node[font=\footnotesize, right] at ($(Main.center)+(23,-1.5)$) {$3$};

  \node[font=\footnotesize, above] at ($(Small.center)+(-4.2,1.5)$) {$1$};
  \node[font=\footnotesize, below] at ($(Small.center)+(-4.2,0)$) {$2$};

  \node[font=\footnotesize, above] at ($(Small.center)+(4.2,1)$) {$1$};
  \node[font=\footnotesize, below] at ($(Small.center)+(4.2,-1)$) {$2$};

  \node[font=\footnotesize, right] at ($(Small.center)+(8,1)$) {$1$};
  \node[font=\footnotesize, right] at ($(Small.center)+(8,-1)$) {$2$};

\end{tikzpicture}
\end{center}
{The isomorphism $\iota$ specifies how the wires are connected. For example, $\iota(a,1)=(1,b)$, $\iota(a,2)=(b,2)$, and $\iota(a,3)=3$. The internal flow graph is given by} 
\begin{center}
\begin{tikzpicture}[
  x=1em,y=1em,
  >=stealth,
  vtx/.style={circle, fill, inner sep=1.6pt},
  lab/.style={font=\footnotesize}
]

  \node[vtx, label=left:{\footnotesize $1$}] (in1) at (0,1.5) {};
  \node[vtx, label=left:{\footnotesize $2$}] (in2) at (0,-1.5) {};

  \node[vtx, label=above:{$a$}] (a) at (5,0) {};
  \node[vtx, label=above:{$b$}] (b) at (10,1.5) {};

  \node[vtx, label=right:{\footnotesize $1$}] (out1) at (15,2.5) {};
  \node[vtx, label=right:{\footnotesize $2$}] (out2) at (15,0.5) {};
  \node[vtx, label=right:{\footnotesize $3$}] (out3) at (15,-2) {};

  \draw[->] (in1) -- node[above, lab] {$1$} (a);
  \draw[->] (in2) -- node[below, lab] {$2$} (a);

  \draw[->] (a) to[bend left=12]
    node[above, lab] {${a,1}$}
    (b);

  \draw[->] (a) to[bend right=12]
    node[below, lab] {${a,2}$}
    (b);

  \draw[->] (a) to[bend right=20]
    node[below, lab] {${a,3}$}
    (out3);

  \draw[->] (b) -- node[above, lab] {${b,1}$} (out1);
  \draw[->] (b) -- node[below, lab] {${b,2}$} (out2);

\end{tikzpicture}
\end{center}
\end{eg}

	We define a concatenation operation on port graphs as follows. Let $\Gamma=(X, \on{in}^X,\on{out}^X,\iota^X)$ be an $(m,n)$ port graph, and let $\Xi=(Y, \on{in}^Y,\on{out}^Y,\iota^Y)$ be an $(n,k)$ port graph. We define a $(m,k)$ pre-port graph $\Xi\circ \Gamma$ as follows. 
	\begin{itemize}
		\item The vertex set of $\Xi\circ \Gamma$ is $Y\amalg X$. 
		\item The in- and out- functions are determined by the universal property of pushout. 
		\item The bijection $\iota$ is determined by the commutative diagram 
		\[
		\begin{tikzcd}
			\underline{m}\amalg O \arrow[r,phantom,"\cong"{description}] &[-2em] \underline{m}\amalg O^X \amalg O^Y \arrow[r]\arrow[d,hookrightarrow] &[2em] I^X\amalg I^Y \amalg \underline{k}\arrow[r,phantom,"\cong"{description}] &[-2em] I\amalg \underline{k}\\
			& \underline{m}\amalg O^X\amalg \underline{n}\amalg O^Y\arrow[d,"\iota^X\amalg \on{id}"'] & & \\
			& I^X\amalg \underline{n} \amalg \underline{n}\amalg O^Y \arrow[r,"{\on{id}\amalg \nabla\amalg \on{id}}"'] & I^X \amalg \underline{n}\amalg O^Y \arrow[uu,"\on{id}\amalg \iota^Y"']
		\end{tikzcd}
		\]
		where $\nabla$ is the codiagonal map on $\underline{n}$, that is, the map which acts as the identity on each copy of $\underline{n}$. 
	\end{itemize}
	
\begin{eg}\label{ex:concatenation}
The port graph in Example~\ref{ex:port-graph} is obtained by concatenating the following smaller port graphs by identifying the {corresponding top and bottom wires.}
\begin{center}
\begin{tikzpicture}[x=0.5em,y=0.5em]

  \Block{10}{0}{}{Main}

  \HIndraw{Main}{2}{1}{3}
  \HIndraw{Main}{2}{2}{3}

  \HOutdraw{Main}{3}{2}{3}
  \HOutdraw{Main}{3}{1}{3}
  \HOutdraw{Main}{3}{3}{3}
 

  \Block{30}{2}{}{Small}

  \HIndraw{Small}{2}{1}{3}
  \HIndraw{Small}{2}{2}{3}
  \HOutdraw{Small}{2}{1}{3}
  \HOutdraw{Small}{2}{2}{3}

  \draw (27-3,-3) -- (36,-3);

\end{tikzpicture}
\end{center}

\end{eg}	
	\begin{prop}
		The concatenation of an $(m,n)$-port graph with an $(n,k)$ port graph is an $(m,k)$ port graph. The operation of concatenation is associative and unital, with units given by the unique $(n,n)$ port graphs with empty vertex set. Given isomorphisms $\Gamma\cong \Gamma^\prime$ and $\Xi\cong \Xi^\prime$ of port graphs, there is a unique isomorphism $\Xi\circ \Gamma \cong \Xi^\prime \circ \Gamma^\prime$ which restricts to the original isomorphism. 
	\end{prop}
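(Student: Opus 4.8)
The plan is to dispatch the acyclicity claim first — the only genuinely combinatorial part — and then treat associativity, unitality, and compatibility with isomorphisms as formal consequences of the universal properties used to build the concatenation. For the port graph condition I would write out the internal flow graph $G$ of $\Xi\circ\Gamma$ directly from the concatenated data: its vertex set is $(Y\amalg X)\amalg\underline{k}\amalg\underline{m}$, its edge set is $\underline{m}\amalg O^X\amalg O^Y$, and unwinding the composite that defines $\iota$ shows that each edge of $O^Y$ has source in $Y$ and target (namely $\iota^Y$ of it) in $Y\amalg\underline{k}$, while every remaining edge has source in $X\amalg\underline{m}$. The heart of the argument is the observation that $G$ has \emph{no} edge from $B:=Y\amalg\underline{k}$ to $A:=X\amalg\underline{m}$: the vertices of $\underline{m}$ are pure sources in $G$, those of $\underline{k}$ pure sinks, and the only edges leaving $Y$ are those of $O^Y$, whose targets lie in $B$. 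Hence any directed cycle of $G$ is contained entirely in $A$ or entirely in $B$. A cycle inside $A$ cannot pass through $\underline{m}$ and can only use edges $e\in\underline{m}\amalg O^X$ with $\iota^X(e)\in I^X$ — an edge with $\iota^X(e)\in\underline{n}$ has its $G$-target re-routed via $\iota^Y$ into $B$ — and for such edges the $G$-source and $G$-target coincide with the source and target in the internal flow graph of $\Gamma$; so the cycle descends to a cycle there, contradicting that $\Gamma$ is a port graph. Symmetrically a cycle inside $B$ descends to a cycle in the internal flow graph of $\Xi$. Therefore $G$ is acyclic and $\Xi\circ\Gamma$ is an $(m,k)$ port graph.

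For the remaining assertions: the $(n,n)$ port graph $\on{id}_n$ with empty vertex set has $O=I=\emptyset$ and $\iota=\on{id}_{\underline{n}}$, and its internal flow graph is $n$ disjoint edges, so it is indeed a port graph. Plugging a factor with empty vertex set into the concatenation recipe, the codiagonal $\nabla$ contributes nothing, that factor's $\iota$ is a canonical bijection $\underline{\bullet}\to\underline{\bullet}$, and the recipe returns the other factor up to the coproduct unitors, giving canonical isomorphisms $\on{id}_n\circ\Gamma\cong\Gamma\cong\Gamma\circ\on{id}_m$. For associativity, both $(\Psi\circ\Xi)\circ\Gamma$ and $\Psi\circ(\Xi\circ\Gamma)$ — for $\Psi$ a further port graph with vertex set $Z$ — have vertex set $Z\amalg Y\amalg X$ (up to the coproduct associator), the same $\on{in}$ and $\on{out}$ functions, and, since the diagram defining $\iota$ is assembled from coproduct inclusions, codiagonals on the intermediate sets, and the $\iota$'s of the three factors, the same bijection $\iota$, by coherence for coproducts in $\sf{Fin}$. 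Finally, given isomorphisms $\Gamma\cong\Gamma'$ and $\Xi\cong\Xi'$, the induced bijections of vertex sets and of in/out sets assemble — using functoriality of $\amalg$ together with the compatibility of the isomorphisms with the $\iota$'s built into the definition of isomorphism of port graphs — into an isomorphism $\Xi\circ\Gamma\cong\Xi'\circ\Gamma'$; it is unique because it is forced on vertices and hence on the rest of the data.

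I expect the one real obstacle to be the acyclicity step, and within it the bookkeeping needed to justify both that $G$ has no edge from the $\Xi$-side to the $\Gamma$-side and that a cycle confined to one side literally \emph{is} a cycle of that factor's internal flow graph; this amounts to carefully chasing the composite that defines $\iota$ for the concatenation and checking that source and target maps restrict as claimed. Everything else is a routine manipulation of coproducts and codiagonal maps and will not require any essentially new idea.
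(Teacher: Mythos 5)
Your proposal is correct. Note that the paper itself states this proposition without proof (it is treated as routine), so there is no official argument to compare against; the closest analogue in the paper is the acyclicity proof for the pasting operation $\Gamma\bullet\{\Lambda^x\}_{x\in X}$, which likewise argues that a hypothetical cycle would descend to a cycle in a constituent flow graph. Your argument is in the same spirit but is organized around the cleaner observation that the flow graph of $\Xi\circ\Gamma$ has no edge from $Y\amalg\underline{k}$ to $X\amalg\underline{m}$ (with $\underline{m}$ pure sources and $\underline{k}$ pure sinks), so any cycle is confined to one side and literally is a cycle of that factor's internal flow graph; your unwinding of the composite defining $\iota$ — namely $\iota(e)=\iota^X(e)$ when $\iota^X(e)\in I^X$, $\iota(e)=\iota^Y(\iota^X(e))$ when $\iota^X(e)\in\underline{n}$, and $\iota(e)=\iota^Y(e)$ for $e\in O^Y$ — is exactly what is needed to justify the restriction of source and target maps, and it is carried out correctly. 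Two small remarks: bijectivity of the composite $\iota$ is asserted in the paper's definition of concatenation but is worth a one-line check (it follows from the same case analysis you use); and the "unique" empty-vertex unit should indeed be read as the one with $\iota=\on{id}_{\underline{n}}$, as you take it, since a nontrivial permutation would fail the unit law on the nose. Your treatment of associativity, unitality, and functoriality under isomorphisms at the level of coproduct coherences is the standard routine check and is adequate.
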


\begin{defn}
{The \emph{category of port graphs}, denoted by $\sf{PG}$, has objects given by $\NN$ and morphisms given by isomorphism classes of port graphs.}
\end{defn}

	\subsection{Double port graphs}
\label{sec:double port graphs}	
	
	\begin{defn}
		Given $U$ and $V$ in $\sf{Ord}$, the \emph{ordinal sum}  $U\oplus V$ of $U$ and $V$ is the linearly ordered finite set whose underlying set is $U\amalg V$, and whose order is given by 
		\[
		a<b \Leftrightarrow \begin{cases}
			a<_U b & a,b\in U \\
			a<_V b & a,b\in V \\
			a\in U, b\in V & \text{else.}
		\end{cases}
		\] 
		This defines a monoidal structure on $\sf{Ord}$, and so induces a monoidal structure on 
{$\sf{Ord}_\NN$,}		
		we will abuse notation by also denoting the induced monoidal structure on {$\sf{Ord}_\NN$} by $\oplus$, though in the later context we have $\underline{n}\oplus\underline{m}=\underline{n+m}$, rather than $\underline{n}\oplus\underline{m}\cong \underline{n+m}$. 
	\end{defn}

	\begin{defn}
		A \emph{2-fold port graph} consists of a pair of port graphs $(\Gamma_v,\Gamma_h)$ with the same vertex set $X$. We will call a 2-fold port graph $(\Gamma_v,\Gamma_h)$ an {$\left(\begin{smallmatrix}
			& k & \\
			m & & n \\
			& \ell & 
		\end{smallmatrix}\right)$} 2-{fold} port graph {if} $\Gamma_h$ is an $(m,n)$-port graph and $\Gamma_v$ is a $(k,\ell)$-port graph.   
	\end{defn}
	
	Given an  
{$
\left(
\begin{smallmatrix}
& k & \\
m & & n \\
& \ell &
\end{smallmatrix}
\right)
$}	2-fold port graph $(\Gamma_v,\Gamma_h)$ with vertex set $X$, we can define an {$(k+m, \ell+n)$}-pre-port graph $\on{Tot}(\Gamma_v,\Gamma_h)$ by setting 
	\[
	\on{in}(x)=\on{in}_v(x)\oplus \on{in}_h(x)
	\]
	and 
	\[
	\on{out}(x)=\on{out}_v(x)\oplus \on{out}_h(x).
	\] 
We furthermore set  
	\[
	\iota(e)= \begin{cases}
		\iota_v(e) & e\in {\underline{k}\amalg O_v}  \\
		\iota_h(e) & e\in {\underline{m} \amalg O_{h}}.
	\end{cases}
	\] 
	
	\begin{defn}
		A \emph{double port graph} is a 2-fold port graph $(\Gamma_v,\Gamma_h)$  such that $\on{Tot}(\Gamma_v,\Gamma_h)$ is a port graph. 
	\end{defn}
	
	We can concatenate double port graphs $(\Gamma_v,\Gamma_h)$ and $(\Xi_v,\Xi_h)$
\begin{itemize}
\item horizontally  by setting 
	\[
	(\Gamma_v,\Gamma_h){\circ}(\Xi_v,\Xi_h)= (\Gamma_v\amalg \Xi_{v},\Gamma_h\circ \Xi_h), 
	\]

\item  vertically by setting 
\[
	(\Gamma_v,\Gamma_h){\bullet}(\Xi_v,\Xi_h)= (\Gamma_v\circ \Xi_{v},\Gamma_h\amalg \Xi_h) 
	\]
\end{itemize}	
		whenever the right-hand side is well defined.



\begin{defn}\label{def:DPG}
The \emph{double category $\mathsf{DPG}$ of port graphs} has one object, 
horizontal and vertical morphism sets $\mathbb{N}$, 
and squares consisting of isomorphism classes of double port graphs.
\end{defn}

\begin{rmk}\label{rem:labeled port as double}
{We regard labeled port graphs as labeled double port graphs for which either the horizontal or the vertical source and target maps send every label to zero. For such a set $\lL$ of port labels, we write $\sf{PG}_{\lL}$ for the category of $\lL$-labeled port graphs.}
\end{rmk}

\begin{eg}\label{ex:double-port}
{A typical double port graph looks as follows}
\begin{center}
\begin{tikzpicture}[x=0.5em,y=0.5em]
\def\xshift{12} 

  \Block{\xshift}{0}{}{Main}

  \HIndraw{Main}{2}{1}{3}
  \HIndraw{Main}{2}{2}{3}
  \HOutdraw{Main}{2}{2}{18}

  \VIndraw{Main}{1}{1}{8}
  \VOutdraw{Main}{2}{1}{10}

  \Block{\xshift+10}{4}{}{Up}
  \HDraw{Main}{2}{1}{Up}{1}{1}{0}{180}

  \Block{\xshift+15}{-8}{}{Down}
  \VDraw{Main}{2}{2}{Down}{2}{1}{-90}{90}
  \VDraw{Up}{1}{1}{Down}{2}{2}{-90}{90}

  \VOutdraw{Down}{1}{1}{3}
  \HOutdraw{Down}{1}{1}{3}

\end{tikzpicture}
\end{center}
{We denote horizontal port graphs using solid wires and vertical ones using dashed wires. Vertical concatenation is performed with respect to the dashed wires, whereas horizontal concatenation is performed with respect to the solid wires. As in Example \ref{ex:concatenation}, we first horizontally concatenate the top two boxes and then vertically concatenate the resulting diagram with the bottom box.}
\end{eg}

	\subsection{Port labels}
\label{sec:port labels}	
	
	We now want to generalize our double category $\sf{DPG}$ to labeled double port graphs.

	\begin{defn}
		A \emph{set of port labels} for a double port graph consists of a set ${\lL}$ together with four functions 
		\[
		\begin{tikzcd}
			s^v,t^v,s^h,t^h: &[-3em] {\lL} \arrow[r] & \NN.  
		\end{tikzcd}
		\]
		An \emph{${\lL}$-labeled double port graph} consists of a double port graph $(X,\on{in}_v,\on{out}_v,\on{in}_h,\on{out}_h,\iota_v,\iota_h)$ together with a function 
		\[
		\begin{tikzcd}
			\on{lab}:&[-3em] X\arrow[r] & {\lL}
		\end{tikzcd}
		\]
		such that the diagram 
		\[
		\begin{tikzcd}[column sep=huge, row sep =large]
			\sf{Ord}\arrow[d,"{|-|}"'] & X\arrow[d,"\on{lab}"]\arrow[l,"\on{in}_a"']\arrow[r,"\on{out}_a"] & \sf{Ord}\arrow[d,"{|-|}"] \\
			\NN & {\lL}\arrow[l,"s^a"]\arrow[r,"t^a"'] & \NN
		\end{tikzcd}
		\]
		commutes for each $a\in \{h,v\}$, where $|-|$ denotes the cardinality.  
	\end{defn}

	\begin{eg}\label{ex:square-labels}
{Let $\sf{D}$ be a $1$-object (strict) double category whose horizontal and vertical morphisms are given by $\NN$. Denote by $\on{Sq}(\sf{D})$ the set of squares of $\sf{D}$. The horizontal and vertical source and target maps of $\sf{D}$ endow $\on{Sq}(\sf{D})$ with the structure of a set of port labels. In particular, $\on{Sq}(\sf{DPG})$ can itself be used to label double port graphs. We will employ such labeling in Section~\ref{sec:pasting} when discussing the pasting operation.} 
	\end{eg}

\begin{defn}
Two labeled port graphs are said to be \emph{isomorphic} if there exists an isomorphism of double port graphs between them that commutes with the labeling functions.
\end{defn}

	Notice that, if $(\Gamma_h,\Gamma_v)$ and $(\Xi_h,\Xi_v)$ are double port graphs with vertex sets $X$ and $Y$ respectively, and labeling functions $\on{lab}^X$ and $\on{lab}^Y$ valued in a set of port labels ${\lL}$, the universal property of the coproduct determines a unique labeling 
	\[
	\begin{tikzcd}
		\on{lab}: &[-3em] X\amalg Y \arrow[r] & {\lL}
	\end{tikzcd}
	\]
	on the disjoint union of $\Gamma$ and $\Xi$ compatible with the inclusions. If $\Gamma$ and $\Xi$ are composable (either horizontally or vertically), this yields a canonical ${\lL}$-labeling of the composite. 
	
\begin{defn}
The \emph{double category $\mathsf{DPG}_{{\lL}}$ of ${\lL}$-labeled port graphs} has one object, 
horizontal and vertical morphism sets $\mathbb{N}$, 
and squares consisting of isomorphism classes of ${\lL}$-labeled double port graphs.
\end{defn}

	\subsection{Pasting double port graphs}
\label{sec:pasting}	
	
	We now define a pasting rule that allows us to glue a double port graph with the correct numbers of horizontal and vertical inputs and outputs into the place of a vertex of another double port graph.  
	
	On a formal level, let $\Gamma=(\Gamma_h,\Gamma_v)$ be an 
{$
\left(
\begin{smallmatrix}
& k & \\
m & & n \\
& \ell &
\end{smallmatrix}
\right)
$}-double port graph with vertex set $X$ (and structure morphisms written with a superscript $X$), and let 
	\[
	\begin{tikzcd}
		\on{lab}: &[-3em] X \arrow[r] & \on{Sq}(\sf{DPG}) 
	\end{tikzcd}
	\]
	be a labeling of $\Gamma$ by squares of $\sf{DPG}$. 
	
	We will denote a chosen representative of $\on{lab}(x)$ by $\Lambda^x=(\Lambda^x_h,\Lambda^x_v)$ with set of vertices $Y^x$. The structure maps of $\Lambda^x$ will be decorated with superscripts $x$, e.g., $\iota_h^x$. Similarly, $\Lambda^x$ will be a 
{$
\left(
\begin{smallmatrix}
& k^x & \\
m^x & & n^x \\
& \ell^x &
\end{smallmatrix}
\right)
$}-double port graph. 
	
	Given these data, we define a new 
{$
\left(
\begin{smallmatrix}
& k & \\
m & & n \\
& \ell &
\end{smallmatrix}
\right)
$}-double port graph $\Gamma \bullet \{\Lambda^x\}_{x\in X}$ as follows:
	\begin{itemize}
		\item The set of vertices of $\Gamma \bullet \{\Lambda^x\}_{x\in X}$ is 
		\[
		\coprod_{x\in X} Y^x. 
		\]
		\item For every $x$, and every $y\in Y^x$ we define 
		\[
		\begin{aligned}
			\on{in}_v(y)&=\on{in}^x_v(y)\\ 	
			\on{in}_h(y)&=\on{in}^x_h(y)\\
			\on{out}_v(y)&=\on{out}^x_v(y)\\
			\on{out}_h(y)&=\on{out}^x_h(y).
		\end{aligned}
		\]
		\item The isomorphism $\iota_h$ is defined as follows. Note first that, for every $x\in X$, there is a unique order-preserving bijection $\alpha_x: \on{in}_h(x)\cong \underline{m}^x$, and similarly $\beta_x:\on{out}_h(x)\cong \underline{n}^x$. Define a relation on the set 
		\[
		\coprod_{x\in X} \left(\underline{m}^x\amalg O_h^x\right)
		\]
		as follows. We say that $j\in \underline{m}^x$ is equivalent to $k\in O_h^z$ if $(\iota_h^X)^{-1}\alpha_x^{-1}(j)=\beta_z^{-1}(k)$. Note that, for $j\underline{m}^x$, there is either a unique $z\in X$ and a unique $k\in in O_h^z$ such that $j\sim k$, or  $(\iota_h^X)^{-1}(\alpha(j))=\ell\in \underline{m}$. By Lemma \ref{lem:quotient_set} below, the set $\left(\underline{m}\setminus (\iota^X_h)^{-1}(\underline{n})\right)\amalg \left(\coprod_{x\in X} O_h^x\right)$ is the quotient. 
		
		An identical construction displays $\left(\underline{n}\setminus\iota_h^X(\underline{m})\right)\amalg\left(\coprod_{x\in X}I_h^x\right)$ as a quotient of 
		\[
		\coprod_{x\in X} \left(I_h^x \amalg \underline{n}^x\right). 
		\]
		We then define $\iota_h$ to be the unique map such that the following diagram commutes. 
		\begin{equation}\label{eq:iota_h_diagram}
			\begin{tikzcd}
				& \coprod_{x\in X} \left(\underline{m}^x\amalg O_h^x\right)\arrow[r,"{\coprod_X\iota_h^x}"]\arrow[d] &[2em] \coprod_{x\in X} \left(I_h^x \amalg \underline{n}^x\right)\arrow[d] & \\
				\underline{m}\amalg O_h \arrow[r,phantom,"\cong"{description}]& \underline{m}\amalg\left(\coprod_{x\in X}O_h^x\right)	\arrow[r,"\iota_h"] & \underline{n} \amalg \left(\coprod_{x\in X} I_h^x\right)\arrow[r,phantom,"\cong"{description}] & \underline{n}\amalg I_h\\
				&\underline{m}\cap (\iota_h^X)^{-1}(\underline{n})\arrow[u]\arrow[r,"\iota_h^X"'] &\underline{n}\cap \iota_h^X(\underline{m})\arrow[u] &
			\end{tikzcd}
		\end{equation}
	\end{itemize}
	The vertical structure is defined identically to the horizontal structure.

	\begin{lem}\label{lem:quotient_set}
		The quotient of 
		\[
		\coprod_{x\in X} \left(\underline{m}^x\amalg O_h^x\right)
		\]
		by the relation defined above is canonically isomorphic to $\left(\underline{m}\setminus (\iota^X_h)^{-1}(\underline{n})\right)\amalg \left(\coprod_{x\in X} O_h^x\right)$. 
	\end{lem}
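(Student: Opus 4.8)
The plan is to verify that the equivalence relation described just before the lemma is the one generated by identifying, for each $x \in X$, an element $j \in \underline{m}^x$ with the element $k \in O_h^z$ (for the unique $z$) when $(\iota_h^X)^{-1}\alpha_x^{-1}(j) = \beta_z^{-1}(k)$, and that no further identifications occur. First I would unwind the bijections: the order-preserving bijections $\alpha_x : \on{in}_h(x) \xrightarrow{\cong} \underline{m}^x$ and $\beta_x : \on{out}_h(x) \xrightarrow{\cong} \underline{n}^x$ translate the ``abstract'' port sets $\on{in}_h(x), \on{out}_h(x)$ of the outer graph $\Gamma_h$ into the standard index sets of the inner graphs $\Lambda^x$. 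Under these, $\bigcoprod_{x} (\underline{m}^x \amalg O_h^x)$ is identified (as a set) with $\bigl(\coprod_x \on{in}_h(x)\bigr) \amalg \bigl(\coprod_x O_h^x\bigr) = I_h^X \amalg \bigl(\coprod_x O_h^x\bigr)$, where $I_h^X = \coprod_{x\in X}\on{in}_h(x)$ is the set of all inner-to-outer inputs of $\Gamma_h$.

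\textbf{Key steps.} The relation identifies $j \in \underline{m}^x$ with $k \in O_h^z$ exactly when the corresponding element of $I_h^X$ is carried by $(\iota_h^X)^{-1}$ to the corresponding element of $O_h^X := \coprod_x \on{out}_h(x)$; since $\iota_h^X : \underline{m}\amalg O_h^X \xrightarrow{\cong} I_h^X \amalg \underline{n}$ is a bijection, each $j$ in the $\underline{m}^x$-part is related to \emph{at most one} $k$, and it is related to some $k$ iff $(\iota_h^X)^{-1}$ of its image lands in $O_h^X$ rather than in $\underline{m}$. Elements of the $O_h^x$-parts are never the source of a further relation (the relation only goes ``from $\underline{m}^x$''), so the relation is already a partial bijection and hence its own generated equivalence relation; the quotient is obtained by simply collapsing each related pair to a single point. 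Counting: the $O_h^x$-part of the domain survives intact as $\coprod_x O_h^x$; among the $\underline{m}^x$-part, exactly those $j$ with $(\iota_h^X)^{-1}(\text{image}) \in O_h^X$ get glued to an $O_h^z$ element (contributing nothing new), while those $j$ with $(\iota_h^X)^{-1}(\text{image}) = \ell \in \underline{m}$ — equivalently, with $\alpha_x^{-1}(j) \in \iota_h^X(\underline m)$ viewed inside $I_h^X$ — remain as singleton classes, and these are in canonical bijection with $\underline{m} \setminus (\iota_h^X)^{-1}(\underline{n})$ via $\iota_h^X$ itself (the elements of $\underline m$ not sent into $\underline n$ are exactly those sent into $I_h^X$). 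Assembling, the quotient is canonically $\bigl(\underline{m}\setminus (\iota_h^X)^{-1}(\underline{n})\bigr) \amalg \bigl(\coprod_{x\in X} O_h^x\bigr)$, and the canonical maps are the evident ones making diagram \eqref{eq:iota_h_diagram} commute.

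\textbf{Main obstacle.} I do not expect a genuine mathematical difficulty; the content is entirely bookkeeping. The one place that requires care — and which I would write out explicitly — is the dichotomy ``for $j \in \underline{m}^x$, either there is a unique $z$ and unique $k \in O_h^z$ with $j \sim k$, or $(\iota_h^X)^{-1}\alpha_x^{-1}(j) \in \underline{m}$,'' which is exactly where bijectivity of $\iota_h^X$ is used, together with the fact that the target of $\iota_h^X$ decomposes as $I_h^X \amalg \underline{n}$ and $I_h^X = \coprod_x \on{in}_h(x)$ with each $\on{in}_h(x)\cong \underline{m}^x$. A secondary point of caution is to confirm that the relation, as literally stated, need not be symmetrized or transitively closed before quotienting — this follows because it is a functional partial matching from the $\underline{m}^x$-parts into the $O_h^x$-parts with disjoint images, so no chains of length $>1$ arise. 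Naturality (``canonically isomorphic'') follows because every choice made — the $\alpha_x$, $\beta_x$, and the decompositions — is forced by order-preservation or by the structure of $\Gamma$, so an isomorphism of the input data induces a unique compatible isomorphism of quotients.
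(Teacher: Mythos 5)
Your argument is correct and is essentially the paper's own proof in a different guise: the paper exhibits the explicit map $\pi$ sending each $k\in O_h^x$ to itself and each $j\in\underline{m}^x$ to its partner $k$ (if $j\sim k$) or to $(\iota_h^X)^{-1}(\alpha_x^{-1}(j))\in\underline{m}$ otherwise, and your observations — that the relation is a partial bijection from the $\underline{m}^x$-parts into the $O_h^x$-parts (so classes have size at most two and no closure is needed), and that the unmatched $j$ correspond exactly to $\underline{m}\setminus(\iota_h^X)^{-1}(\underline{n})$ because those are the elements of $\underline{m}$ sent by $\iota_h^X$ into $\coprod_x\on{in}_h(x)$ — are precisely what makes $\pi$ well defined, a quotient onto its image, and surjective. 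No substantive difference in method, only in presentation.
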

	
	\begin{proof}
		We define a map
		\[
		\begin{tikzcd}
			\pi:&[-3em]	\displaystyle\coprod_{x\in X} \left(\underline{m}^x\amalg O_h^x\right)\arrow[r] & \left(\underline{m}\setminus (\iota^X_h)^{-1}(\underline{n})\right)\amalg \left(\coprod_{x\in X} O_h^x\right)
		\end{tikzcd}
		\]
		as follows. On each copy of $O_h^x$, $\pi$ acts as the identity. For $j\in\underline{m}^x$
		\[
		\pi(j):= \begin{cases}
			k\in O_h^z & k\sim j \\
			(\iota_h^X)^{-1}(\alpha^{-1}_x(j)) & \text{else.}
		\end{cases}
		\]
		This map is a quotient onto its image by construction. Moreover, for every $\ell\in \underline{m}\setminus (\iota^X_h)^{-1}(\underline{n})$, there is some $x\in X$ such that $\iota_h^X(\ell)\in \on{in}_h^X(x)$, and so the map is surjective as well, completing the proof. 
	\end{proof}

	\begin{prop}
		The construction above defines a double port graph $\Gamma\bullet\{\Lambda^x\}_{x\in X}$. 
	\end{prop}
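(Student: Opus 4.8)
The plan is to verify the two defining conditions of a double port graph for $\Gamma\bullet\{\Lambda^x\}_{x\in X}$: first, that the horizontal and vertical data each assemble into genuine port graphs (i.e., that $\iota_h$ and $\iota_v$ are well-defined bijections, and that the corresponding internal flow graphs are acyclic), and second, that $\on{Tot}$ of the pair is again a port graph. The well-definedness of $\iota_h$ as a bijection is essentially bookkeeping: Lemma~\ref{lem:quotient_set} already exhibits both source and target of $\iota_h$ as quotients of $\coprod_{x}(\underline{m}^x\amalg O_h^x)$ and $\coprod_x(I_h^x\amalg\underline{n}^x)$ respectively, and the top map $\coprod_X\iota_h^x$ is a bijection since each $\iota_h^x$ is; so I would check that this bijection descends compatibly with the two quotient maps — i.e., that the relation on the left is carried isomorphically onto the relation on the right — which follows by unwinding the definitions of $\sim$ using that the $\Lambda^x$ are double port graphs and $\Gamma$ is one. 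The vertical case is identical by symmetry, as the construction states.

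The substantive point is acyclicity. Here I would argue that a directed cycle in the internal flow graph of $\on{Tot}(\Gamma\bullet\{\Lambda^x\})$ would, by collapsing each block $Y^x$ to the single vertex $x$, project to a closed directed walk in the internal flow graph of $\on{Tot}(\Gamma)$, which is acyclic since $\Gamma$ is a double port graph. The only way such a projection can fail to be an honest cycle is if the walk stays within a single block $Y^x$ — but then it would be a directed cycle inside $\on{Tot}(\Lambda^x)$, contradicting that $\Lambda^x$ is a double port graph. To make the "projection" precise I would note that every edge of $\on{Tot}(\Gamma\bullet\{\Lambda^x\})$ is either an internal edge of some $\on{Tot}(\Lambda^x)$, or arises from an edge $e$ of $\on{Tot}(\Gamma)$ (via the identification of the glued ports under $\iota_h$, $\iota_v$ and the ordinal-sum structure of $\on{Tot}$); an edge of the first kind has source and target in the same block, an edge of the second kind connects port-boundaries of (possibly different) blocks in a way that matches the orientation of $e$ in $\on{Tot}(\Gamma)$. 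Traversing a cycle and recording which block each vertex lies in thus yields a closed walk in $\on{Tot}(\Gamma)$; either it is non-constant, giving a cycle there (contradiction), or it is constant, giving a cycle in a single $\on{Tot}(\Lambda^x)$ (contradiction).

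One auxiliary check I would fold in: that $\on{Tot}(\Gamma\bullet\{\Lambda^x\}) = \on{Tot}(\Gamma)\bullet\{\on{Tot}(\Lambda^x)\}_{x\in X}$, i.e., that forming totalizations commutes with the pasting operation on the underlying (ordinary) port graphs. This reduces the whole proposition to the corresponding statement for ordinary port graphs — that pasting an acyclic port graph into the vertex of an acyclic port graph, with matching numbers of dangling edges, yields an acyclic port graph — after which the acyclicity argument above is exactly the one-dimensional statement. Establishing this compatibility is a matter of chasing the ordinal-sum definitions of $\on{in}$, $\on{out}$, $\iota$ through diagram~\eqref{eq:iota_h_diagram} and its vertical analogue, and observing that the ordinal sums on the boundary ports of $\Gamma$ match up with those on the boundary ports of the $\Lambda^x$ precisely because $\on{lab}$ is required to be compatible with the port-label structure (the source/target counts agree).

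I expect the main obstacle to be purely organizational rather than conceptual: carefully setting up the identification of edges of $\on{Tot}(\Gamma\bullet\{\Lambda^x\})$ with the disjoint union of the internal edges of the $\on{Tot}(\Lambda^x)$ together with (a relabeling of) the edges of $\on{Tot}(\Gamma)$, and checking that this identification respects source and target maps after the quotients of Lemma~\ref{lem:quotient_set}. Once that dictionary is in place, both well-definedness of $\iota_h,\iota_v$ and acyclicity follow by short arguments, and there is no genuinely hard estimate or construction involved.
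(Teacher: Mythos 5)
Your proposal is correct and follows essentially the same route as the paper: bijectivity of $\iota_h$ (and $\iota_v$) is obtained by descending the bijection $\coprod_x \iota_h^x$ through the quotients of Lemma~\ref{lem:quotient_set}, and acyclicity by projecting a putative cycle onto the blocks $Y^x$, producing either a cycle inside some $\Lambda^x$ or an oriented cycle in $\Gamma$, both contradictions. The only difference is organizational: you run the acyclicity argument at the level of $\on{Tot}$ (via the commutation of $\on{Tot}$ with pasting, which the paper never states but which is a routine definition-chase), whereas the paper's proof only verifies the horizontal port-graph structure explicitly and leaves the vertical and $\on{Tot}$ checks to the reader, so if anything your version is the more complete one.
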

	
	\begin{proof}
		We will show that the horizontal structure forms a port graph, and leave the remaining checks to the reader. Only two facts need to be checked: (1) $\iota_h$ is a bijection, and (2) the internal flow graph is acyclic. 
		
		As to the former, the top horizontal map in diagram (\ref{eq:iota_h_diagram}) is an isomorphism, and both it and its inverse respect the equivalence relations by construction. Thus, it descends to an isomorphism
		\[
		\left(\underline{m}\setminus (\iota^X_h)^{-1}(\underline{n})\right)\amalg \left(\coprod_{x\in X} O_h^x\right)\cong \left(\underline{n}\setminus\iota_h^X(\underline{m})\right)\amalg\left(\coprod_{x\in X}I_h^x\right).
		\]
		Since the bottom map is simply an isomorphism on the complements of these subsets, it follows that $\iota_h$ is an isomorphism. 
		
		To see (2), suppose, to the contrary, there is an oriented cycle in the flow graph $G$ of $\Gamma\bullet \{\Lambda^x\}_{x\in X}$. We will write this cycle as a chain of vertices and edges 
		\[
		\begin{tikzcd}
			y_0 \arrow[r,"e_1"] &y_1 \arrow[r,"e_2"] & \cdots \arrow[r,"e_k"] &y_k\arrow[r,"e_{k+1}"] & y_0. 
		\end{tikzcd}
		\]
		Identifying the flow graphs $H^x$ of the $\Lambda^x$ with subgraphs of $G$, we can note that, were this cycle contained in any one of them, $\Lambda^x$ would not be acyclic, a contradiction. We can thus choose a sequence $x_0,\ldots x_n\in X$ of vertices of $\Gamma$ which partition the sequence $\{y_i\}$ by the subgraphs $H^x$. But then there is an oriented cycle
		\[
		\begin{tikzcd}
			x_0 \arrow[r] & x_1 \arrow[r] & \cdots \arrow[r] &x_n \arrow[r] & x_0
		\end{tikzcd}
		\]
		in the flow graph of $\Gamma$, a contradiction.
	\end{proof}

	\begin{prop}
		The operation $\bullet$ is associative up to isomorphism. That is 
		\[
		\Gamma \bullet\left\lbrace\Lambda^x\bullet \{\Xi^x_y\}_{y\in Y^x} \right\rbrace_{x\in X}\cong \left(\Gamma \bullet \{\Lambda^x\}_{x\in X}\right)\bullet\{\Xi^x_y\}_{x\in X, y\in Y^x}.
		\]
		Moreover, $\bullet$ is right-unital up to isomorphism, with unit given by the unique (up to isomorphism) collection of one-vertex port graphs forming a labeling of $\Gamma$. 
	\end{prop}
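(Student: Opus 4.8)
The plan is to construct the claimed isomorphisms explicitly on the level of vertex sets and to check that they respect all the structure maps, leveraging the already-established fact (Lemma~\ref{lem:quotient_set} and the preceding proposition) that each $\iota_h$ (and $\iota_v$) in a pasting is obtained by a universal property from quotients of disjoint unions indexed by the vertex sets. For associativity, I would first observe that both sides have vertex set naturally in bijection with the iterated disjoint union $\coprod_{x\in X}\coprod_{y\in Y^x}Z^{x}_{y}$, where $Z^x_y$ is the vertex set of $\Xi^x_y$; disjoint union is associative up to canonical isomorphism, so this gives the underlying bijection. The in/out functions on both sides are, by definition of $\bullet$, just the in/out functions inherited from the innermost port graphs $\Xi^x_y$, so they match strictly under this bijection. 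The real content is that the two $\iota_h$'s agree. Here I would argue that each is characterized by the same universal property: in both cases $\iota_h$ is the unique bijection making a diagram of the shape of~(\ref{eq:iota_h_diagram}) commute, where the top arrow is the disjoint union $\coprod_{x,y}\iota_h^{\Xi^x_y}$ of the innermost isomorphisms, and the side quotient maps encode the same gluing relation — namely, an output port of one $\Xi^x_y$ is glued to an input port of another precisely when the corresponding edge in $\Gamma$, or in some $\Lambda^x$, connects them. Since composition of the "outer" gluings of $\Gamma$ with the "middle" gluings of the $\Lambda^x$ produces the same overall relation whether one contracts $\Lambda^x$ first or $\Gamma\bullet\{\Lambda^x\}$ first, the two universal characterizations coincide, and uniqueness forces the isomorphism. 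The vertical structure is handled identically.

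For right-unitality, let $\{\Upsilon^x\}_{x\in X}$ be the collection in which each $\Upsilon^x$ is the one-vertex double port graph with the same horizontal and vertical arities as $x$ and with $\iota_h^x$, $\iota_v^x$ the order-preserving bijections $\underline{m}^x\cong\underline{m}^x$ etc. Then $Y^x$ is a singleton for each $x$, so $\coprod_{x\in X}Y^x\cong X$, and I would check that under this bijection the in/out maps, the labeling, and the isomorphisms $\iota_h,\iota_v$ of $\Gamma\bullet\{\Upsilon^x\}_{x\in X}$ reduce to those of $\Gamma$ — concretely, in diagram~(\ref{eq:iota_h_diagram}) the top map becomes the identity (as each $\iota_h^x$ is an identity-type bijection) and the only surviving relation is the trivial one, so the quotient is $\underline{m}\amalg O_h$ itself and $\iota_h$ collapses to $\iota_h^X$. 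Naturality of this identification in $\Gamma$ (compatibility with the labeling by squares of $\sf{DPG}$) is immediate since the labels are untouched.

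I would also remark that $\bullet$ is \emph{not} left-unital in the strict sense one might hope: pasting the unit double port graph (with one vertex labeled by $\Gamma$ itself) returns $\Gamma$ up to isomorphism as well, so in fact two-sided unitality holds up to isomorphism; but the statement as given only claims right-unitality, and I would prove exactly that.

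The main obstacle I expect is bookkeeping the identification of the two iterated quotients in the associativity argument: one must be careful that the equivalence relation defining $\iota_h$ for $\Gamma\bullet\{\Lambda^x\bullet\{\Xi^x_y\}\}$ — which is generated in a single step from the inputs/outputs of $\Gamma$ and the already-glued $\Lambda^x\bullet\{\Xi^x_y\}$ — coincides, after transporting along the canonical bijection of vertex sets, with the relation defining $\iota_h$ for $(\Gamma\bullet\{\Lambda^x\})\bullet\{\Xi^x_y\}$, which is generated from the inputs/outputs of the already-glued $\Gamma\bullet\{\Lambda^x\}$ and the $\Xi^x_y$. Both are, transitively, "glue an output of $\Xi$ to an input of $\Xi$ whenever the intervening chain of ports in $\Lambda$ and $\Gamma$ connects them," but verifying this requires unwinding the maps $\alpha_x,\beta_x$ and their analogues at two nesting levels. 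Once this combinatorial identity of relations is in hand, the uniqueness clause in the universal property of the quotient (Lemma~\ref{lem:quotient_set}) does the rest, and acyclicity is automatic since both sides have the same underlying (iterated) flow graph.
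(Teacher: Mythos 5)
The paper states this proposition without proof, so there is no argument of the authors' to compare yours against; judged on its own terms, your strategy is the right one and is essentially the only sensible route: identify the vertex sets via associativity of disjoint union, observe that the in/out functions on both sides are those of the innermost graphs $\Xi^x_y$, and then reduce everything to showing that the two gluing bijections $\iota_h$ (and $\iota_v$) are obtained from the same relation on $\coprod_{x,y}\bigl(\underline{m}^{x,y}\amalg O_h^{x,y}\bigr)$, so that the uniqueness clause in the construction around Lemma~\ref{lem:quotient_set} and diagram~(\ref{eq:iota_h_diagram}) forces them to agree. Two local points should be tightened. First, your initial formulation of the associativity relation (``an output port of one $\Xi^x_y$ is glued to an input port of another precisely when the corresponding edge in $\Gamma$, or in some $\Lambda^x$, connects them'') is not quite what either side produces: the gluing generally proceeds along a chain consisting of an edge of some $\Lambda^x$ out to its boundary, resolved through $\alpha_x,\beta_x$, followed by an edge of $\Gamma$ into the boundary of another $\Lambda^{x'}$, and possibly through ``through-wires'' $j\in\underline{m}^{x}$ with $\iota_h^{x}(j)\in\underline{n}^{x}$; your later phrasing (``the intervening chain of ports in $\Lambda$ and $\Gamma$'') is the correct one and is what must be matched on both sides of the associativity identity.

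Second, in the right-unitality check the sentence ``the only surviving relation is the trivial one, so the quotient is $\underline{m}\amalg O_h$ itself'' is wrong as stated. When each $\Lambda^x$ is the one-vertex graph with the arities of $x$, the relation of Section~\ref{sec:pasting} does not disappear: it identifies an input port $j\in\underline{m}^x$ of a unit box with an output port of another unit box exactly when $\iota_h^X$ wires the corresponding out-port of $\Gamma$ into the corresponding in-port, i.e.\ the relation is precisely the internal edge structure of $\Gamma$, and the quotient is $\bigl(\underline{m}\setminus(\iota_h^X)^{-1}(\underline{n})\bigr)\amalg O_h$ as in Lemma~\ref{lem:quotient_set}, not $\underline{m}\amalg O_h$. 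The conclusion you want---that under the canonical identifications $\coprod_x\underline{m}^x\cong I_h$, $\coprod_x O_h^x\cong O_h$ the pasted $\iota_h$ coincides with $\iota_h^X$ (the through-wire part being handled by the bottom row of diagram~(\ref{eq:iota_h_diagram}))---is still correct, but it follows from tracing an out-port through the identity-type $\iota_h^x$ and the non-trivial gluing relation, not from the relation being trivial. With these corrections, and with the combinatorial identification of the two iterated relations actually carried out (which you rightly flag as the real content), the argument goes through; acyclicity is indeed not an issue, since both sides are already double port graphs by the preceding proposition and the claimed isomorphism is one of pre-port-graph data.
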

	

\begin{eg}\label{ex:pasting}
{The pasting operation is very natural when represented as boxes.} {Consider the labeled double port graph}

 \begin{center}
\begin{tikzpicture}[x=0.5em,y=0.5em]
\def\xshift{12}

  \Block{\xshift}{0}{$\Lambda^a$}{Main} 

  \HIndraw{Main}{2}{1}{3}
  \HIndraw{Main}{2}{2}{3}

  \HOutdraw{Main}{1}{1}{7}

  \VIndraw{Main}{1}{1}{3}
  \VOutdraw{Main}{3}{1}{11.1}

  \Block{\xshift+4}{-8}{$\Lambda^b$}{Down}

  \VDraw{Main}{3}{2}{Down}{2}{1}{-90}{120}
  \VDraw{Main}{3}{3}{Down}{2}{2}{-90}{70}

  \VOutdraw{Down}{1}{1}{3}
  \HOutdraw{Down}{1}{1}{3}

\end{tikzpicture}
\end{center}
{where the labels are given by the following double port graphs}
\begin{center}
\begin{tikzpicture}[x=0.5em,y=0.5em]

  \def\xA{12}

  \Block{\xA}{0}{}{MainA}
  \Block{\xA+10}{4}{}{UpA}

  \HIndraw{MainA}{2}{1}{3}
  \HIndraw{MainA}{2}{2}{3}

  \HOutdraw{MainA}{2}{2}{13}

  \VIndraw{MainA}{1}{1}{6}
  \VOutdraw{MainA}{2}{1}{3}
  \VOutdraw{MainA}{2}{2}{3}

  \HDraw{MainA}{2}{1}{UpA}{1}{1}{0}{180}

  \VOutdraw{UpA}{1}{1}{7}

  \node at (\xA+5,-8) {$\Lambda_a$};

  \def\xB{38}

  \Block{\xB}{0}{}{DownB}

  \VIndraw{DownB}{2}{1}{3}
  \VIndraw{DownB}{2}{2}{3}

  \VOutdraw{DownB}{1}{1}{3}

  \HOutdraw{DownB}{1}{1}{3}

  \node at (\xB,-8) {$\Lambda_b$};

\end{tikzpicture}
\end{center}
{The pasting $\Gamma \bullet \set{ \Lambda^{a},\Lambda^b}$ gives the double port graph in Example \ref{ex:double-port}. More explicitly, each label is drawn inside the corresponding box and the wires are connected using the labeling data. Then, the pasting is obtained by removing the outer boxes:}
\begin{center}
\begin{tikzpicture}[x=0.5em,y=0.5em]
\def\xshift{12}

  \Block{\xshift}{0}{}{Main}
  \Block{\xshift+10}{4}{}{Up}

  \HIndraw{Main}{2}{1}{3}
  \HIndraw{Main}{2}{2}{3}

  \HOutdraw{Main}{2}{2}{13.2}

  \VIndraw{Main}{1}{1}{8}
  \VOutdraw{Main}{2}{1}{12}

  \HDraw{Main}{2}{1}{Up}{1}{1}{0}{180}

  \draw[very thick]
    ($(Main.south west)+(-1,-1)$)
    rectangle
    ($(Up.north east)+(1,1)$);

  \Block{\xshift+10}{-12}{}{Down}

  \VDraw{Main}{2}{2}{Down}{2}{1}{-90}{90}
  \VDraw{Up}{1}{1}{Down}{2}{2}{-90}{90}

  \VOutdraw{Down}{1}{1}{3}
  \HOutdraw{Down}{1}{1}{3}

  \draw[very thick]
    ($(Down.south west)+(-1,-1)$)
    rectangle
    ($(Down.north east)+(1,1)$);

\end{tikzpicture}
\end{center}
\end{eg}

	\subsection{Functoriality in port labels} 
\label{sec:functoriality port labels}	
	
	We now turn to the various ways in which transformations of port labels can provide functors of double categories.

	\begin{defn}
		Let $\lL,\mM$ be sets of port labels. A \emph{morphism of port labels} $\lL\to \mM$ is a map of sets 
		\[
		\begin{tikzcd}
			f: &[-3em] \lL\arrow[r] & \mM 
		\end{tikzcd}
		\]
		which commutes with the maps $s^v,t^v,s^h,$ and $t^h$. We denote the category of sets of port labels with morphisms of port labels by $\sf{PL}$. 
	\end{defn} 
	
	Given a morphism $f:\lL\to \mM$ of port labels, we construct a double functor 
	\[
	\begin{tikzcd}
		\overline{f}:&[-3em] \sf{DPG}_\lL\arrow[r] & \sf{DPG}_\mM 
	\end{tikzcd}
	\]
	as follows. On objects, horizontal morphisms, and vertical morphisms, $\overline{f}$ is the identity. On squares, given a double port graph $(X,\on{in}_v,\on{out}_v,\on{in}_h,\on{out}_h,\iota_v,\iota_h)$ with $\lL$-labels $\on{lab}:X\to \lL$, we send this to the same double port graph, but now with labels $f\circ \on{lab}:X\to \mM$. It is immediate that this is compatible with identities and composition, and thus defines a double functor. Similarly immediate is the following proposition. {Let $\sf{DCat}$ denote the category of double categories and double functors between them.}
	
	\begin{prop}
		The assignments $\lL\mapsto \sf{DPG}_\lL$ and $(f:\lL\to \mM)\mapsto \overline{f}$ define a 
		functor
		\[
		\begin{tikzcd}
			\sf{DPG}: &[-3em] \sf{PL} \arrow[r] & \sf{DCat}
		\end{tikzcd}
		\]
		to the category of double categories. 		
\end{prop}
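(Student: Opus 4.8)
The plan is to verify the two defining properties of a functor — preservation of identities and of composition — after first confirming the two things that were asserted but not checked above: that $\overline{f}$ is well-defined on isomorphism classes of labeled double port graphs, and that it is indeed a double functor.

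First I would check well-definedness on squares. The squares of $\sf{DPG}_\lL$ are \emph{isomorphism classes} of $\lL$-labeled double port graphs, and an isomorphism of $\lL$-labeled double port graphs is by definition an isomorphism of the underlying double port graphs that commutes with the labeling maps. Post-composing both labelings with $f$ preserves this commutativity, so $\overline{f}$ carries isomorphic $\lL$-labelings to isomorphic $\mM$-labelings and hence descends to a well-defined assignment on isomorphism classes. Next I would confirm that $\overline{f}$ is a double functor: it is the identity on the single object and on the horizontal and vertical morphism sets $\NN$, so the only content is compatibility with horizontal and vertical concatenation of squares and with identity squares. The underlying double port graph of a composite (in either direction) is the concatenation of the underlying double port graphs, which $\overline{f}$ leaves untouched, and the canonical labeling of a composite is the one induced on the coproduct $X\amalg Y$ of vertex sets by the universal property applied to $\on{lab}^X$ and $\on{lab}^Y$. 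Since $f$ commutes with the coproduct inclusions, $f$ composed with the induced labeling equals the labeling induced by $f\circ\on{lab}^X$ and $f\circ\on{lab}^Y$; this is exactly the statement that $\overline{f}$ preserves concatenation. Identity squares have empty vertex set and hence a unique labeling, which is trivially preserved.

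Finally, functoriality of the assignment $\lL\mapsto\sf{DPG}_\lL$ is immediate at the level of labeling functions: $\overline{\on{id}_\lL}$ relabels each vertex by $\on{id}_\lL\circ\on{lab}=\on{lab}$, so it is the identity double functor on $\sf{DPG}_\lL$; and for composable morphisms of port labels $f:\lL\to\mM$ and $g:\mM\to\mathcal{N}$ one has $(g\circ f)\circ\on{lab}=g\circ(f\circ\on{lab})$ by associativity of composition of functions, whence $\overline{g\circ f}=\overline{g}\circ\overline{f}$.

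There is no genuine obstacle in this proof; it is pure bookkeeping. The one point I would take care to state explicitly — and the only place where a naive reader might stumble — is that horizontal and vertical composition of squares in $\sf{DPG}_\lL$ is only defined up to isomorphism, so ``$\overline{f}$ preserves composition'' must be interpreted at the level of isomorphism classes. This is precisely what the coproduct/universal-property argument above delivers, since the comparison isomorphism between the two composites is the identity on the underlying double port graph and intertwines the relabeled labelings.
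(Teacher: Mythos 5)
Your proof is correct and amounts to spelling out exactly what the paper treats as immediate: the paper offers no argument beyond asserting that compatibility with identities and composition is clear, and your verification (well-definedness on isomorphism classes, preservation of concatenation via the coproduct universal property, and $\overline{\on{id}}=\on{id}$, $\overline{g\circ f}=\overline{g}\circ\overline{f}$ from associativity of function composition) is the intended bookkeeping. No gaps.
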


For any set ${\lL}$ of port labels, the set $\on{Sq}(\mathsf{DPG}_{{\lL}})$ 
of squares of $\mathsf{DPG}_{{\lL}}$ 
is itself a set of port labels, equipped with the horizontal and vertical source and target maps of $\mathsf{DPG}_{{\lL}}$. 
{Then,} the canonical map 
\[
\begin{tikzcd}
{\eta} : &[-3em] \lL \arrow[r] & \on{Sq}(\mathsf{DPG}_\lL)
\end{tikzcd}
\]
sending a label $L$ to the double port graph {$\Gamma_L$} with a single $L$-labeled vertex 
defines a morphism of port labels. Consequently, there is a canonical double functor
\[
\begin{tikzcd}
{\overline{\eta}} : &[-3em] \mathsf{DPG}_{{\lL}} \arrow[r] & \mathsf{DPG}_{\on{Sq}(\mathsf{DPG}_{{\lL}})}.
\end{tikzcd}
\]
{Next, we describe a construction that yields a double functor in the opposite direction.}

Given a square in $\sf{DPG}_{\on{Sq}(\sf{DPG}_\lL)}$, that is, a double port graph $\Gamma$ whose labels are represented by the $\lL$-labeled double port graphs $\{\Lambda^x\}_{x\in X}$, we send this square to the double port graph 
$\Gamma \bullet \{\Lambda^x\}_{x\in X}$,
whose $\lL$-labels are given by the labels of the $\Lambda^x$. This construction defines a {double} functor 
\[
\begin{tikzcd}
\on{paste}_{\lL}:&[-3em]\sf{DPG}_{\on{Sq}(\sf{DPG}_\lL)} \arrow[r] & \sf{DPG}_{\lL}.
\end{tikzcd}
\] 
As a consequence, given a set $\mM$ of port labels and a morphism of port labels 
\[
\begin{tikzcd}
f: &[-3em] \mM \arrow[r] & \on{Sq}(\sf{DPG}_\lL),
\end{tikzcd}
\]
we obtain a composite double functor 
\begin{equation}\label{eq:f lower star}
\begin{tikzcd}
{\kappa_{\mM,\lL}:}&[-3em]	\sf{DPG}_\mM \arrow[r,"\overline{f}"] & \sf{DPG}_{\on{Sq}(\sf{DPG}_\lL)} \arrow[r,"\on{paste}_\lL"] & \sf{DPG}_\lL.
\end{tikzcd}	
\end{equation}

\section{{Adaptive instruments}}
\label{sec:double cat of inst}

In this section, we introduce the double category of adaptive instruments.

\subsection{Instruments}
\label{sec:instruments}

For background on quantum theory, we refer to \cite{watrous2018theory,heunen2019categories}. 
A quantum system is described by a Hilbert space, which is typically taken to be finite-dimensional in quantum computing. We will have this restriction throughout the paper.
Hilbert spaces and linear maps between them can be assembled into a category, denoted by $\sf{Hilb}$ \cite{heunen2019categories}. The tensor product $\otimes$ operation of Hilbert spaces endows this category with a symmetric monoidal structure.

For a Hilbert space $V$, we will write $L(V)$ to denote the space of linear maps (operators) on $V$. For a linear operator $A$, we will write $A^\dagger$ for its adjoint. The following operators are important in quantum theory: $A$ is called
\begin{itemize}
\item \emph{Hermitian} if $A=A^\dagger$,
\item \emph{positive semi-definite} if it can be written as $A=B^\dagger B$ for some operator $B$,
\item \emph{projection} if {$A$ is Hermitian and} $A^2=A$,
\item \emph{unitary} if its inverse is $A^\dagger$. 
\end{itemize}
Positive-semidefinite operators are important in both representing quantum states and quantum measurements. A positive-semidefinite operators whose trace is $1$ is called a \emph{density operator}, also called \emph{a quantum state}. A \emph{quantum measurement} with {finite} output set ${Y}$ is given by a function
\[
\begin{tikzcd}
		\Pi: &[-3em] {Y} \arrow[r] & \on{Pos}(V)
\end{tikzcd}
\]
such that $\sum_{a\in {Y}}\Pi(a)=\one_V$. The quantum measurement is said to be \emph{projective} if the function lands in the set of projectors. A quantum state $\rho$ and a quantum measurement {$\Pi$} gives a probability distribution $p$ defined by the \emph{Born rule}:
\[
p(a) = \on{Tr}(\rho \Pi(a)).
\]

A linear map $\phi:L(V)\to L(W)$ is called \emph{positive} if it maps positive semi-definite operators to positive semi-definite operators. The map $\phi$ is called \emph{completely positive} if {$\phi\otimes \one_{L(U)}$} is positive for any Hilbert space {$U$}. A completely positive linear map is called a \emph{channel} if it is trace-preserving, {i.e., $\on{Tr}(\phi(A))=\on{Tr}(A)$}. We can assemble channels into a category. We will write $\on{CP}(V,W)$ and $\on{C}(V,W)$ for completely positive linear maps and channels, respectively.

\begin{defn}\label{def:cat-channels}
The \emph{category of channels}, denoted by $\mathsf{Chan}$, is defined as follows: 
its objects are Hilbert spaces, and its morphisms are channels between them.
\end{defn}

Quantum measurements and more general quantum operations can be formulated using instruments.
An \emph{instrument} with {output} set ${Y}$ is a function  
\[
\begin{tikzcd}
		\Phi: &[-3em] {Y} \arrow[r] & \on{CP}(V,W)
	\end{tikzcd}
\]
such that $\sum_{a\in {Y}} \Phi^a$ is a channel. Here we adopt the notation that $\Phi^a=\Phi(a)$. Two important examples of instruments are given by (1) unitary transformations, and (2) quantum measurements. In the former case, the output set is a singleton and the channel is given by conjugating with the unitary. In the latter, the output set of the instrument is the same as the output set of the quantum measurement {and t}he instrument is given by $\Phi^a(-)=\Pi(a)(-)\Pi(a)$. The Born rule generalizes to this case to give a probability distribution associated to a quantum state $\rho$ and an instrument {$\Phi$:}
\[
p(a) = \on{Tr}(\Phi^a(\rho)).
\]

{
In categorical quantum mechanics, dagger compact categories are used to axiomatize the structural properties of the category $\sf{Hilb}$ of finite-dimensional Hilbert spaces; see, for example, \cite{heunen2019categories}. The CPM construction \cite{selinger2007dagger} provides a general procedure for passing from a dagger compact category of pure processes to a category of completely positive maps. In the case of $\sf{Hilb}$, restricting to trace-preserving maps yields the category $\sf{Chan}$ of quantum channels.
}

\subsection{The double category}	
	\label{sec:double cat}

We now introduce the main objects of this section, namely \emph{adaptive instruments}, and construct the associated double category. As we will see in the following sections, adaptive instruments provide models for quantum computation~\cite{okay2024classical,okay2025polyhedral}.

\begin{defn}\label{def:double cat inst}
The \emph{double category of instruments}, denoted by $\mathsf{Inst}$, is a $1$-object double category whose
\emph{horizontal morphisms} are sets $\underline n$, $n\geq 0$, and whose
\emph{vertical morphisms} are Hilbert spaces $\CC[\underline{n}]$, $n\geq 0$. 
A \emph{square} in $\mathsf{Inst}$, bounded by the sets $X=\underline{k}$, $Y=\underline{\ell}$ and the Hilbert spaces $V=\CC[\underline{m}]$, $W=\CC[\underline{n}]$, is of the form
\[
\begin{tikzcd}
	\ast \arrow[r, "{X}", ""'{name=U}] \arrow[d, "V"'] 
	& \ast \arrow[d, "W"] \\
	\ast \arrow[r, "{Y}"', ""{name=T}] 
	& \ast \arrow[r, phantom, from=U, to=T, "\Phi"{description}]
\end{tikzcd}
\]
and consists of a map
\[
\begin{tikzcd}
	\Phi : {X} \times {Y} \arrow[r] & \operatorname{CP}(V, W).
\end{tikzcd}
\] 
For each pair $(a,b) \in {X} \times {Y}$, we write $\Phi_a^b$ for the corresponding element of $\operatorname{CP}(V, W)$. {We require that, for each $a\in X$, the function $\Phi_a:Y\to \on{CP}(V,W)$ is an instrument, meaning that it satisfies $\sum_{b\in Y} \Phi_a^b \in \on{C}(V,W)$.} 
Here ${X}$ and ${Y}$ are interpreted as the input and output sets, respectively.    
The {squares} of this double category, that is, the functions $\Phi$ as defined above, are called \emph{adaptive instruments}.
\end{defn}

{An equivalent way to encode an adaptive instrument is as the linear map}
\begin{equation}\label{eq:CPTP version of adap int}
	\begin{tikzcd}[row sep=.5ex]
		\widehat{\Phi}:&[-3em] L(\CC[X]\otimes V) \arrow[r] & L(\CC[Y]\otimes W)\\
		& \ket{a}\bra{a'}\otimes A \arrow[r,mapsto] & \delta_{a,a'} \sum_{b\in Y} \ket{b}\bra{b}\otimes \Phi_a^b(A). 
	\end{tikzcd}
\end{equation}
{This map is a channel. Conversely, every channel of this form determines an adaptive instrument by reading off the maps $\Phi_a^b$.}

{
\begin{rmk}
In this definition, we restrict to $\sf{Fin}_\NN$ and to the skeletal
subcategory of $\mathsf{Hilb}$, namely, the full subcategory
$\sf{Hilb}_\NN$ whose objects are $\mathbb{C}[\underline{n}]$ for
$n \ge 0$. The former is a strict monoidal category under the Cartesian
product
$\underline{n} \times \underline{m}=\underline{nm}$, where the identification is given by the lexicographic bijection, while the latter is a
strict monoidal category under the tensor product
$\CC[\underline{n}] \otimes \CC[\underline{m}] = \CC[\underline{nm}]$.
\end{rmk}
}

{For the rest of this section will show {Definition \ref{def:double cat inst}} actually yields a well-defined double category.}	  
		{Let us begin with defining the compositions.}
	The horizontal composition of the diagrams 
	\[
	\begin{tikzcd}
		\ast \arrow[r,"{X}_1",""'{name=U}]\arrow[d,"V"'] & \ast\arrow[d,"W"]\\
		\ast \arrow[r,"{Y}_1"',""{name=T}] & \ast  \arrow[r,phantom,from=U,to=T,"\Phi"{description}]
	\end{tikzcd}
	\quad \text{and} \quad \begin{tikzcd}
		\ast \arrow[r,"{X}_2",""'{name=U}]\arrow[d,"W"'] & \ast\arrow[d,"U"]\\
		\ast \arrow[r,"{Y}_2"',""{name=T}] & \ast  \arrow[r,phantom,from=U,to=T,"\Psi"{description}]
	\end{tikzcd}
	\]
	is the square $\Psi\circ \Phi$ defined by 
	\[
	(\Psi\circ \Phi)_{(a_1,a_2)}^{(b_1,b_2)}=\Psi_{a_2}^{b_2}\circ \Phi_{a_1}^{b_1} .
	\]
	Since the composition of completely positive operators is completely positive, this results in a completely positive operator. Moreover, 
	\[
	\begin{aligned}
		\sum_{(b_1,b_2)} (\Psi\circ \Phi)_{(a_1,a_2)}^{(b_1,b_2)} & =\sum_{(b_1,b_2)}\Psi_{a_2}^{b_2}\circ \Phi_{a_1}^{b_1} \\
		& =\sum_{b_1}\sum_{b_2}\Psi_{a_2}^{b_2}\circ \Phi_{a_1}^{b_1}\\
		&= \left(\sum_{b_2}\Psi_{a_2}^{b_2}\right) \circ \left(\sum_{b_1} \Phi_{a_1}^{b_1}\right)
	\end{aligned}
	\]
	is a composite of channels, and is thus itself a channel.

	The horizontal identity morphism is defined by the square
	\[
	\begin{tikzcd}[row sep=.5ex]
		\on{Id}^h_{V} :&[-3em] \ast \times \ast \arrow[r] & {\on{CP}(V,V)}\\
		& (\ast,\ast) \arrow[r] & {\one_V} .
	\end{tikzcd}
	\]		
We have $(\Phi \circ \on{Id}^h_{V})=(\on{Id}^h_{V} \circ \Phi)=\Phi$.
	
	The {vertical} composition of 
	\[
	\begin{tikzcd}
		\ast \arrow[r,"{X}",""'{name=U}]\arrow[d,"V_1"'] & \ast\arrow[d,"W_1"]\\
		\ast \arrow[r,"{Y}"',""{name=T}] & \ast  \arrow[r,phantom,from=U,to=T,"\Phi"{description}]
	\end{tikzcd}
	\]
	and 
	\[
	\begin{tikzcd}
		\ast \arrow[r,"{Y}",""'{name=U}]\arrow[d,"V_2"'] & \ast\arrow[d,"W_2"]\\
		\ast \arrow[r,"{Z}"',""{name=T}] & \ast  \arrow[r,phantom,from=U,to=T,"\Psi"{description}]
	\end{tikzcd}
	\]
	is given by the map  
	\[
	\begin{tikzcd}[row sep=.5ex]
		(\Psi\bullet \Phi):&[-3em] {X}\times {{Z}} \arrow[r] & \on{CP}(V_1\otimes V_2,W_1\otimes W_2)\\
		& (a,c) \arrow[r,mapsto] & \sum_b \Phi_a^b\otimes \Psi_b^c. \\
	\end{tikzcd}
	\]
{Here, we use the identification $L(V_1)\otimes L(V_2)\cong L(V_1\otimes V_2)$ obtained}	
via the Kroenecker product. As above, this is clearly associative, since the Kroenecker product is, thus, we need to show that it is unital and yields completely positive operators which satisfy the desired channel condition.
{Complete positivity follows from the facts that the tensor product of two completely positive maps is still completely positive, and sum of completely positive maps remains to be so. Trace preservation of $\sum_{c} (\Psi\bullet \Phi)_a^c$ follows from the properties of trace.}
	


	As to unitality: The vertical identity morphism {is defined by} the square  
	\[
	\begin{tikzcd}[row sep=.5ex]
		\on{Id}^v_{{X}} :&[-3em] {X} \times {X} \arrow[r] & {\on{CP}(\CC,\CC)}\\
		& (a,b) \arrow[r] & \delta_{a,b}{\one_{\CC}} .
	\end{tikzcd}
	\]
	We then note that 
	\[
	\left(\Phi \bullet  \on{Id}_{X}^v\right)_a^c= \sum_{b\in {X}} \delta_{a,b} {\one_{\CC}} \otimes \Phi_b^c =\Phi_a^c 
	\]
	and 
	\[
	\left(\on{Id}_{Y}^v \bullet \Phi\right)_a^c =\sum_{b\in {Y}} \Phi_a^b \otimes (\delta_{b,c}{\one_{\CC}})= \Phi_a^c  
	\]
	showing that this is unital. 
	
 Furthermore,	
	it is immediate that 
	\[
	\on{Id}_\ast^v =\on{Id}_\CC^h.
	\]

	The final check we need to perform is the interchange law. 
	 
\begin{lem}\label{lem:interchange law inst}
Given the following diagram
\[
\begin{tikzcd}[row sep=4em,column sep=4em]
	\ast
		\arrow[r,"{X}_1"]
		\arrow[d,"V_1"']
		\arrow[dr,phantom,"\Phi_{1,1}" description]
	&
	\ast
		\arrow[r,"{{X}_2}"]
		\arrow[d,"W_1"]
		\arrow[dr,phantom,"\Phi_{1,2}" description]
	&
	\ast\arrow[d,"U_1"]
	\\
	\ast
		\arrow[r,"{Y}_1"]
		\arrow[d,"V_2"']
		\arrow[dr,phantom,"\Phi_{2,1}" description]
	&
	\ast
		\arrow[r,"{{Y}_2}"]
		\arrow[d,"W_2"]
		\arrow[dr,phantom,"\Phi_{2,2}" description]
	&
	\ast\arrow[d,"U_2"]
	\\
	\ast\arrow[r,"{{Z}_1}"]
	&
	\ast\arrow[r,"{Z}_2"]
	&
	\ast
\end{tikzcd}
\] 
 {the interchange law holds:}
	\[
	\left((\Phi_{2,2}\circ \Phi_{2,1})\bullet(\Phi_{1,2}\circ \Phi_{1,1})\right)= \left((\Phi_{2,2}\bullet \Phi_{1,2})\circ (\Phi_{2,1}\bullet \Phi_{1,1})\right).
	\] 
\end{lem}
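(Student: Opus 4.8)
The plan is to unwind both sides of the claimed identity into sums of operators built from the $\Phi_{i,j}$, using only the definitions of horizontal composition $\circ$ and vertical composition $\bullet$ of squares given just above, together with the elementary algebra of the Kronecker product. Since both compositions are defined pointwise on outcome tuples, the whole statement reduces to an identity of completely positive maps indexed by a fixed pair of inputs and a fixed pair of outputs, so no positivity or channel condition is needed here — this is a purely algebraic interchange.

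First I would fix the boundary data: write the outer input pair as $(a_1,a_2)\in {X}_1\times {X}_2$ and the outer output pair as $(c_1,c_2)\in {Z}_1\times {Z}_2$, and introduce summation variables $b_1\in {Y}_1$ for the left column and $b_2\in {Y}_2$ for the right column. Expanding the left-hand side: the inner horizontal composites are $(\Phi_{2,2}\circ\Phi_{2,1})_{(a_1,a_2)}^{(b_1,b_2)} = (\Phi_{2,2})_{a_2}^{b_2}\circ(\Phi_{2,1})_{a_1}^{b_1}$ and similarly for the top row, and then the vertical composition $\bullet$ inserts the sum over the shared middle outcomes $(b_1,b_2)$ and tensors, giving
\[
\left((\Phi_{2,2}\circ\Phi_{2,1})\bullet(\Phi_{1,2}\circ\Phi_{1,1})\right)_{(a_1,a_2)}^{(c_1,c_2)} = \sum_{b_1,b_2}\left((\Phi_{1,2})_{b_2}^{c_2}\circ(\Phi_{1,1})_{b_1}^{c_1}\right)\otimes\left((\Phi_{2,2})_{a_2}^{b_2}\circ(\Phi_{2,1})_{a_1}^{b_1}\right).
\]
Expanding the right-hand side: the inner vertical composites are $(\Phi_{2,1}\bullet\Phi_{1,1})_{a_1}^{c_1} = \sum_{b_1}(\Phi_{1,1})_{b_1}^{c_1}\otimes(\Phi_{2,1})_{a_1}^{b_1}$ and $(\Phi_{2,2}\bullet\Phi_{1,2})_{a_2}^{c_2} = \sum_{b_2}(\Phi_{1,2})_{b_2}^{c_2}\otimes(\Phi_{2,2})_{a_2}^{b_2}$, and then the outer horizontal composition $\circ$ composes these two as maps, yielding $\sum_{b_1,b_2}\left((\Phi_{1,2})_{b_2}^{c_2}\otimes(\Phi_{2,2})_{a_2}^{b_2}\right)\circ\left((\Phi_{1,1})_{b_1}^{c_1}\otimes(\Phi_{2,1})_{a_1}^{b_1}\right)$.

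The two expressions now agree term by term precisely because of the bifunctoriality of the tensor (Kronecker) product: $(f'\otimes g')\circ(f\otimes g) = (f'\circ f)\otimes(g'\circ g)$, applied with $f=(\Phi_{1,1})_{b_1}^{c_1}$, $f'=(\Phi_{1,2})_{b_2}^{c_2}$, $g=(\Phi_{2,1})_{a_1}^{b_1}$, $g'=(\Phi_{2,2})_{a_2}^{b_2}$, under the identification $L(V_i)\otimes L(W_i)\cong L(V_i\otimes W_i)$ used throughout. I would also note that the two double sums range over the same index set ${Y}_1\times{Y}_2$, so reindexing is trivial, and that everything has finite support so the sums are legitimate. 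The only point requiring a word of care — and the closest thing to an obstacle — is checking that the coherence isomorphisms identifying the domains and codomains of the two sides (the reassociations of the four tensor factors $V_1\otimes W_1\otimes V_2\otimes W_2$ coming from "vertical-then-horizontal" versus "horizontal-then-vertical") are the same; but since the vertical composition on Hilbert spaces is strict (Remark on $\mathsf{Hilb}_\NN$), these coherence maps are identities, and no genuine coherence bookkeeping is needed. Hence the two sides are literally equal.
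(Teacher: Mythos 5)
Your proof is correct and follows essentially the same route as the paper: expand both sides pointwise in the outcome indices, pull the sums apart by bilinearity, and match terms via the interchange $(f'\otimes g')\circ(f\otimes g)=(f'\circ f)\otimes(g'\circ g)$, with strictness of the Hilbert-space direction disposing of any coherence bookkeeping. One consistent labelling slip: with the diagram as given, the row-1 squares have inputs in the $X_i$ and outputs in the $Y_i$, while the row-2 squares have inputs in the $Y_i$ and outputs in the $Z_i$, so your terms should read $(\Phi_{1,j})_{a_j}^{b_j}$ and $(\Phi_{2,j})_{b_j}^{c_j}$ rather than $(\Phi_{1,j})_{b_j}^{c_j}$ and $(\Phi_{2,j})_{a_j}^{b_j}$; since you swap these roles uniformly on both sides, the argument goes through unchanged once the indices are corrected.
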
	
	\begin{proof}
	 We compute 
	\[
	\begin{aligned}
		\left((\Phi_{2,2}\circ \Phi_{2,1})\bullet(\Phi_{1,2}\circ \Phi_{1,1})\right)_{(a_1,a_2)}^{(c_1,c_2)}& =\sum_{(b_1,b_2)}(\Phi_{1,2}\circ \Phi_{1,1})_{(a_1,a_2)}^{(b_1,b_2)}\otimes (\Phi_{2,2}\circ \Phi_{2,1})_{(b_1,b_2)}^{(c_1,c_2)}\\
		&= \sum_{(b_1,b_2)}\left((\Phi_{1,2})_{a_2}^{b_2}\circ (\Phi_{1,1})_{a_1}^{b_1}\right)\otimes \left((\Phi_{2,2})_{b_2}^{c_2}\circ (\Phi_{2,1})_{b_1}^{c_1}\right)\\
		&= \sum_{b_1}\sum_{b_2} \left((\Phi_{1,2})_{a_2}^{b_2}\otimes (\Phi_{2,2})_{b_2}^{c_2} \right)\circ \left((\Phi_{1,1})_{a_1}^{b_1}\otimes (\Phi_{2,1})_{b_1}^{c_1}\right)\\
		&= \left(\sum_{b_2}(\Phi_{1,2})_{a_2}^{b_2}\otimes (\Phi_{2,2})_{b_2}^{c_2} \right)\circ \left(\sum_{b_1}(\Phi_{1,1})_{a_1}^{b_1}\otimes (\Phi_{2,1})_{b_1}^{c_1}\right)\\
		&= \left(
		\Phi_{2,2}\bullet\Phi_{1,2}
		\right)_{a_2}^{c_2}\circ \left(
		\Phi_{2,1}\bullet \Phi_{1,1}
		\right)_{a_1}^{c_1}\\
		&= \left((\Phi_{2,2}\bullet \Phi_{1,2})\circ (\Phi_{2,1}\bullet \Phi_{1,1})\right)_{(a_1,a_2)}^{(c_1,c_2)}
	\end{aligned} 
	\]
	Here, the only non-trivial manipulations we have used are (1) the interchange law for tensor products and composition, and (2), the bilinearity of composition with respect to sums. {Thus, the interchange law holds.} 
\end{proof}

So we see that Definition \ref{def:double cat inst} gives a well-defined 1-object  double category.

\subsection{{The horizontal and vertical {monoidal} categories}} 
	\label{sec:hor and ver cat}

Now we consider the horizontal and vertical monoidal categories associated with the $1$-object double category of instruments ({see} Definition \ref{def:h v monoidal} {in Section \ref{sec:double category}}).

The horizontal monoidal category $\sf{H}(\sf{Inst})$ consists of objects given by Hilbert spaces {of the form $\CC[\underline{n}]$, $n \geq 0$}.  
Its morphisms are obtained from squares by setting both the input ${X}$ and output ${Y}$ to be singletons.  
A square of the form
\[
\begin{tikzcd}
	\ast \arrow[r,"\ast",""'{name=U}]\arrow[d,"V"'] & \ast \arrow[d,"W"]\\
	\ast \arrow[r,"\ast"',""{name=T}] & \ast  \arrow[r,phantom,from=U,to=T,"\Phi"{description}]
\end{tikzcd}
\]
corresponds to an instrument $\ast \times \ast \to \on{CP}(V,W)$, equivalently to a quantum channel from $V$ to $W$.  
The horizontal composition in $\sf{H}(\sf{Inst})$ coincides with the composition of channels.

\begin{prop}\label{pro:horizontal}
The horizontal monoidal category $\sf{H}(\sf{Inst})$ can be identified with
the strict monoidal category $({\sf{Chan}_\NN},\otimes,\CC)$, whose
underlying category is the full subcategory of $\sf{Chan}$ on the objects
$\CC[\underline{n}]$, $n\geq 0$.
\end{prop}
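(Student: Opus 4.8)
The plan is to exhibit a strict monoidal isomorphism $H(\sf{Inst}) \cong (\sf{Chan}_\NN,\otimes,\CC)$ which is the identity on objects, and then to verify compatibility with composition, identities, and the monoidal structure. First I would unwind the definition of a morphism of $H(\sf{Inst})$. By Definition~\ref{def:h v monoidal}, a morphism $V \to W$ of $H(\sf{Inst})$ is a square of $\sf{Inst}$ whose horizontal source and target are both the singleton $\ast$, i.e.\ a (finite-support, hence arbitrary) map $\Phi : \ast \times \ast \to \on{CP}(V,W)$. Such a map is exactly the datum of a single completely positive operator $\Phi^{\ast}_{\ast} \in \on{CP}(V,W)$, and the channel condition~\eqref{eq:channel-condition} — that $\sum_b \Phi^b_a$ be a channel, the sum ranging over the output set — degenerates, the output set being a singleton, to the requirement that $\Phi^{\ast}_{\ast}$ itself be a channel. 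Hence $\Phi \mapsto \Phi^{\ast}_{\ast}$ defines a bijection between the morphisms $V \to W$ of $H(\sf{Inst})$ and $\on{C}(V,W) = \on{Hom}_{\sf{Chan}_\NN}(V,W)$.

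Next I would check functoriality. The horizontal composition of two squares with singleton boundaries is, by the formula of Section~\ref{sec:double cat}, $(\Psi \circ \Phi)^{\ast}_{\ast} = \Psi^{\ast}_{\ast} \circ \Phi^{\ast}_{\ast}$, which is precisely composition of channels; and the horizontal identity $\on{Id}^h_V$ picks out $\one_V$, the identity channel of $V$. Therefore the bijection above is an isomorphism of categories.

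It then remains to match the monoidal structures. By Definition~\ref{def:h v monoidal}, the monoidal product of $H(\sf{Inst})$ is induced by vertical composition in $\sf{Inst}$: on objects it is the vertical composition of objects, namely the strict tensor product $\CC[\underline{n}]\otimes\CC[\underline{m}] = \CC[\underline{n+m}]$ of $\sf{Hilb}_\NN$ recorded in the remark after Definition~\ref{def:double cat inst}, which coincides with the tensor product of $\sf{Chan}_\NN$; on morphisms it is the vertical composition $\bullet$ of squares with singleton horizontal boundaries, which by the formula of Section~\ref{sec:double cat} sends $(\Phi,\Psi)$ to $(\Psi \bullet \Phi)^{\ast}_{\ast} = \Phi^{\ast}_{\ast}\otimes\Psi^{\ast}_{\ast}$, the Kronecker product of the two channels. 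The unit $\CC$ and its identity channel are preserved on the nose, and since the vertical composition of $\sf{Inst}$ is strictly associative and unital, the induced monoidal structure on $H(\sf{Inst})$ is strict; under the identification it matches the chosen strict skeletal monoidal structure on $\sf{Chan}_\NN$, so the isomorphism is strict monoidal.

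There is no real obstacle here; the argument is a chain of definitional unwindings. The only points demanding care are (i) correctly extracting the monoidal structure of $H(\sf{Inst})$ from the one-object double category via Definition~\ref{def:h v monoidal} and confirming that the associativity and unit constraints are trivial on both sides, so that ``strict'' is meaningful and the comparison is honest; and (ii) the bookkeeping verifying that the channel condition~\eqref{eq:channel-condition} collapses to the ordinary channel condition once the output set is a singleton, so that the morphisms of $H(\sf{Inst})$ are genuinely channels rather than arbitrary completely positive maps.
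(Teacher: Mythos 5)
Your proof is correct and follows essentially the same route as the paper, which simply unwinds the definition: squares with singleton horizontal boundaries are precisely channels (the condition~\eqref{eq:channel-condition} collapsing to trace preservation of the single component), and horizontal composition of such squares is composition of channels. Your additional verification that the tensor product induced by vertical composition reduces to the Kronecker product of channels, with the strict skeletal structure on $\CC[\underline{n}]$, is exactly the part the paper leaves implicit, so nothing is missing or divergent.
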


\begin{defn}\label{def:distribution monad}
For a set ${Y}$, the set of probability distributions on ${Y}$ is defined as
\[
D({Y}) := \left\lbrace\, p : {Y} \to \RR_{\geq 0}
\colon p \text{ finitely supported and } \sum_{a \in {Y}} p(a) = 1 \,\right\rbrace.
\]
This defines a functor $D : \sf{Set} \to \sf{Set}$, called the \emph{distribution monad} \cite{jacobs2010convexity}.
\end{defn}
  
The unit map $\delta : {Y} \to D({Y})$ is given by the Dirac (delta) distribution at each element, while the multiplication map $\mu : D^2({Y}) \to D({Y})$ is induced by the multiplication in the semiring~$\RR_{\geq 0}$. 
The associated Kleisli category {$\sf{Stoch}$} has sets as objects and morphisms of the form ${X} \to D({Y})$. 
It is symmetric monoidal: letting $m : D({Y}) \times D({Y}') \to D({Y} \times {Y}')$ denote the map sending a pair of distributions $(p,q)$ to their product,
\[
p \cdot q(a,b) = p(a)\, q(b),
\]
the tensor product $\odot$ in {$\sf{Stoch}$} is given on objects by ${Y} \odot {Y}' = {Y} \times {Y}'$, and on morphisms by
\[
p \odot q = m \circ (p \times q)
\]
for $p : {X} \to D({Y})$ and $q : {X}' \to D({Y}')$.

Now, turning to the vertical monoidal category $\sf{V}(\sf{Inst})$, we observe that its objects are given by {$\underline{n}$, $n\geq 0$}. A morphism in this category is represented by a square of the form
\[
\begin{tikzcd}
	\ast \arrow[r,"{X}",""'{name=U}]\arrow[d,"\CC"'] & \ast \arrow[d,"\CC"]\\
	\ast \arrow[r,"{Y}"',""{name=T}] & \ast \arrow[r,phantom,from=U,to=T,"\Phi"{description}]
\end{tikzcd}
\]
that is, by an instrument of the form ${X} \times {Y} \to {\on{CP}}(\CC,\CC)$.  
Since ${\on{CP}}(\CC,\CC) \cong \RR_{\geq 0}$, such instruments correspond precisely to Kleisli morphisms of the form ${X} \to D({Y})$.

\begin{prop}\label{pro:vertical}
The vertical monoidal category $\sf{V}(\sf{Inst})$ can be identified with
the strict monoidal category $({\sf{Stoch}_\NN},\odot,\ast)$, whose
underlying category is the full subcategory of $\sf{Stoch}$ on the objects
$\underline{n}$, for $n\geq 0$.
\end{prop}

\section{{Quantum computation with qubits}}	
\label{sec:quantum computation}

{In this section, we apply our double categorical framework of double port graphs and adaptive instruments to classical and quantum computation.} 
We introduce the double port graphs for Boolean circuits and quantum circuits, and construct the corresponding double functors into the double category of instruments, 
which endow these abstract port graphs with their operational meaning. In effect these constructions amount to describing universal gate sets for Boolean and quantum circuits, which is well-known \cite{kitaev2002classical}.

\subsection{Qubit instruments} 
  
A \emph{qubit system} is represented by the Hilbert space $\CC^2$.  
Taking an $n$-fold tensor product yields the Hilbert space $(\CC^2)^{\otimes n}$ of an \emph{$n$-qubit system}.  
{We may identify $(\CC^2)^{\otimes n}$ with the group algebra $\CC[\ZZ_2^n]$, which in turn can be identified with $\CC[\underline{2^n}]$ via the bijection $\ZZ_2^n \to \underline{2^n}$ sending each bit string $(x_1,\dots,x_n)$ to the natural number $\sum_{i=1}^n 2^{n-i}x_i$.}

  {
\begin{defn}\label{def:qubit inst}
The \emph{double category of qubit instruments}, denoted by
$\mathsf{Qbit}$, is a $1$-object double category whose \emph{vertical
morphisms} are the Hilbert spaces $(\CC^2)^{\otimes n}$, $n\geq 0$, and whose
\emph{horizontal morphisms} are the sets $\ZZ_2^n$, $n\geq 0$. Its
squares are instruments of the form
\[
\left(
\begin{smallmatrix}
&\ZZ_2^k&\\
(\CC^2)^{\otimes m}&&(\CC^2)^{\otimes n}\\
&\ZZ_2^\ell&
\end{smallmatrix}
\right).
\]
\end{defn} 

For convenience, we will simply write
\[
\left(
\begin{smallmatrix}
&k&\\
m&&n\\
&\ell&
\end{smallmatrix}
\right)
\]
for such a square and thus regard both the vertical and horizontal morphisms
as natural numbers. Such a square takes $m$ qubits and $k$ bits as input and
produces $n$ qubits and $\ell$ bits as output.

\begin{rmk}
The strictness of $\mathsf{Qbit}$ is obtained by using the identification
$\ZZ_2^n\times \ZZ_2^{n'}
=
\ZZ_2^{n+n'}$
and the corresponding Hilbert space identification
$
\CC[\ZZ_2^n]\otimes \CC[\ZZ_2^{n'}] =
\CC[\ZZ_2^{n+n'}]$.
\end{rmk}
}

\begin{defn}\label{def:Bool-Qubit}
We define the following categories:
\begin{itemize} 
  \item $\mathsf{Bool}$ denotes the strict monoidal category whose underlying category is the full subcategory of $\sf{Fin}$ on the objects $\ZZ_2^n$, where $n\geq 0$.
  
  \item $\mathsf{QChan}$ denotes the strict monoidal category whose underlying category is the full subcategory of $\mathsf{Chan}$ on the qubit Hilbert spaces $(\CC^2)^{\otimes n}$, where $n\geq 0$.
\end{itemize}
\end{defn}

As a consequence of Propositions \ref{pro:horizontal} and \ref{pro:vertical}, the horizontal and vertical monoidal categories of $\sf{QBit}$ can be identified with $\sf{QChan}$ and $\sf{Bool}_D$, respectively.

\subsection{Boolean circuits}
\label{sec:boolean circuits}

A function $f : \ZZ_2^m \to \ZZ_2^\ell$ is called a \emph{Boolean map (or function)}. Such maps form the fundamental building blocks of computation over classical bits \cite{lafont2003towards}. In what follows, we construct a category of port graphs labeled by the primitive operations used to realize these maps.

\begin{defn}
\label{def:Boolean labels}
The set $\bB$ of  
\emph{Boolean labels} consists of the \emph{$\bf{1}$-state}, \textit{delete}, \textit{XOR}, \textit{OR}, and \textit{copy} labels:
\begin{center}
 		\begin{tikzpicture}[x=0.5em,y=0.5em]
 			\Block{0}{0}{$\mathbf{1}$}{F}
 			\VOutdraw{F}{1}{1}{3}
 		\end{tikzpicture}\hspace{3em}
 		\raisebox{1.5em}{
 	 		\begin{tikzpicture}[x=0.5em,y=0.5em]
 				\Block{0}{0}{${d}$}{F}
 				\VIndraw{F}{1}{1}{3}
 			\end{tikzpicture}
 		}\hspace{3em}
 	 		\begin{tikzpicture}[x=0.5em,y=0.5em]
 			\Block{0}{0}{$\on{XOR}$}{XOR}
 			\VOutdraw{XOR}{1}{1}{3}
 			\VIndraw{XOR}{2}{1}{3}
 			\VIndraw{XOR}{2}{2}{3}
 		\end{tikzpicture}\hspace{3em}
 		\begin{tikzpicture}[x=0.5em,y=0.5em]
 			\Block{0}{0}{$\on{OR}$}{XOR}
 			\VOutdraw{XOR}{1}{1}{3}
 			\VIndraw{XOR}{2}{1}{3}
 			\VIndraw{XOR}{2}{2}{3}
 		\end{tikzpicture}\hspace{3em}
 		\begin{tikzpicture}[x=0.5em,y=0.5em]
 			\Block{0}{0}{$c$}{C}
 			\VIndraw{C}{1}{1}{3}
 			\VOutdraw{C}{2}{1}{3}
 			\VOutdraw{C}{2}{2}{3}
 		\end{tikzpicture}
\end{center}
The category of \emph{Boolean-labeled port graphs} is denoted by $\sf{PG}_{\bB}$.  
\end{defn}

Composition in this category is in the vertical direction. Treating this as a double category with no horizontal composition we can realize a double functor into instruments.

\begin{defn}
The double functor 
\[
\begin{tikzcd}
\phi_\bB: &[-3em]	
{\sf{PG}_{\bB}} \arrow[r]   & 
\sf{QBit}
\end{tikzcd}
\]
is determined by sending the labels $\bf{1}$, $d$, $\on{XOR}$, $\on{OR}$, and $c$ to the corresponding instruments, as follows:
 \begin{itemize}
 	\item The \textit{$\mathbf{1}$-state instrument}, for $s\in {\ZZ_2}$ and $\chi\in \CC$, 
 	\[
 	\Phi_{\mathbf{1}}^s(\chi):= \delta_{s,1} \, \chi.
 	\]
 	 
 	\item The \textit{delete instrument}, for $s\in {\ZZ_2}$ and $\chi\in \CC$,
 	\[
 	({\Phi_d})_s(\chi):=\chi .
 	\]

 	\item The \textit{XOR instrument}, for $((s,r),t)\in {\ZZ_2^2\times \ZZ_2}$ and $\chi \, \in \CC$,  
 	\[
(\Phi_{\on{XOR}})_{s,r}^t(\chi):=\delta_{t,s\oplus r} \, \chi
 	\] 
 	where $\oplus$ denotes the mod-$2$ sum.
 	 	
 	{
 	\item The \textit{OR instrument}, for $((s,r),t)\in {\ZZ_2^2\times \ZZ_2}$ and $\chi \, \in \CC$,  
 	\[
(\Phi_{\on{OR}})_{s,r}^t(\chi):=\delta_{t,s\lor
 r} \, \chi
 	\] 
 	where $\lor$ denotes the logical OR operation.
 	} 
 	
 	\item The \textit{copy instrument}, for $(r,(s,t))\in {\ZZ_2\times \ZZ_2^2}$ and $\chi\in \CC$,
 	\[
 	({\Phi_{c}})_r^{s,t}(\chi):=\delta_{r,s}\delta_{r,t} \, \chi .
 	\]
 \end{itemize} 
\end{defn}

{
\begin{prop}\label{pro:image phi B is bool}
The image of $\phi_\bB$ can be identified {with} the category $\sf{Bool}$ of Boolean functions.
\end{prop}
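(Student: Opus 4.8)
The plan is to identify the image of $\phi_\bB$ as a double category with no horizontal composition (i.e., a monoidal category in the vertical direction) and show it coincides with $\sf{Bool}$, viewed as the subcategory of $\sf{Set}_D$ on objects $\ZZ_2^n$ consisting of the \emph{deterministic} Kleisli morphisms (those factoring through the unit $\delta$). Indeed, by Proposition~\ref{pro:vertical} the vertical category $V(\sf{QBit})$ is identified with $\sf{Bool}_D$, and under this identification each of the five generating instruments corresponds to a deterministic map: $\Phi_{\mathbf 1}$ to the constant $1\in\ZZ_2$, $\Phi_D$ to the unique map $\ZZ_2\to\ast$, $\Phi_{\on{XOR}}$ to $(s,r)\mapsto s\oplus r$, $\Phi_{\on{OR}}$ to $(s,r)\mapsto s\vee r$, and $\Phi_C$ to the diagonal $r\mapsto(r,r)$. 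So I first record the lemma that the deterministic Kleisli maps form a subcategory $\sf{Set}\hookrightarrow\sf{Set}_D$ (closed under composition and containing identities), that this is a monoidal embedding, and hence the image of $\phi_\bB$, being generated under vertical composition and the monoidal product by deterministic maps, lands inside $\sf{Bool}$.

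Next I would prove the reverse inclusion: every Boolean function $f:\ZZ_2^m\to\ZZ_2^\ell$ lies in the image. The key step is a universality statement for the gate set $\{\mathbf 1, d, \on{XOR}, \on{OR}, c\}$ over classical bits. I would argue in two stages. First, coordinate bookkeeping: using the copy label $c$, the delete label $d$, and the monoidal structure (parallel composition of boxes), any Boolean function $\ZZ_2^m\to\ZZ_2^\ell$ is built from its $\ell$ component functions $\ZZ_2^m\to\ZZ_2$ by copying inputs enough times and deleting unused ones; this reduces the problem to realizing an arbitrary single-output Boolean function $g:\ZZ_2^m\to\ZZ_2$. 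Second, functional completeness: $g$ can be written in terms of $\oplus$ (from $\on{XOR}$), $\vee$ (from $\on{OR}$), and the constant $1$ (from $\mathbf 1$) — for instance via an algebraic normal form where negation is $x\oplus 1$ and conjunction is obtained from disjunction and negation by De~Morgan, or more directly by noting $\{\vee,\neg\}$ is already functionally complete and $\neg$ is $-\oplus 1$. Since the monad multiplication $\mu$ implements ordinary composition on deterministic maps (this is exactly the statement that $\delta$ is a monad morphism to the identity monad on this subcategory), the composite instrument built along the port graph computes precisely $g$, hence $g$ and therefore $f$ is in the image.

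The main obstacle I expect is not the logical-completeness part, which is classical, but keeping the double-categorical bookkeeping honest: I need to check that the canonical double functor $\on{can}:\sf{DPG}_{\on{Sq}(\sf{QBit})}\to\sf{QBit}$ composed with the pasting really does send a Boolean-labeled port graph to the instrument obtained by iterated vertical composition and monoidal product of the generator instruments, with no horizontal composition intervening (this uses that $\sf{PG}_\bB$ is treated as a double category with trivial horizontal direction, so every square is a vertical pasting diagram). I also need to confirm that $\on{CP}(\CC,\CC)\cong\RR_{\ge 0}$ intertwines the vertical composition of instruments with Kleisli composition of distributions, which is Proposition~\ref{pro:vertical}, and that each generator, being supported on a single output value for each input, is genuinely a deterministic (Dirac) Kleisli morphism — visible from the Kronecker-delta formulas. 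Once these compatibilities are in hand, the statement follows by combining the two inclusions: $\on{im}(\phi_\bB)\subseteq\sf{Bool}$ from the generators being deterministic and $\sf{Bool}$ being a monoidal subcategory, and $\sf{Bool}\subseteq\on{im}(\phi_\bB)$ from functional completeness together with copy/delete for wiring.
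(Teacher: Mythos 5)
Your proposal is correct and follows essentially the argument the paper intends: the paper states this proposition without an explicit proof, remarking only that it amounts to the well-known universality of the gate set, which is exactly your two-inclusion argument (generators are deterministic Kleisli maps under the identification $V(\sf{QBit})\cong\sf{Bool}_D$, and functional completeness of $\{\oplus,\vee,\mathbf{1}\}$ together with copy/delete realizes every Boolean function). Nothing further is needed.
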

}

We also consider the subset ${\bB_{\oplus}} = \bB \setminus \set{\on{OR}}$ of \emph{affine Boolean labels}, and the corresponding double functor from the category of \emph{affine Boolean-labeled double port graphs}, obtained by restricting the label morphism above to this label set:
\[
\begin{tikzcd}
\phi_{{\bB_{\oplus}}}: &[-3em]	 {\sf{PG}_{{\bB_{\oplus}}} }\arrow[r]  & \sf{QBit}.
	\end{tikzcd}
\]

\begin{defn}\label{def:lin}
The \emph{category of affine Boolean maps}, denoted by $\sf{Aff}$, is the subcategory of $\sf{Bool}$ with the same objects and whose morphisms are the affine Boolean maps, i.e., maps whose coordinate functions are of the form
\[
f(x)=a+\sum_{i=1}^n b_i x_i
\]
for some $a,b_i\in\ZZ_2$.
\end{defn}
 
\begin{prop}\label{pro:image phi B is bool}
The image of $\phi_{\bB_\oplus}$ can be identified by the category $\sf{Aff}$ of affine Boolean functions.
\end{prop}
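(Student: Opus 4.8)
The plan is to establish two inclusions of morphism sets inside the vertical category $V(\sf{QBit})$, which by Proposition~\ref{pro:vertical} and the qubit identification is the Kleisli category $\sf{Bool}_D$ of the distribution monad. The preliminary observation I would record is that each of the four generating labels $\mathbf 1, d, \on{XOR}, c$ is sent by $\phi_{\bB_\oplus}$ to an instrument $\ZZ_2^m\times\ZZ_2^n\to\on{CP}(\CC,\CC)\cong\RR_{\geq 0}$ which, under the identification of Proposition~\ref{pro:vertical}, is the Dirac distribution $\delta_g$ on an honest Boolean function $g$ --- respectively the map $\ast\mapsto 1$, the unique map $\ZZ_2\to\ast$, the map $(s,r)\mapsto s\oplus r$, and the map $r\mapsto (r,r)$. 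The assignment $g\mapsto\delta_g$ is a strict monoidal embedding $\sf{Bool}\hookrightarrow\sf{Bool}_D$, so once the two inclusions are proved the image is identified with $\sf{Aff}$ as a subcategory of $\sf{Bool}$ (with objects $\ZZ_2^n$, $n\geq 0$), which is exactly what the statement asserts.

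For the inclusion image $\subseteq\sf{Aff}$, I would note that all four functions above are affine and that $\sf{Aff}$ is closed under composition and under the Cartesian product $\odot$. Since $\phi_{\bB_\oplus}$ is a double functor and $\sf{PG}_{\bB_\oplus}$ is generated under vertical concatenation and monoidal product by these labels, every morphism in the image is the Dirac of an affine map; and since the image of a functor whose source possesses all the relevant composites is automatically a subcategory, it suffices to match morphism sets.

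For the reverse inclusion image $\supseteq\sf{Aff}$, I would give an explicit normal-form construction. Given an affine $f=(f_1,\dots,f_\ell):\ZZ_2^m\to\ZZ_2^\ell$ with $f_j(x)=a_j+\sum_{i\in S_j}x_i$, one builds a $\bB_\oplus$-labeled port graph by fanning each input $x_i$ out with copy boxes $c$ into one wire for each output coordinate $j$ with $i\in S_j$ (and a delete box $d$ if $i$ lies in no $S_j$); for each $j$, forming $f_j$ by combining the selected wires in a binary tree of $\on{XOR}$ boxes, together with a single $\mathbf 1$-box when $a_j=1$; and handling degenerate cases directly --- a constant-$1$ output coordinate is a lone $\mathbf 1$-box, and the constant $0$ map $\ast\to\ZZ_2$ is $\on{XOR}\circ c\circ\mathbf 1$. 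This diagram is assembled purely by vertical concatenation of acyclic pieces, hence is a genuine port graph, and chasing the generator instruments above shows that $\phi_{\bB_\oplus}$ sends it to $\delta_f$. Together with the previous paragraph this yields the claimed identification.

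The only real obstacle is the bookkeeping in the port-graph construction --- tracking the fan-out by copies, the deletions of unused inputs, and the $\on{XOR}$-tree structure, along with the handful of degenerate cases --- but this is routine combinatorics rather than a conceptual difficulty; every other step is formal, and indeed the argument runs entirely parallel to that of Proposition~\ref{pro:image phi B is bool} for $\phi_\bB$, with the OR-gadget removed.
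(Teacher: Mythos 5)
Your argument is correct. Note that the paper itself states this proposition without proof, treating it as the well-known fact that $\mathbf{1}$, $d$, $\on{XOR}$, $c$ generate exactly the affine Boolean maps (cf.\ the citation of universal gate sets); your two inclusions --- the generators map to Diracs of affine functions and affinity is preserved under composition and monoidal product, together with the explicit copy/delete/XOR-tree synthesis of an arbitrary affine $f$ --- are precisely the standard argument being invoked, with the only implicit step being the routine observation that acyclicity of the port graph yields a topological ordering, hence a decomposition of any $\bB_\oplus$-labeled port graph into layers of generators tensored with identity wires.
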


{In later sections, we will use the following labeled port graphs, referred to as the \emph{NOT}, the \emph{$\mathbf{0}$-state}, and the \emph{AND} {labels}, respectively:}
\begin{center}
\def\thirdscale{0.8}
\begin{tabular}{
    @{}c
    @{\hspace{5em}}c
    @{\hspace{5em}}c@{}
}
\begin{tikzpicture}[
    x=0.5em,
    y=0.5em,
    baseline=(current bounding box.center)
]
    \Block{5}{5}{$\mathbf{1}$}{RBT}
    \Block{0}{-5}{$\on{XOR}$}{RBD}

    \VDraw{RBT}{1}{1}{RBD}{2}{2}{-90}{90}
    \VIndraw{RBD}{2}{1}{10}
    \VOutdraw{RBD}{1}{1}{3}
\end{tikzpicture}
&
\begin{tikzpicture}[
    x=0.5em,
    y=0.5em,
    baseline=(current bounding box.center)
]
    \Block{0}{5}{$\mathbf{1}$}{RBT}
    \Block{0}{-5}{$\on{NOT}$}{RBD}

    \VDraw{RBT}{1}{1}{RBD}{1}{1}{-90}{90}
    \VOutdraw{RBD}{1}{1}{3}
\end{tikzpicture}
&
\begin{tikzpicture}[
    x=0.5em,
    y=0.5em,
    scale=\thirdscale,
    transform shape,
    baseline=(current bounding box.center)
]
    \Block{-5}{10}{$\on{NOT}$}{RTL}
    \Block{5}{10}{$\on{NOT}$}{RTR}
    \Block{0}{0}{$\on{OR}$}{RM}
    \Block{0}{-10}{$\on{NOT}$}{RD}

    \VDraw{RTL}{1}{1}{RM}{2}{1}{-90}{90}
    \VDraw{RTR}{1}{1}{RM}{2}{2}{-90}{90}
    \VDraw{RM}{1}{1}{RD}{1}{1}{-90}{90}

    \VIndraw{RTL}{1}{1}{3}
    \VIndraw{RTR}{1}{1}{3}
    \VOutdraw{RD}{1}{1}{3}
\end{tikzpicture}
\end{tabular}
\end{center}

\subsection{Quantum circuits}

In the circuit model \cite{nielsen2010quantum} a quantum computation is represented by a unitary operator
\begin{equation}\label{eq:unitary}
\begin{tikzcd}
		U: &[-3em] (\CC^2)^{\otimes n} \arrow[r] & (\CC^2)^{\otimes n}
	\end{tikzcd}
\end{equation}
followed by a quantum measurement. Typically, the measurement is projective and the projection operators are usually taken to be those that project onto the canonical basis vectors. That is, writing 
\[
\ket{0}:=\begin{pmatrix}
1\\
0
\end{pmatrix}
\;\;\;\;
\text{ and }
\;\;\;\;
\ket{1}:= \begin{pmatrix}
0\\
1
\end{pmatrix},
\]
and writing $\ket{x_1 x_2\dots x_n}:=\ket{x_1}\otimes \ket{x_2}\otimes \dots\otimes \ket{x_n}$, where $x_i\in \set{0,1}$, for the tensor product of these canonical basis vectors, the projectors of the measurement are given by the outer products 
\[
\Pi^x = \ket{x_1 x_2\dots x_n}\bra{x_1 x_2\dots x_n}.
\]
The outer product is understood as the usual matrix multiplication of a column vector with a row vector. 
The readout of a quantum circuit is described by a map
\[
\begin{tikzcd}
p: &[-3em] \ZZ_2^n \arrow[r] & D(\ZZ_2^n).
\end{tikzcd}
\]
{Denoting by $p_r$ the probability distribution associated with an input $r \in \ZZ_2^n$ and by $p_r^s$ the probability of observing the output $s$, we have
\[
p_r^s = \on{Tr}\!\big(\Pi^s\, U\, \Pi^r\, U^\dagger\big).
\]
Typically, the default choice of input is the all-zero string $r = (0,0,\dots,0)$.
}

 
\begin{defn}
\label{def:QC labels}
The set $\cC$ of \emph{QC labels} consists of the \emph{$Z$-preparation}, \emph{Hadamard}, \emph{$T$-gate}, \emph{entangling}, and \emph{(destructive) $Z$-measurement} labels:
\begin{center}
	\begin{tikzpicture}[x=0.5em,y=0.5em]
		\Block{0}{0}{${N_Z}$}{N}
		\VIndraw{N}{1}{1}{3}
		\HOutdraw{N}{1}{1}{3}
	\end{tikzpicture}
	\hspace{2em}
	\begin{tikzpicture}[x=0.5em,y=0.5em]
		\Block{0}{0}{$H$}{H}
		\HIndraw{H}{1}{1}{3}
		\HOutdraw{H}{1}{1}{3}
	\end{tikzpicture}
	\hspace{2em}
	\begin{tikzpicture}[x=0.5em,y=0.5em]
		\Block{0}{0}{$T$}{T}
		\HIndraw{T}{1}{1}{3}
		\HOutdraw{T}{1}{1}{3}
	\end{tikzpicture}
	\hspace{2em}
	\begin{tikzpicture}[x=0.5em,y=0.5em]
		\Block{0}{0}{${E}$}{E}
		\HIndraw{E}{2}{1}{3}
		\HIndraw{E}{2}{2}{3}
		\HOutdraw{E}{2}{1}{3}
		\HOutdraw{E}{2}{2}{3}
	\end{tikzpicture}
	\hspace{2em}
	\raisebox{-1.5em}{
	\begin{tikzpicture}[x=0.5em,y=0.5em]
		\Block{0}{0}{$M_Z$}{M}
		\HIndraw{M}{1}{1}{3}
		\VOutdraw{M}{1}{1}{3}
	\end{tikzpicture}
	}
\end{center}
The double category of \emph{QC-labeled double port graphs} is denoted by $\sf{DPG}_{\cC}$.
\end{defn}

{To define a double functor into the double category of instruments,} we recall some basic quantum operations.  
The Pauli matrices are given by
\begin{eqnarray}
{\one} = \begin{pmatrix} 1 & 0 \\ 0 & 1\end{pmatrix},\quad
X = \begin{pmatrix} 0 & 1 \\ 1 & 0\end{pmatrix},\quad
Y = \begin{pmatrix} 0 & -i \\ i & 0\end{pmatrix},\quad
Z = \begin{pmatrix} 1 & 0 \\ 0 & -1\end{pmatrix}.
\end{eqnarray}
It is straightforward to verify that these matrices are both Hermitian and unitary.  
The {computational basis vectors} $\ket{0}, \ket{1} \in \CC^2$ are eigenvectors of the Pauli $Z$ operator, satisfying 
$Z\ket{x} = (-1)^{x}\ket{x}$ for $x\in\ZZ_2$.  
A measurement in the $Z$-basis is described by the projectors
\begin{equation}\label{eq:Z projectors}
\Pi_Z^r := \ket{r}\bra{r} = \tfrac{1}{2}(\one + (-1)^r Z).
\end{equation}
The Hadamard and $T$ gates are given by
\[
H=\tfrac{1}{\sqrt 2} \begin{pmatrix} 1 & 1 \\ 1 & -1\end{pmatrix},
\qquad
T=\begin{pmatrix} 1 & 0 \\ 0 & e^{i\pi/4}\end{pmatrix},
\]
and the controlled-$Z$ gate by
\[
E = \ket{0}\!\bra{0}\otimes \one +
    \ket{1}\!\bra{1}\otimes Z.
\]

\begin{defn}\label{def:qcirc double functor}
The double functor 
\[
\begin{tikzcd}
\phi_\cC: &[-3em]	 \sf{DPG}_{\cC} \arrow[r]  & \sf{QBit}
\end{tikzcd}
\]
is determined by sending the labels  $N_Z$, $H$, $T$, $E$, and $M_Z$ to the corresponding instruments: 
\begin{itemize}
\item The \emph{$Z$-state preparation instrument}, for $r\in \ZZ_2$ and $\chi\in \CC$,
\[
{(\Phi_{N_Z})_r}(\chi) = \chi\ket{r}\bra{r}.
\]
\item The \emph{$U$-gate instruments},
\begin{equation}\label{eq:U gate instrument}
\Phi_U(-) = U(-)U^\dagger,
\end{equation}
where $U\in\{H,T,E\}$.
\item The \emph{(destructive) $Z$-measurement instrument}, for {$s\in \ZZ_2$},
\[
\Phi_{M_Z}^s(-) = \on{Tr}(\Pi_Z^s(-)\Pi_Z^s).
\]
\end{itemize}
{The double category determined by the image of $\phi_\cC$ will be denoted by $\sf{QC}$.}
\end{defn}

{
\begin{rmk}\label{rem:universality and reversible}
By considering the horizontal and vertical monoidal categories of $\sf{QC}$, we recover two fundamental results from the theory of quantum circuits. We define the following categories:
\begin{itemize}
\item $\sf{QUni}$ denotes the wide subcategory of $\sf{QChan}$ whose morphisms are precisely the unitary channels.

\item $\sf{Rev}$ denotes the wide subcategory of $\sf{Bool}$ whose morphisms are precisely the bijective Boolean maps.
\end{itemize}
Let $\sf{H}(\sf{QC})$ denote the horizontal monoidal category of $\sf{QC}$. For each $n\geq 0$, let
\[
G_n\subset \sf{H}(\sf{QC})(n,n)
\]
denote the subgroup generated by the unitary channels corresponding to the gates $H$, $T$, and $E$, acting on arbitrary individual qubits and pairs of qubits. Then $G_n$ is dense in the group of unitary channels on $(\CC^2)^{\otimes n}$, which is canonically isomorphic to the projective unitary group:
\[
\sf{QUni}\big((\CC^2)^{\otimes n},(\CC^2)^{\otimes n}\big)
\cong
PU\!\big((\CC^2)^{\otimes n}\big).
\]
This density expresses quantum universality. The Solovay--Kitaev theorem~\cite{nielsen2010quantum} further shows that arbitrary $n$-qubit unitary channels can be efficiently approximated by circuits over the gate set $H,T,E$. In the vertical direction, we recover the well-known fact that quantum computation encompasses classical reversible computation: $\sf{Rev}$ is a subcategory of the vertical monoidal category $\sf{V}(\sf{QC})$.
\end{rmk}
}

\section{{Adaptive quantum computation}}
\label{sec:adaptive quantum computation}	
	
In this section, we present prominent adaptive models of quantum computation from a double categorical perspective, 
and describe the relationships among them in terms of double functors.

\subsection{{Measurement-based quantum computation}}
\label{sec:MBQC}

Measurement-based quantum computation (MBQC) is a computational model introduced by Raussendorf and Briegel~\cite{raussendorf2001one}, in which computation is carried out through a sequence of adaptive quantum measurements. In this section, we describe this model using our double categorical framework.

We follow the computational framework known as the \emph{measurement calculus}, introduced in~\cite{danos2007measurement}. This framework specifies the basic quantum operations of MBQC, which we use as labels for double port graphs.

\begin{defn}\label{def:MBQC labels}
The set $\mM$ of \emph{MBQC labels} consists of the affine Boolean labels
$\bB_\oplus$, together with the \emph{$X$-preparation},
\emph{entangling}, \emph{$C$-correction}, and
\emph{(destructive) $\alpha$-measurement} {labels}, where
$\alpha\in[0,2\pi)$:
\begin{center}
	\begin{tikzpicture}[x=0.5em,y=0.5em]
		\Block{0}{0}{${N_X}$}{N}
		\HOutdraw{N}{1}{1}{3}
	\end{tikzpicture}
	\hspace{2em}
	\begin{tikzpicture}[x=0.5em,y=0.5em]
		\Block{0}{0}{${E}$}{E}
		\HIndraw{E}{2}{1}{3}
		\HIndraw{E}{2}{2}{3}
		\HOutdraw{E}{2}{1}{3}
		\HOutdraw{E}{2}{2}{3}
	\end{tikzpicture}
	\hspace{2em}
	\begin{tikzpicture}[x=0.5em,y=0.5em]
		\Block{0}{0}{${C}$}{C}
		\VIndraw{C}{1}{1}{3}
		\HIndraw{C}{1}{1}{3}
		\HOutdraw{C}{1}{1}{3}
	\end{tikzpicture}
	\hspace{2em}
	\raisebox{-1.5em}{
	\begin{tikzpicture}[x=0.5em,y=0.5em]
		\Block{0}{0}{$M_{\alpha}$}{M}
		\VIndraw{M}{1}{1}{3}
		\HIndraw{M}{1}{1}{3}
		\VOutdraw{M}{1}{1}{3}
	\end{tikzpicture}
	}
\end{center}
The double category of \emph{MBQC-labeled double port graphs} is
$\sf{DPG}_{\mM}$.
\end{defn}

{For the definition of the double functor {into $\sf{QBit}$}, we will need {some preliminary definitions. Consider} the} 
vectors of the form%
\[
\left | +_\alpha^{r} \right \rangle := \frac{1}{\sqrt{2}}\left (\left | 0 \right \rangle + (-1)^r e^{i\alpha}\left | 1 \right \rangle\right ),
\]
where $\alpha \in [0,2\pi)$. 
{We will use the corresponding projectors%
\begin{equation}\label{eq:alpha measurement}
\Pi_{\alpha}^r := \left | +_\alpha^{r} \right \rangle \left \langle +_\alpha^{r} \right | 
= \frac{1}{2}\left ( I + (-1)^{r}(\cos\alpha \, X + \sin \alpha \, Y) \right ).
\end{equation}}%
In particular, the eigenvectors of the Pauli $X$ operator satisfy 
$X\left | \pm \right \rangle = \pm \left | \pm \right \rangle$, 
where $\left | + \right \rangle := \left | +_0^{0} \right \rangle$ 
and $\left | - \right \rangle := \left | +_0^{1} \right \rangle$.

	\begin{defn}\label{def:MBQC double functor} 
The double functor 
\[
\begin{tikzcd}
\phi_\mM: &[-3em] 
\sf{DPG}_{\mM} \arrow[r] &    
\sf{QBit}
\end{tikzcd}
\]
is determined by sending the labels 
 $N_X$, $E$, $C$, and $M_\alpha$ to the corresponding instruments:
\begin{itemize}
    \item The \emph{$X$-preparation instrument}, for $\chi\in \CC$,
    \[
    \Phi_{N}(\chi) = \chi \ket{+}\bra{+}.
    \] 
    \item The \emph{$E$-gate instrument} $\Phi_E$, see Eq.~(\ref{eq:U gate instrument}).   

    \item The \emph{$C$-correction instruments},  
    for $r \in \ZZ_2$,  
    \begin{equation}\label{eq:C correction}
    (\Phi_{C})_r(-) = C^{r} (-) C^{r},
    \end{equation}
    {where $C \in \{X,Z\}$.}

    \item The \emph{(destructive) $\alpha$-measurement instrument}, for $(r,s) \in \ZZ_2^2$, 
    \[
    (\Phi_{M_\alpha})_r^s(-) = 
    \on{Tr}\!\big(\Pi_{(-1)^{r}\alpha}^{s}(-)\Pi_{(-1)^{r}\alpha}^{s}\big).
    \]
\end{itemize}
{The affine Boolean labels are sent to their corresponding Boolean instruments.}
\end{defn}

\begin{eg}\label{ex:teleportation}
The teleportation protocol {\cite{bennett1993teleporting}, see also \cite{danos2007measurement},} can be implemented by a $\mM$-labeled  
{$
\left(
\begin{smallmatrix}
& 0 & \\
1 & & 1 \\
& 0 &
\end{smallmatrix}
\right)
$}-double port graph of the form
\begin{center}
\begin{tikzpicture}[x=0.5em,y=0.5em]
\def\xshift{10}
\def\vshift{10}
\def\del{2.1}
\begin{scope}[scale=0.8,transform shape]

\Block{\xshift}{0}{$E$}{EM}
\Block{\xshift-\vshift}{5}{$E$}{EL}
\Block{\xshift-2*\vshift}{1}{$N_X$}{NT}
\Block{\xshift-2*\vshift}{-6}{$N_X$}{ND}
\Block{\xshift+\vshift}{5+\del}{$M_0$}{MT}
\Block{\xshift+\vshift}{13+\del}{$\mathbf{0}$}{OL}
\Block{\xshift+2*\vshift}{2.5+\del}{$M_0$}{MD}
\Block{\xshift+2*\vshift}{13+\del}{$\mathbf{0}$}{OR}
\Block{\xshift+3*\vshift}{-3-\del}{$Z$}{Z}
\Block{\xshift+3*\vshift}{-11-\del}{$d$}{DL}
\Block{\xshift+4*\vshift}{-3-\del}{$X$}{X}
\Block{\xshift+4*\vshift}{-11-\del}{$d$}{DR}

\HIndraw{EL}{2}{1}{12}
\HDraw{NT}{1}{1}{EL}{2}{2}{0}{180} 
\HDraw{ND}{1}{1}{EM}{2}{2}{0}{180}
\HDraw{EL}{2}{2}{EM}{2}{1}{0}{180}
\HDraw{EL}{2}{1}{MT}{1}{1}{0}{180}
\HDraw{EM}{2}{1}{MD}{1}{1}{0}{180}
\HDraw{EM}{2}{2}{Z}{1}{1}{0}{180}
\HDraw{Z}{1}{1}{X}{1}{1}{0}{180}  
\HOutdraw{X}{1}{1}{3}   
   
\VDraw{OL}{1}{1}{MT}{1}{1}{-90}{90}   
\VDraw{OR}{1}{1}{MD}{1}{1}{-90}{90}
\VDraw{MT}{1}{1}{Z}{1}{1}{-90}{90}
\VDraw{MD}{1}{1}{X}{1}{1}{-90}{90} 
\VDraw{Z}{1}{1}{DL}{1}{1}{-90}{90}      
\VDraw{X}{1}{1}{DR}{1}{1}{-90}{90}
   
   \end{scope}
\end{tikzpicture}
\end{center} 
Under the $\phi_\mM$ double functor this labeled port graph is sent to an instrument which implements the identity channel on a single qubit.
\end{eg}

{As a consequence of the Solovay--Kitaev theorem (Remark~\ref{rem:universality and reversible}), quantum computational power can be achieved using a finite universal gate set. The counterpart of this result in MBQC is that one can likewise restrict to a finite set of measurement operations.}

\begin{defn}\label{def:pi4 measurement}
Let $\mM[\pi/4]$ denote the subset of $\mM$ where $\alpha$ is restricted to $0, \pm\pi/4$:
\[
\mM[\pi/4]=\{{N_X},E,C,M_\alpha \colon C=X,Z \text{ and } \alpha=0,\pm\pi/4 \}.
\]	   
{The double category determined by the image of the restriction of $\phi_\mM$ to $\sf{DPG}_{\mM[\pi/4]}$ will be denoted by $\sf{MBQC}$.}
\end{defn}

{The QC model can be compared to the MBQC model.} 
We begin by constructing morphisms of port labels:
 	\[	\begin{tikzcd}
\varphi_{\cC,\mM[\pi/4]}:&[-3em]\cC \arrow[r] & \on{Sq}(\sf{DPG}_{\mM[\pi/4]})  .
	\end{tikzcd}
	\] 
	{For the definition of} $\varphi$ {we need} the following fundamental $\mM[\pi/4]$-labeled double port graph; see \cite{danos2007measurement}:

\begin{defn}\label{def:Jalpha box}
The \emph{$J(\alpha)$-gadget} is defined to be the following $\mM$-labeled double port graph: 
\begin{center}
\begin{tikzpicture}[x=0.5em,y=0.5em]
\def\xshift{15}
\def\vshift{10}
\def\del{2.1}
\begin{scope}[scale=1,transform shape]

\Block{\xshift-\vshift}{0}{$J(\alpha)$}{J}
\Block{\xshift+\vshift}{-5}{$N_X$}{N}
\Block{\xshift+2*\vshift}{0}{$E$}{E}
\Block{\xshift+3*\vshift}{+3}{$M_\alpha$}{M}
\Block{\xshift+3*\vshift}{+11}{$\mathbf{1}$}{O}
\Block{\xshift+4*\vshift}{-5}{$X$}{X}
 
\HDraw{N}{1}{1}{E}{2}{2}{0}{180} 
\HDraw{E}{2}{2}{X}{1}{1}{0}{180}
\HDraw{E}{2}{1}{M}{1}{1}{0}{180}   
   
\VDraw{O}{1}{1}{M}{1}{1}{-90}{90}  
\VDraw{O}{1}{1}{M}{1}{1}{-90}{90}  
\VDraw{M}{1}{1}{X}{1}{1}{-90}{90}   
   
\HIndraw{E}{2}{1}{10}   
\HOutdraw{X}{1}{1}{3}
\HIndraw{J}{1}{1}{3}   
\HOutdraw{J}{1}{1}{3}

  \path (16.5,0) node {\huge $=$};  
  \end{scope} 
\end{tikzpicture}
\end{center} 
\end{defn}

{For each target label or gadget $L$, let $\Gamma_L$ denote the corresponding
$\mM[\pi/4]$-labeled double port graph. Then $\varphi$ is defined on the gate
labels by
\[
\varphi(U)=
\begin{cases}
\Gamma_{J(0)} & U=H,\\
\Gamma_{J(0)}\circ\Gamma_{J(\pi/4)} & U=T,\\
\Gamma_E & U=E.
\end{cases}
\]
We write
\[
\Gamma_H:=\Gamma_{J(0)},
\qquad
\Gamma_T:=\Gamma_{J(0)}\circ\Gamma_{J(\pi/4)}
\]
for the $\mM[\pi/4]$-labeled double port graphs implementing the $H$ and $T$
gates, respectively.} 
The remaining labels can be defined as follows: 
\begin{center}
\begin{tikzpicture}[x=0.5em,y=0.5em]
\def\xshift{30}
\def\hshift{10}
\def\vshift{-10}
\def\del{2.1}

\Block{\xshift-\hshift}{0}{$N_Z$}{NZ}
\Block{\xshift+\hshift}{0}{$N_X$}{NX}
\Block{\xshift+2*\hshift}{0}{$H$}{H} 
 
\VIndraw{NZ}{1}{1}{3}
\HOutdraw{NZ}{1}{1}{3}
\VIndraw{NX}{1}{1}{3} 

\HDraw{NX}{1}{1}{H}{1}{1}{0}{180}   
   
  \path (\xshift,0) node {$\xmapsto{\varphi}$};

\Block{\xshift-\hshift}{0+\vshift}{$M_Z$}{MZq}
\Block{\xshift+\hshift}{0+\vshift}{$H$}{Hq}
\Block{\xshift+2*\hshift}{0+\vshift}{$M_0$}{MZeq} 
 
\VOutdraw{MZq}{1}{1}{3}
\HIndraw{MZq}{1}{1}{3}
\VOutdraw{MZeq}{1}{1}{3} 

\HDraw{Hq}{1}{1}{MZeq}{1}{1}{0}{180}   
   
  \path (\xshift,0+\vshift) node {$\xmapsto{\varphi}$};   
   
\end{tikzpicture}
\end{center}

Now, the pasting construction of Section \ref{sec:pasting} and the functoriality of label morphisms as discussed in Section \ref{sec:functoriality port labels} gives a double functor
 	\begin{equation}\label{eq:C-M}
 		\begin{tikzcd}
\kappa_{\cC,\mM[\pi.4]}:&[-3em] \sf{DPG}_\cC \arrow[r,"\overline \varphi"] & \sf{DPG}_{\on{Sq}(\sf{DPG}_{\mM[\pi/4]})} \arrow[r,"\on{paste}"]& \sf{DPG}_{\mM[\pi/4]} .
	\end{tikzcd}
 	\end{equation}
	 This formalizes what is meant by a quantum circuit implemented by an MBQC.  
Furthermore, the operational meaning is preserved; that is, we have a commutative diagram of double functors
\begin{equation}\label{eq:C-M-QBit}
\begin{tikzcd}[column sep=huge,row sep=large]
\sf{DPG}_\cC \arrow[r,"\kappa"] \arrow[dr,"\phi"'] & \sf{DPG}_{\mM[\pi/4]} \arrow[d,"\phi"]\\
& \sf{QBit}
\end{tikzcd}
\end{equation}

{ 
Typically, an MBQC consists of a state preparation followed by a sequence of adaptive measurements that together implement the computation. 
This structure is formalized in Proposition~\ref{pro:standard form} via the notion of a \emph{standard form}, which expresses the rewrite rules of~\cite{danos2007measurement} in terms of double categorical composition.
}

\subsection{{Quantum computation with magic states}}
\label{sec:QCM}

In this section, we introduce a double category for quantum computation with magic states (QCM), one of the well-known models of universal quantum computation introduced in \cite{bravyi2005universal}.

\begin{defn}\label{def:QCM labels}
The set $\qQ$ of \emph{QCM labels} consists of the affine Boolean labels {$\bB_{\oplus}$} together with the \emph{$T$-preparation}, \emph{entangling}, \emph{$S$-correction}, \emph{$H$-correction}, \emph{(nondestructive) $Z$-measurement}, and \emph{trace} {labels}:
\begin{center}
	\begin{tikzpicture}[x=0.5em,y=0.5em]
		\Block{0}{0}{${N_T}$}{N}
		\HOutdraw{N}{1}{1}{3}
	\end{tikzpicture}
	\hspace{1em}
	\begin{tikzpicture}[x=0.5em,y=0.5em]
		\Block{0}{0}{${E}$}{E}
		\HIndraw{E}{2}{1}{3}
		\HIndraw{E}{2}{2}{3}
		\HOutdraw{E}{2}{1}{3}
		\HOutdraw{E}{2}{2}{3}
	\end{tikzpicture}
	\hspace{1em}
	\begin{tikzpicture}[x=0.5em,y=0.5em]
		\Block{0}{0}{$S$}{S}
		\VIndraw{S}{1}{1}{3}
		\HIndraw{S}{1}{1}{3}
		\HOutdraw{S}{1}{1}{3}
	\end{tikzpicture}
	\hspace{1em}
	\begin{tikzpicture}[x=0.5em,y=0.5em]
		\Block{0}{0}{$H$}{H}
		\VIndraw{H}{1}{1}{3}
		\HIndraw{H}{1}{1}{3}
		\HOutdraw{H}{1}{1}{3}
	\end{tikzpicture}
	\hspace{1em}
	\raisebox{-1.5em}{
	\begin{tikzpicture}[x=0.5em,y=0.5em]
		\Block{0}{0}{$\widetilde M_Z$}{M}
		\HIndraw{M}{1}{1}{3}
		\VOutdraw{M}{1}{1}{3}
		\HOutdraw{M}{1}{1}{3}
	\end{tikzpicture}
	}
	\hspace{1em}
	\begin{tikzpicture}
		\draw (-0.5,-0.5) rectangle (0.5,0.5);
		\path (0,0) node {$\on{Tr}$};
		\draw (-0.5,0) -- (-1,0);
	\end{tikzpicture}
\end{center}
The double category of \emph{QCM-labeled double port graphs} is denoted by $\sf{DPG}_{\qQ}$.
\end{defn}

\begin{defn}
We define a double functor
\[
		\begin{tikzcd}
			 \phi_\qQ: &[-3em] \sf{DPG}_\qQ \arrow[r] & \sf{QBit}
		\end{tikzcd}
		\]
by sending the labels $N_T$, $E$, $S/H$, $\widetilde{M}_Z$, $\on{Tr}$ to the respective instruments:{
\begin{itemize}
\item The \emph{$T$-preparation instrument}, for $\chi\in \CC$,
\[
\Phi_{N_T}(\chi) = \chi T\ket{+}\bra{+}T^\dagger.
\]
\item The \emph{$E$-gate instrument:} $\Phi_E$, see Eq.~(\ref{eq:U gate instrument}).

\item The \emph{$C$-correction instruments} with $C=S$ or $H$, see Eq.~(\ref{eq:C correction}). 

\item The \emph{(non-destructive) measurement instrument}, for $r\in \ZZ_2$,
\[
(\Phi_{\widetilde{M}_Z})^r(-) = \Pi_Z^r(-)\Pi_Z^r
\]
where $\Pi_Z^r$ is as given in Eq.~(\ref{eq:Z projectors}).

\item The \emph{trace instrument} given by the trace map $\on{Tr}:L(\CC^2)\to L(\CC)$.
\end{itemize}
The affine Boolean labels are sent to their corresponding Boolean instruments.
}
\end{defn} 

{
Let $\Gamma_{\widetilde M_Z}$ and $\Gamma_{\on{Tr}}$ denote the
$\qQ$-labeled double port graphs consisting of single vertices labeled by
$\widetilde M_Z$ and $\on{Tr}$, respectively. We define the
\emph{destructive $Z$-measurement} by
\[
\Gamma_{M_Z}
:=
\Gamma_{\on{Tr}}\circ\Gamma_{\widetilde M_Z}.
\]
Thus, the quantum output of the nondestructive measurement is discarded by
the trace operation, while its classical measurement output remains as the
vertical output. 
} 

\begin{defn}
\label{def:T gadget}
The \emph{$T$-gadget} (see \cite{peres2203quantum}) is defined to be the following $\qQ$-labeled double port graph:
\begin{center}
\begin{tikzpicture}[x=0.5em,y=0.5em]
\def\xshift{30}
\def\hshift{10}
\def\vshift{30}
\def\del{2.1}

\Block{\xshift-\hshift}{0}{$T$}{T}
\Block{\xshift+\hshift}{5}{$N_T$}{NT} 
\Block{\xshift+2*\hshift}{5}{$H$}{H} 
\Block{\xshift+3*\hshift}{0}{$E$}{E}
\Block{\xshift+4*\hshift}{5}{$H$}{HH} 
\Block{\xshift+5*\hshift}{5}{$M_Z$}{MZ} 
\Block{\xshift+5*\hshift}{-5}{$S$}{S} 

\HIndraw{T}{1}{1}{3}
\HOutdraw{T}{1}{1}{3}
 
\HIndraw{E}{2}{2}{20}  
\HOutdraw{S}{1}{1}{3} 

\HDraw{NT}{1}{1}{H}{1}{1}{0}{180}   
\HDraw{H}{1}{1}{E}{2}{1}{0}{180} 
\HDraw{E}{2}{1}{HH}{1}{1}{0}{180}
\HDraw{HH}{1}{1}{MZ}{1}{1}{0}{180}
\HDraw{E}{2}{2}{S}{1}{1}{0}{180}

\VDraw{MZ}{1}{1}{S}{1}{1}{-90}{90}
   
  \path (\xshift,0) node {$=$};

\end{tikzpicture}
\end{center}  
{The double category determined by the image of $\phi_\qQ$ will be denoted by $\sf{QCM}$.}
\end{defn}
	
We construct a morphism of labels
\[
		\begin{tikzcd}
			\varphi_{\cC,\qQ}: &[-3em] \cC \arrow[r] & \on{Sq}(\sf{DPG}_{\qQ})
		\end{tikzcd}
		\]			
by sending 
$H$-label to the vertical composition {$\Gamma_H\bullet \Gamma_{\bf{1}}$}
and sending
the $T$-label to {the $T$-gadget. The remaining labels are mapped to those with the same names.}
Now, using the {pasting operation (Section \ref{sec:pasting})} we can construct the double functor	
	\begin{equation}\label{eq:C-Q}
		\begin{tikzcd}
\kappa_{\cC,\qQ}: \sf{DPG}_\cC \arrow[r,"\overline \varphi"] & \sf{DPG}_{\on{Sq}(\sf{DPG}_{\qQ})} \arrow[r,"\on{paste}"]& \sf{DPG}_{\qQ}. 
	\end{tikzcd}
	\end{equation}
This formalizes conversion of a quantum circuit (QC) into the QCM model. 	
	   
Similar to the MBQC case, we have a commutative diagram of double functors
\begin{equation}\label{eq:C-Q-QBit}
\begin{tikzcd}[column sep=huge,row sep=large]
\sf{DPG}_\cC \arrow[r,"\kappa"] \arrow[dr,"\phi"'] & \sf{DPG}_{\qQ} \arrow[d,"\phi"]\\
& \sf{QBit}
\end{tikzcd}
\end{equation}
Diagrams (\ref{eq:C-M-QBit}) and (\ref{eq:C-Q-QBit}) combine to yield Diagram (\ref{dia:Q-C-M}). 
The corresponding double subcategories assembles into
\begin{equation}\label{dia: QC QCM MBQC}
\begin{tikzcd}
  & \sf{QCM} \arrow[dr,hook] &\\
\sf{QC} \arrow[ru,hook]\arrow[dr,hook] && \sf{QBit} \\
&\sf{MBQC} \arrow[ru,hook]&
\end{tikzcd}
\end{equation}
Further conversions are possible between the models QC, QCM, and MBQC; their explicit formulation is left to the reader.
 
\subsection{Measurement-based Pauli computation}
\label{sec:MBPC}

Now, we specialize to the measurement-based computational model introduced in \cite{danos2007pauli} and further developed in \cite{okay2024classical}. We begin with the latter version, which involves fewer basic operations. {We call this model measurement-based Pauli computation (MBPC).}
 
\begin{defn}\label{def:QCM labels}
The set $\pP$ of \emph{MBPC labels} consists of the affine Boolean labels $\bB_\oplus$ together with the \emph{$X/T$-preparation label}, the \emph{entangling label}, and the \emph{(destructive) $X/Y$-measurement label}, respectively:
 	\begin{center}
 		\begin{tikzpicture}[x=0.5em,y=0.5em]
 			\Block{0}{0}{${N_{X/T}}$}{N}
 			\HOutdraw{N}{1}{1}{3}
 			\VIndraw{N}{1}{1}{3}
 		\end{tikzpicture} 
\hspace{3em} 
 		\begin{tikzpicture}[x=0.5em,y=0.5em]
 			\Block{0}{0}{${E}$}{E}
 			\HIndraw{E}{2}{1}{3}
 			\HIndraw{E}{2}{2}{3}
 			\HOutdraw{E}{2}{1}{3}
 			\HOutdraw{E}{2}{2}{3}
 		\end{tikzpicture}
\hspace{3em}
\raisebox{-1.5em}{
 		\begin{tikzpicture}[x=0.5em,y=0.5em]
 			\Block{0}{0}{$M_{X/Y}$}{C}
 			\VIndraw{C}{1}{1}{3}
 			\HIndraw{C}{1}{1}{3}
 			\VOutdraw{C}{1}{1}{3}
 		\end{tikzpicture}
 		}
 	\end{center}    
The double category of \emph{MBPC-labeled double port graphs} is denoted by $\sf{DPG}_{\pP}$.
\end{defn}

We begin by describing the $H$-gate label, a special case of the $J(\alpha)$-gadget (Definition \ref{def:Jalpha box}) with $\alpha=0$: 
\begin{center}
\begin{tikzpicture}[x=0.5em,y=0.5em]
\def\xshift{15}
\def\vshift{10}
\def\del{2.1}

\Block{\xshift-\vshift}{0}{$H$}{J}
\Block{\xshift+\vshift}{-5}{$N_X$}{N}
\Block{\xshift+2*\vshift}{0}{$E$}{E}
\Block{\xshift+3*\vshift}{+3}{$M_{X}$}{M}
\Block{\xshift+3*\vshift}{-5}{$X$}{X}
 
\HDraw{N}{1}{1}{E}{2}{2}{0}{180} 
\HDraw{E}{2}{2}{X}{1}{1}{0}{180}
\HDraw{E}{2}{1}{M}{1}{1}{0}{180}   
    
\VDraw{M}{1}{1}{X}{1}{1}{-90}{90}   
   
\HIndraw{E}{2}{1}{10}   
\HOutdraw{X}{1}{1}{3}
\HIndraw{J}{1}{1}{3}   
\HOutdraw{J}{1}{1}{3}

  \path (16.5,0) node {$=$};   
\end{tikzpicture}
\end{center} 
Here, and henceforth we write $M_X$ or $M_Y$ for the vertical composition of $M_{X/Y}$ with the $\bf{0}$- or $\bf{1}$-state label. Similarly, we write $N_X$ and $N_T$ for the preparation labels obtained from $N_{X/T}$.

\begin{defn}\label{def:phi P}
We define a double functor  
\[
		\begin{tikzcd}
			 \phi_\pP: &[-3em] \sf{DPG}_\pP \arrow[r] &  \sf{QBit}
		\end{tikzcd}
		\]
		by sending the labels $N_{X/T}$, $E$, and $M_{X/Y}$ to the respective instruments:
\begin{itemize}
\item The \emph{$X/T$-preparation instrument}, for $\chi\in \CC$,
\[
(\Phi_{N_{X/T}})_s(\chi) = \chi T^s\ket{+}\bra{+}(T^s)^\dagger.
\]
\item The \emph{$E$-gate instrument:} $\Phi_E$, see Eq.~(\ref{eq:U gate instrument}).
 
\item The \emph{(destructive) $X/Y$-measurement instrument}, for $s,r\in \ZZ_2$,
\[
(\Phi_{{M}_{X/Y}})_s^r(-) = \on{Tr}(S^s\Pi_{X}^r(S^s)^\dagger (-)S^s\Pi_{X}^r(S^s)^\dagger) 
\]
where $\Pi_X^r=(\one+(-1)^rX)/2$, which also coincides with $\Pi_\alpha$ in Eq.~(\ref{eq:alpha measurement}) when $\alpha=0$.
\end{itemize}
The affine Boolean labels are sent to their corresponding Boolean instruments.
{The double category determined by the image of $\phi_\pP$ will be denoted by $\sf{MBPC}$.}
\end{defn}

A key construction in this model is the realization of the OR-gate label using affine Boolean and quantum labels. This construction is crucial for understanding the vertical direction of the double category {$\sf{MBPC}$}. {In addition, this non-affine Boolean label is used in the model conversion from the circuit model.}

\begin{defn}
\label{def:Anders Browne gadget}
The \emph{{OR}-gadget} {(see \cite{anders2009computational,
raussendorf2013contextuality})} is defined by 
\begin{center}
\begin{tikzpicture}[x=0.5em,y=0.5em]
\def\xshift{-20}
\def\vshift{10}
\def\hshift{10}
\def\del{2.1}
\begin{scope}[scale=0.8,transform shape]

\Block{\xshift-\vshift}{0}{$\on{OR}$}{OR}

\Block{\xshift+\vshift}{0}{$\on{GHZ}$}{G}
\Block{\hshift+\xshift+2*\vshift}{10}{$M_{X/Y}$}{ML}
\Block{\hshift+\xshift+2*\vshift}{2}{$c$}{C}

\Block{\hshift+\xshift+2.3*\vshift}{-9}{$\on{XOR}$}{XM}
\Block{\hshift+\xshift+3*\vshift}{-17}{$\on{XOR}$}{XD}
\Block{\hshift+\xshift+3*\vshift}{-25}{$\on{NOT}$}{NOT}

\Block{\hshift+\xshift+3.9*\vshift}{4}{$\on{XOR}$}{XT}
\Block{\hshift+\xshift+3.5*\vshift}{-5}{$M_{X/Y}$}{MR}

\Block{\hshift+\xshift+2.8*\vshift}{18}{$c$}{CU}
\Block{\hshift+\xshift+2.6*\vshift}{26}{$M_{X/Y}$}{MU}

\VIndraw{MU}{1}{1}{3}
\VDraw{MU}{1}{1}{CU}{1}{1}{-90}{90}
\VDraw{CU}{2}{2}{XT}{2}{2}{-90}{90}
\VDraw{CU}{2}{1}{XM}{2}{2}{-90}{90}
\VIndraw{ML}{1}{1}{19}
\VDraw{ML}{1}{1}{C}{1}{1}{-90}{90}
\VDraw{C}{2}{2}{XT}{2}{1}{-90}{90}
\VDraw{C}{2}{1}{XM}{2}{1}{-90}{90}
\VDraw{XT}{1}{1}{MR}{1}{1}{-90}{90}
\VDraw{MR}{1}{1}{XD}{2}{2}{-90}{90}
\VDraw{XM}{1}{1}{XD}{2}{1}{-90}{90}
\VDraw{XD}{1}{1}{NOT}{1}{1}{-90}{90}
\VOutdraw{NOT}{1}{1}{3}

\HDraw{G}{3}{1}{MU}{1}{1}{0}{180}
\HDraw{G}{3}{2}{ML}{1}{1}{0}{180}
\HDraw{G}{3}{3}{MR}{1}{1}{0}{180}

\VIndraw{OR}{1}{1}{3}
\VOutdraw{OR}{2}{1}{3}
\VOutdraw{OR}{2}{2}{3}

\path (-20,0) node {$=$};
\end{scope}
\end{tikzpicture}
\end{center}
where the $n$-qubit Greenberger--Horne--Zeilinger (GHZ) state~\cite{danos2007measurement} is given by
\begin{center}
\begin{tikzpicture}[x=0.8em,y=0.5em]
\def\xshift{-20}
\def\vshift{10}
\def\del{2.1}
\def\diagscale{0.8}     
\def\centerx{25}        

\begin{scope}[scale=0.6,transform shape]

\Block{\xshift+\vshift-\centerx}{19}{$N_X$}{NU}
\Block{\xshift+\vshift-\centerx}{12}{$N_X$}{NM}
\Block{\xshift+\vshift-\centerx}{5}{$N_X$}{ND}

\Block{\xshift+\vshift-\centerx}{-5}{$N_X$}{NNU}
\Block{\xshift+\vshift-\centerx}{-12}{$N_X$}{NND}

\Block{\xshift+2*\vshift-\centerx}{15.5}{$E$}{E}
\Block{\xshift+3*\vshift-\centerx}{12}{$H$}{H}

\Block{\xshift+4*\vshift-\centerx}{8.5}{$E$}{EE}
\Block{\xshift+5*\vshift-\centerx}{5}{$H$}{HH}

\Block{\xshift+6*\vshift-\centerx}{-5}{$H$}{HL}
\Block{\xshift+7*\vshift-\centerx}{-8.5}{$E$}{EL}
\Block{\xshift+8*\vshift-\centerx}{-12}{$H$}{HHL}

\HDraw{NU}{1}{1}{E}{2}{1}{0}{180}
\HOutdraw{E}{2}{1}{30}
\HDraw{NM}{1}{1}{E}{2}{2}{0}{180}
\HDraw{E}{2}{2}{H}{1}{1}{0}{180}
\HDraw{H}{1}{1}{EE}{2}{1}{0}{180}
\HOutdraw{EE}{2}{1}{20}
\HDraw{ND}{1}{1}{EE}{2}{2}{0}{180}
\HDraw{EE}{2}{2}{HH}{1}{1}{0}{180}
\HOutdraw{HH}{1}{1}{15}

\HDraw{NNU}{1}{1}{HL}{1}{1}{0}{180}
\HDraw{HL}{1}{1}{EL}{2}{1}{0}{180}
\HOutdraw{E}{2}{1}{10}
\HDraw{NND}{1}{1}{EL}{2}{2}{0}{180}
\HDraw{EL}{2}{2}{HHL}{1}{1}{0}{180}
\HOutdraw{HHL}{1}{1}{3}

\path (35-\centerx,.5) node {$\ddots$};

\end{scope}
\end{tikzpicture}
\end{center}
\end{defn}

We also introduce an extended version of the label set $\pP$ by including corrections. This model was introduced in \cite{danos2007pauli}.

\begin{defn}\label{def:QCM labels}
The set $\widetilde\pP$ of \emph{MBPC labels with corrections}  
consists of the labels in $\pP$ together with
the \emph{$X$-correction} and \emph{$S$-correction labels}
 	\begin{center}
 		\begin{tikzpicture}[x=0.5em,y=0.5em]
 			\Block{0}{0}{$X$}{C} 
 			\VIndraw{C}{1}{1}{3}
 			\HIndraw{C}{1}{1}{3}
 			\HOutdraw{C}{1}{1}{3}
 		\end{tikzpicture}
\hspace{3em}
 		\begin{tikzpicture}[x=0.5em,y=0.5em]
 			\Block{0}{0}{$S$}{C} 
 			\VIndraw{C}{1}{1}{3}
 			\HIndraw{C}{1}{1}{3}
 			\HOutdraw{C}{1}{1}{3}
 		\end{tikzpicture}
 	\end{center}  
 
\end{defn} 

In the extended version we have the following \emph{correction vs measurement rewrite rules}:
\begin{center}
\begin{tikzpicture}[x=0.5em,y=0.5em]
\def\xshift{30}
\def\hshift{10}
\def\vshift{30}
\def\del{2.1}

\Block{\xshift-\hshift}{0}{$M_{X/Y}$}{MXY}
\Block{\xshift-2*\hshift}{0}{$X$}{X}

\Block{\xshift+\hshift}{-5}{$\on{AND}$}{A}
\Block{\xshift+2.2*\hshift}{0}{$M_{X/Y}$}{MXYR} 
\Block{\xshift+2*\hshift}{-13}{$\on{XOR}$}{XOR} 
\Block{\xshift+2*\hshift}{8}{$c$}{c}

\HIndraw{X}{1}{1}{3} 
\VIndraw{X}{1}{1}{3}
\HDraw{X}{1}{1}{MXY}{1}{1}{0}{180}
\VIndraw{MXY}{1}{1}{3}
\VOutdraw{MXY}{1}{1}{3}
\VDraw{c}{2}{2}{MXYR}{1}{1}{0}{180}
\VDraw{MXYR}{1}{1}{XOR}{2}{2}{0}{180}

\VIndraw{A}{2}{1}{15}
\VDraw{c}{2}{1}{A}{2}{2}{-90}{90}

\VDraw{A}{1}{1}{XOR}{2}{1}{-90}{90}

\VIndraw{c}{1}{1}{3}
\HIndraw{MXYR}{1}{1}{13}

\VOutdraw{XOR}{1}{1}{3}

  \path (\xshift,0) node {$=$};

\end{tikzpicture}
\end{center} 
\begin{center}
\begin{tikzpicture}[x=0.5em,y=0.5em]
\def\xshift{30}
\def\hshift{10}
\def\vshift{30}
\def\del{2.1}

\Block{\xshift-\hshift}{0}{$M_{X/Y}$}{MXY}
\Block{\xshift-2*\hshift}{0}{$S$}{S}

\Block{\xshift+\hshift}{0}{$X$}{X}
\Block{\xshift+2*\hshift}{0}{$M_{X/Y}$}{MXYR} 
\Block{\xshift+2*\hshift}{8}{$\on{NOT}$}{N}

\HIndraw{S}{1}{1}{3} 
\VIndraw{S}{1}{1}{3}
\HDraw{S}{1}{1}{MXY}{1}{1}{0}{180}
\VIndraw{MXY}{1}{1}{3}
\VOutdraw{MXY}{1}{1}{3}

\HIndraw{X}{1}{1}{3}
\HDraw{X}{1}{1}{MXYR}{1}{1}{0}{180} 
\VDraw{N}{1}{1}{MXYR}{1}{1}{-90}{90} 

\VIndraw{N}{1}{1}{3}
\VIndraw{X}{1}{1}{10}
\VOutdraw{MXYR}{1}{1}{3}

  \path (\xshift,0) node {$=$};

\end{tikzpicture}
\end{center} 	
The second rewrite rule can be further expanded using the first one. 
Observe that the crucial point is that the AND-{label} appears, which is {not an affine} Boolean operation. {However, as we have seen in Section \ref{sec:boolean circuits} the AND-label can be implemented using the OR-label, which in turn can be implemented in $\sf{DPG}_\pP\subset \sf{DPG}_{\widetilde{\pP}}$ using the {OR}-gadget.}

\begin{defn}\label{def:phi tilde P}
We extend the double functor $\phi_\pP$ of Definition \ref{def:phi P} to form a commutative diagram of double functors
\[
		\begin{tikzcd}[column sep=huge, row sep=large]
\sf{DPG}_{\widetilde \pP} \arrow[r,"\phi_{\widetilde\pP}"] & \sf{QBit} \\
\sf{DPG}_\pP \arrow[u,hook] \arrow[ru,"\phi_\pP"'] &  
		\end{tikzcd}
		\] 
by extending the corresponding label morphism to 
 $\widetilde{\pP}$ by sending the remaining $C$-correction labels to corresponding $C$-gate instrument in {Eq.~(\ref{eq:U gate instrument})}. 
{The double category determined by the image of $\phi_{\widetilde\pP}$ will be denoted by $\sf{MBPC}_c$.}
\end{defn}

{To compare MBPC with corrections to QCM w}e define {a morphism of port labels}
\[
\begin{tikzcd}
\varphi_{\widetilde{\pP},\qQ}:&[-3em]\widetilde\pP \arrow[r] & \on{Sq}(\sf{DPG}_{\qQ}) 
\end{tikzcd}
\]
by sending {$N_T$, $E$, $S$ labels} to the labels with the same name, and by sending $H$-correction label to the $H$-correction described above and
\begin{center}
\begin{tikzpicture}[x=0.5em,y=0.5em]
\def\xshift{-20}
\def\vshift{10}
\def\hshift{10}
\def\del{2.1} 

\Block{\xshift-\vshift}{0}{$M_{X/Y}$}{MXY}
\Block{\xshift+\vshift}{0}{$S$}{S}
\Block{\xshift+1.2*\vshift}{10}{$c$}{C} 
\Block{\xshift+2*\vshift}{0}{$H$}{H}
\Block{\xshift+2*\vshift}{10}{$\mathbf{1}$}{O}
\Block{\xshift+3*\vshift}{0}{$\widetilde{M}_Z$}{MZT}
\Block{\xshift+1.8*\vshift}{-10}{$\on{XOR}$}{XOR}
\Block{\xshift+4*\vshift}{0}{$\on{Tr}$}{T}     

\VDraw{C}{2}{1}{S}{1}{1}{-90}{90} 
\VDraw{C}{2}{2}{XOR}{2}{1}{-90}{90}
\VDraw{O}{1}{1}{H}{1}{1}{-90}{90}
\VDraw{MZT}{1}{1}{XOR}{2}{2}{-90}{90}
\VOutdraw{XOR}{1}{1}{3}
\VIndraw{C}{1}{1}{3}

\VIndraw{MXY}{1}{1}{3}
\VOutdraw{MXY}{1}{1}{3}

\HIndraw{MXY}{1}{1}{3}
\HIndraw{S}{1}{1}{3}
\HDraw{S}{1}{1}{H}{1}{1}{0}{180}
\HDraw{H}{1}{1}{MZT}{1}{1}{0}{180}
\HDraw{MZT}{1}{1}{T}{1}{1}{0}{180}

  \path (-20,0) node {$=$};  
\end{tikzpicture}
\end{center} 	
Composing with the pasting double functor we obtain
 	\begin{equation}\label{eq:P-Q}
 	\begin{tikzcd}
\kappa_{\tilde\pP,\qQ}:&[-3em]\sf{DPG}_{\widetilde\pP} \arrow[r,"\overline\varphi"] & \sf{DPG}_{\on{Sq}(\sf{DPG}_\qQ)} \arrow[r,"\on{paste}"] &  \sf{DPG}_{\qQ}  .
	\end{tikzcd}
 	\end{equation} 		
	 
To compare the image of $\sf{DPG}_\cC$ in $\sf{DPG}_\qQ$, we compose the
$J(\pi/4)$-gadget from Definition~\ref{def:Jalpha box}, with
$\alpha=\pi/4$, with the teleportation diagram from
Example~\ref{ex:teleportation}, and then apply the rewrite rules of
Section~\ref{sec:standard form}. 

\begin{defn}\label{def:teleported J pi 4}
The \emph{teleported $J(\pi/4)$-gadget} {(see \cite{danos2007measurement})} is defined by {the labeled double port graph in Figure \ref{fig:teleported-j-pi-over-4}.}
\end{defn}

Each {occurrence of the} AND operation {in Figure \ref{fig:teleported-j-pi-over-4}}  can again be implemented using the {OR}-gadget. {The $Z$-correction label can be implemented using the $S$-correction label. Then we can implement the $T$-gate label using the labels in $\pP$ by the horizontal composition {$\Gamma_H\circ\Gamma_{J(\pi/4)}$}. Similarly the preparation and measurement labels in $\cC$ can be implemented using the labels in $\widetilde\pP$. This way we obtain a morphism of labels}
\[
\begin{tikzcd}
\varphi_{\cC,\widetilde{\pP}}:&[-3em]\cC \arrow[r] & \on{Sq}(\sf{DPG}_{\widetilde\pP}) 
\end{tikzcd}
\] 
and in turn we obtain a double functor
 	\begin{equation}\label{eq:C-P}
 	\begin{tikzcd}
\kappa_{\cC,\widetilde\pP}:&[-3em]\sf{DPG}_{\cC} \arrow[r,"\overline\varphi"] & \sf{DPG}_{\on{Sq}(\sf{DPG}_{\widetilde \pP})} \arrow[r,"\on{paste}"] &  \sf{DPG}_{\widetilde{\pP}} 
	\end{tikzcd}
 	\end{equation} 	
which allows us to implement a quantum circuit in the MBPC-with-corrections model.
  
Diagrams (\ref{eq:P-Q}) and (\ref{eq:C-P}) assemble into 
Diagram (\ref{dia:Q-C-P}) and the corresponding double subcategories are given by  
\begin{equation}\label{dia:MBPC QC QCM}
\begin{tikzcd}[column sep=huge, row sep=large]
&\sf{QC} \arrow[d,hook] \arrow[rr,hook] && \sf{QBit} \\
\sf{MBPC}\arrow[r,hook]&\sf{MBPC}_c \arrow[rr,hook] && \sf{QCM} \arrow[u,hook] \\ 
\end{tikzcd}
\end{equation}  
{Observe also that $\sf{MBPC}_c$ is a double subcategory of $\sf{MBQC}$, induced by the inclusion of label sets
$\widetilde{\pP}\subset \mM$.}
 
{
\begin{thm}\label{thm:vertical AQC}
Let $\sf{D}$ be one of the double categories $\sf{MBPC}$, $\sf{MBPC}_c$, $\sf{MBQC}$, or $\sf{QCM}$. Then
\[
\sf{Bool}
\subset
\sf{V}(\sf{MBPC}).
\] 
\end{thm}

\begin{proof}
This follows from the inclusions among the double categories established above and from the fact that the OR-gadget in Definition~\ref{def:Anders Browne gadget} implies that any Boolean function can be implemented in the vertical direction of $\sf{MBPC}$.
\end{proof}
}

\bibliography{bib.bib}
\bibliographystyle{ieeetr}
	
\appendix

\section{{Double categories}}
\label{sec:double category}

Double categories admit composition operations in two directions—horizontal
and vertical—which may be either strict or weak
\cite{grandis1999limits}. In this paper, by a double category we mean a
strict double category.

\begin{defn}\label{def:double category}
A \emph{(strict) double category} $\sf{D}$ is an internal category in the category $\sf{Cat}$ of (small) categories.
\end{defn}

Let us unravel this definition. The double category $\sf{D}$ comes with an object category $\sf{D}_0$ and a morphism category $\sf{D}_1$ together with the source and target functors $s,t:\sf{D}_1\to \sf{D}_0$. To restore the hidden symmetry in this structure the convention is to employ the following language:
\begin{itemize}
\item \emph{objects}: $\on{Ob}(\sf{D}_0)$,
\item \emph{{vertical} morphisms}: $\on{Mor}(\sf{D}_0)$,
\item \emph{{horizontal} morphisms}: $\on{Ob}(\sf{D}_1)$,
\item \emph{squares}: $\on{Mor}(\sf{D}_1)$.
\end{itemize} 
A square $\alpha$ 
of type
$
\left(
\begin{smallmatrix}
&f&\\
u&&v\\
&g&
\end{smallmatrix}
\right)
$
is depicted by a diagram of the form
\[
\begin{tikzcd}
c \arrow[r,"f"] \arrow[d,"u"'] & d \arrow[d,"v"] \\
c' \arrow[r,"g"'] & d'
\arrow[phantom, from=1-2, to=2-1, "\alpha" description]
\end{tikzcd}
\]
where $c,c'd,d'$ are the objects, $f,g$ are the horizontal morphisms, and $u,v$ are the vertical morphisms. {Given another} 
$
\left(
\begin{smallmatrix}
&h&\\
v&&w\\
&k&
\end{smallmatrix}
\right)
$-square $\beta$ 
\[
\begin{tikzcd}
d \arrow[r,"h"] \arrow[d,"v"'] & e \arrow[d,"w"] \\
d' \arrow[r,"k"'] & e'
\arrow[phantom, from=1-2, to=2-1, "\beta" description]
\end{tikzcd}
\]
we can form the horizontal composition to obtain 
\[
\begin{tikzcd}
c \arrow[r,"h\circ f"] \arrow[d,"u"'] & e \arrow[d,"w"] \\
c' \arrow[r,"k\circ g"'] & e'
\arrow[phantom, from=1-2, to=2-1, "\beta \circ \alpha" description]
\end{tikzcd}
\] 
{On the other hand}, we can compose vertically with a 
$
\left(
\begin{smallmatrix}
&g&\\
w&&t\\
&h&
\end{smallmatrix}
\right)
$-square {$\theta$}
\[
\begin{tikzcd}
c' \arrow[r,"g"] \arrow[d,"w"'] & d' \arrow[d,"t"] \\
c'' \arrow[r,"h"'] & d''
\arrow[phantom, from=1-2, to=2-1, "\theta" description]
\end{tikzcd}
\]
to obtain
\[
\begin{tikzcd}
c \arrow[r,"f"] \arrow[d,"w\bullet u"'] & d \arrow[d,"t\bullet v"] \\
c'' \arrow[r,"h"'] & d''
\arrow[phantom, from=1-2, to=2-1, "{\theta} \bullet \alpha" description]
\end{tikzcd}
\] 
These compositions satisfy associativity and unitality. The horizontal and vertical units are denoted by
\[
\begin{tikzcd}
c \arrow[r,"\on{Id}_c"] \arrow[d,"u"'] & c \arrow[d,"u"] \\
c' \arrow[r,"\on{Id}_{c'}"'] & c'
\arrow[phantom, from=1-2, to=2-1, "\on{Id}_u^h" description]
\end{tikzcd}
\;\;\;\;
\text{ and }
\;\;\;\;
\begin{tikzcd}
c \arrow[r,"f"] \arrow[d,"1_c"'] & d \arrow[d,"1_d"] \\
c \arrow[r,"f"'] & d
\arrow[phantom, from=1-2, to=2-1, "\on{Id}_f^v" description]
\end{tikzcd}
\]

{The vertical and horizontal compositions of squares satisfy certain compatibility conditions. For our purposes, the main one is the \emph{interchange law}. Given compatible squares}
\[
\begin{tikzcd}
c \arrow[r] \arrow[d] & d \arrow[d] \arrow[r] & e\arrow[d] \\
c' \arrow[d]  \arrow[r] & d' \arrow[r] \arrow[d] & e' \arrow[d]\\
c'' \arrow[r]  & d''  \arrow[r] & e'' 
\arrow[phantom, from=1-2, to=2-1, "\alpha" description]
\arrow[phantom, from=1-3, to=2-2, "\beta" description]
\arrow[phantom, from=2-2, to=3-1, "\theta" description]
\arrow[phantom, from=2-3, to=3-2, "\gamma" description]
\end{tikzcd}
\]
composing first horizontally and then vertically gives the same result as composing first vertically and then horizontally:
\[
(\gamma \circ\theta)\bullet (\beta \circ \alpha) = (\gamma\bullet \beta)
\circ (\theta \bullet \alpha). 
\]
{In addition, the horizontal and vertical units satisfy} $\on{Id}_{1_c}^h = \on{Id}_{\on{Id}_c}^v$.

{The horizontal and vertical data of a double category can be assembled into $2$-categories. The objects of $\sf{D}$, its vertical morphisms, and squares of the form}
\begin{equation}\label{eq:horizontal id}
\left(
\begin{smallmatrix}
&\on{Id}_c&\\
u&&v\\
&\on{Id}_d&
\end{smallmatrix}
\right)
\end{equation} 
form the vertical $2$-category. On the other hand, the objects of $\sf{D}$, its horizontal morphisms, and squares of the form
\begin{equation}\label{eq:vertical id}
\left(
\begin{smallmatrix}
&f&\\
1_c&&1_d\\
&g&
\end{smallmatrix}
\right)
\end{equation}
form the horizontal $2$-category.

{Throughout the paper, we work with $1$-object double categories. In this case, these $2$-categories are determined by monoidal categories. For our purposes, it is more convenient to name these monoidal categories according to the direction in which their squares are composed. Thus, our naming convention is opposite to that of the associated $2$-categories.}

\begin{defn}\label{def:h v monoidal}
For a $1$-object double category $\sf{D}$, we define the following associated strict monoidal categories:
\begin{itemize}
\item The \emph{horizontal monoidal category} $\sf{H}(\sf{D})$ has the vertical morphisms of $\sf{D}$ as objects. A morphism $u \to v$ is a square as in \eqref{eq:horizontal id}.
Composition of morphisms is given by horizontal composition of squares, and the tensor product is given by vertical composition.

\item The \emph{vertical monoidal category} $\sf{V}(\sf{D})$ has the horizontal morphisms of $\sf{D}$ as objects. A morphism $f \to g$ is a square as in \eqref{eq:vertical id}.
Composition of morphisms is given by vertical composition of squares, and the tensor product is given by horizontal composition.
\end{itemize}
\end{defn}
 
{With this convention, the vertical $2$-category is the delooping of the horizontal monoidal category, whereas the horizontal $2$-category is the delooping of the vertical monoidal category.}

\section{{The standard form}}

 \label{sec:standard form}


{Following~\cite{danos2007measurement}, we introduce the following \emph{rewrite rules} to capture the relations in the image of $\phi_{\mM}$, that is, among the instruments corresponding to the MBQC labels. These rules will subsequently be used to define a quotient double category of $\sf{DPG}_{\mM}$.}
\begin{itemize}

\item {\emph{Correction vs. entangling rewrite rules}:}
{
\begin{center}\label{diag:X1-E}
	\begin{tikzpicture}[x=0.5em,y=0.5em]

		\Block{15}{5}{$C_{X}$}{X}
		\Block{24}{-5}{$C_{Z}$}{Z}
		\Block{0}{0}{$E$}{E}
		\Block{19.5}{16}{$c$}{C}
		
		\VDraw{C}{2}{2}{Z}{1}{1}{-90}{90}
		\VDraw{C}{2}{1}{X}{2}{2}{-90}{90}
		
		\HOutdraw{X}{1}{1}{15}
		\HOutdraw{Z}{1}{1}{5}
		\HIndraw{E}{2}{1}{3}
		\HIndraw{E}{2}{2}{3}
		\HDraw{E}{2}{1}{X}{1}{1}{0}{180}
		\HDraw{E}{2}{2}{Z}{1}{1}{0}{180}

		\VIndraw{C}{1}{1}{3}

	\begin{scope}[xshift=-6cm]
		\Block{0}{5}{$C_{X}$}{X2}
		\Block{15}{0}{$E$}{E2}
		
		\HDraw{X2}{1}{1}{E2}{2}{1}{0}{180}
		
		\HOutdraw{E2}{2}{1}{3}
		\HOutdraw{E2}{2}{2}{3}
		\HIndraw{E2}{2}{2}{18}
		\HIndraw{X2}{1}{1}{3}
		
		\VIndraw{X2}{1}{1}{3}
		
		\path (23,0) node {~$=$~};
	\end{scope}
\end{tikzpicture}
\end{center}

\begin{center}\label{diag:X2-E}
	\begin{tikzpicture}[x=0.5em,y=0.5em]

		\Block{15}{5}{$C_Z$}{X}
		\Block{24}{-5}{$C_X$}{Z}
		\Block{0}{0}{$E$}{E}
		\Block{19.5}{16}{$c$}{C}
		
		\VDraw{C}{2}{2}{Z}{1}{1}{-90}{90}
		\VDraw{C}{2}{1}{X}{2}{2}{-90}{90}
		
		\HOutdraw{X}{1}{1}{15}
		\HOutdraw{Z}{1}{1}{5}
		\HIndraw{E}{2}{1}{3}
		\HIndraw{E}{2}{2}{3}
		\HDraw{E}{2}{1}{X}{1}{1}{0}{180}
		\HDraw{E}{2}{2}{Z}{1}{1}{0}{180}

		\VIndraw{C}{1}{1}{3}

	\begin{scope}[xshift=-6cm]
		\Block{0}{-5}{$C_X$}{X2}
		\Block{15}{0}{$E$}{E2}
		
		\HDraw{X2}{1}{1}{E2}{2}{2}{0}{180}
		
		\HOutdraw{E2}{2}{1}{3}
		\HOutdraw{E2}{2}{2}{3}
		\HIndraw{E2}{2}{1}{18}
		\HIndraw{X2}{1}{1}{3}
		
		\VIndraw{X2}{1}{1}{10}
		
		\path (23,0) node {~$=$~};
	\end{scope}
\end{tikzpicture}
\end{center}

\begin{center}\label{diag:Z1-E}
	\begin{tikzpicture}[x=0.5em,y=0.5em]

		\Block{15}{5}{$C_Z$}{X}
		\Block{0}{0}{$E$}{E}
		
		
		\HOutdraw{X}{1}{1}{5}
		\HOutdraw{E}{2}{2}{15}
		\HIndraw{E}{2}{1}{3}
		\HIndraw{E}{2}{2}{3}
		\HDraw{E}{2}{1}{X}{1}{1}{0}{180}

		\VIndraw{X}{1}{1}{3}

	\begin{scope}[xshift=-6cm]
		\Block{0}{5}{$C_Z$}{X2}
		\Block{15}{0}{$E$}{E2}
		
		\HDraw{X2}{1}{1}{E2}{2}{1}{0}{180}
		
		\HOutdraw{E2}{2}{1}{3}
		\HOutdraw{E2}{2}{2}{3}
		\HIndraw{E2}{2}{2}{18}
		\HIndraw{X2}{1}{1}{3}
		
		\VIndraw{X2}{1}{1}{3}
		
		\path (23,0) node {~$=$~};
	\end{scope}
\end{tikzpicture}
\end{center}

\begin{center}\label{diag:Z2-E}
	\begin{tikzpicture}[x=0.5em,y=0.5em]

		\Block{15}{-5}{$C_Z$}{Z}
		\Block{0}{0}{$E$}{E}
		
		
		\HOutdraw{E}{2}{1}{20}
		\HOutdraw{Z}{1}{1}{5}
		\HIndraw{E}{2}{1}{3}
		\HIndraw{E}{2}{2}{3}
		\HDraw{E}{2}{2}{Z}{1}{1}{0}{180}

		\VIndraw{Z}{1}{1}{8}

	\begin{scope}[xshift=-6cm]
		\Block{0}{-5}{$C_Z$}{X2}
		\Block{15}{0}{$E$}{E2}
		
		\HDraw{X2}{1}{1}{E2}{2}{2}{0}{180}
		
		\HOutdraw{E2}{2}{1}{3}
		\HOutdraw{E2}{2}{2}{3}
		\HIndraw{E2}{2}{1}{18}
		\HIndraw{X2}{1}{1}{3}
		
		\VIndraw{X2}{1}{1}{8}
		
		\path (23,0) node {~$=$~};
	\end{scope}
\end{tikzpicture}
\end{center}
}
\end{itemize}
It is straightforward to show the correctness of these rewrite rules. We demonstrate this explicitly for the first one. The others follow similarly. The left-hand side corresponds to the instrument $\Phi_{EC} := \Phi_{E}\circ \Phi_{C_{X}}$. For $r\in {\ZZ_2}$ we have that $(\Phi_{EC})_r \in \on{CP}((\CC^{2})^{\otimes 2},(\CC^{2})^{\otimes 2})$ given by%
\begin{align*}
{(\Phi_{EC})_r}(-) &= {E} (X^{r}\otimes I)(-)(X^{r}\otimes I){E}\\
 &= (X^{r}\otimes Z^{r}){E}(-){E}(X^{r}\otimes Z^{r}),
\end{align*}
where the second equality follows from the conjugation action of ${E}$. Consider now the right-hand side of the diagram with corresponding instrument $\Phi_{CE} := ((\Phi_{C_{X}}\otimes \Phi_{C_{z}})\bullet \Phi_{c})\circ \Phi_{E}$. Evaluated on $r\in {\ZZ_2}$ 
we have
\begin{align*}
\Phi_{CE}(r)(-) &= \sum_{s,t}\delta_{r,s}\delta_{r,t} (X^{s}\otimes Z^{t}){E}(-){E}(X^{s}\otimes Z^{t}) \\
&= (X^{r}\otimes Z^{r}){E}(-){E}(X^{r}\otimes Z^{r}),
\end{align*}
where the sum over $(s,t) \in {\ZZ_2^2}$ comes from the vertical composition of the Boolean instrument $\Phi_{c}$ with $\Phi_{C_{X}}\otimes \Phi_{C_{Z}}$.
This is precisely the instrument $\Phi_{EC}$.
 
{
\begin{itemize}

\item {\emph{Correction vs. measurement rewrite rules}:}

\begin{center}\label{diag:X-M}
	\begin{tikzpicture}[x=0.5em,y=0.5em]
	\Block{0}{0}{$M_\alpha$}{PA}
	\Block{7.5}{9}{$\on{XOR}$}{XOR}
	
	\VDraw{XOR}{1}{1}{PA}{2}{1}{-90}{90}
	
	\VOutdraw{PA}{1}{1}{3}
	\VIndraw{XOR}{2}{1}{3}
	\VIndraw{XOR}{2}{2}{3}
	
	\HIndraw{PA}{1}{1}{3}

	\begin{scope}[xshift=-6cm]
		\Block{0}{0}{$C_X$}{Z2}
		\Block{15}{0}{$M_\alpha$}{PA2}
		
		\HDraw{Z2}{1}{1}{PA2}{1}{1}{0}{180}
		
		\HIndraw{Z2}{1}{1}{3}
		
		\VOutdraw{PA2}{1}{1}{3}
		\VIndraw{PA2}{1}{1}{3}
		\VIndraw{Z2}{1}{1}{3}
		
		\path (23,0) node {$=$};
	\end{scope}
\end{tikzpicture}
\end{center}

\begin{center}\label{diag:Z-M}
	\begin{tikzpicture}[x=0.5em,y=0.5em]
	\Block{5}{0}{$M_\alpha$}{PA}
	\Block{0}{-10}{$\on{XOR}$}{XOR}
	
	\VDraw{PA}{1}{1}{XOR}{2}{2}{-90}{90}
	
	\VOutdraw{XOR}{1}{1}{3}
	\VIndraw{PA}{1}{1}{3}
	\VIndraw{XOR}{2}{1}{13}
	
	\HIndraw{PA}{1}{1}{8}

	\begin{scope}[xshift=-6cm]
		\Block{0}{0}{$C_Z$}{Z2}
		\Block{15}{0}{$M_\alpha$}{PA2}
		
		\HDraw{Z2}{1}{1}{PA2}{1}{1}{0}{180}
		
		\HIndraw{Z2}{1}{1}{3}
		
		\VOutdraw{PA2}{1}{1}{3}
		\VIndraw{PA2}{1}{1}{3}
		\VIndraw{Z2}{1}{1}{3}
		
		\path (23,0) node {$=$};
	\end{scope}
\end{tikzpicture}
\end{center}

\end{itemize}~\\
}%
{%
We now prove the validity of these rewrite rules. Let us begin with the first and introduce the instrument $\Phi_{CM} := \Phi_{M_{\alpha}}\circ\Phi_{C_{X}}$. For $((r,s),t)\in  {\ZZ_2^2\times \ZZ_2}$ we have that%
\begin{eqnarray}
{(\Phi_{CM})_{r,s}^t}(-)%
&=& \text{Tr}\left (\Pi_{(-1)^{s}\alpha}^{t}X^{r}(-)X^{r} \Pi_{(-1)^{s}\alpha}^{t}\right )\notag\\
&=& \text{Tr}\left (X^{r}\Pi_{(-1)^{r+s}\alpha}^{t}(-)\Pi_{(-1)^{r+s}\alpha}^{t} X^{r}\right )\notag\\
&=& \text{Tr}\left (\Pi_{(-1)^{r+s}\alpha}^{t}(-)\Pi_{(-1)^{r+s}\alpha}^{t}\right )\notag\\
&=& {(\Phi_{M_{\alpha}})_{r+s}^t}(-),\notag%
\end{eqnarray}
where in the second equality we used that $X^{r}(\cos\alpha\, X + \sin\alpha\, Y)X^{r} =  \cos\alpha\, X +(-1)^{r} \sin\alpha\, Y = \cos\alpha \, X + \sin (-1)^{r}\alpha \, Y$ and then used the cyclic property of the trace in the third equality. Alternatively we also have the instrument $\Phi_{M\on{X}} := \Phi_{M_{\alpha}}\bullet \Phi_{\on{XOR}}$. Let $((r,s),t)\in {\ZZ_2^2\times \ZZ_2}$ then we have%
\begin{eqnarray}
{(\Phi_{M\on{X}})_{r,s}^t}(-)%
&=& \sum_{u} \delta_{s\oplus r,u}\text{Tr}\left (\Pi_{(-1)^{u}\alpha}^{t}(-)\Pi_{(-1)^{u}\alpha}^{t}\right )\notag\\%
&=& \text{Tr}\left (\Pi_{(-1)^{r+s}\alpha}^{t}(-)\Pi_{(-1)^{r+s}\alpha}^{t}\right )= {(\Phi_{M_{\alpha}})_{r+s}^t}(-).\notag
\end{eqnarray}
}

{
For the second rewrite rule let us define the maps $\Phi_{CM'} := \Phi_{M_{\alpha}}\circ\Phi_{C_{Z}}$ and $\Phi_{\on{X}M} := \Phi_{\on{XOR}}\bullet \Phi_{M_{\alpha}}$. Computing the first map for $((r,s),t)\in {\ZZ_2^2\times \ZZ_2}$ gives
\begin{eqnarray}
{(\Phi_{CM})_{r,s}^t}(-)%
&=& \text{Tr}\left (\Pi_{(-1)^{s}\alpha}^{t}Z^{r}(-)Z^{r} \Pi_{(-1)^{s}\alpha}^{t}\right )\notag\\
&=& \text{Tr}\left (Z^{r}\Pi_{(-1)^{s}\alpha}^{r+t}(-)\Pi_{(-1)^{s}\alpha}^{r+t} Z^{r}\right )\notag\\
&=& \text{Tr}\left (\Pi_{(-1)^{s}\alpha}^{r+t}(-)\Pi_{(-1)^{s}\alpha}^{r+t}\right )\notag\\
&=& {(\Phi_{M_{\alpha}})_{s}^{r+t}}(-),\notag%
\end{eqnarray}
where in the second equality we used that $Z^{r}X = (-1)^{r}XZ^{r}$ and $Z^{r}Y = (-1)^{r}YZ^{r}$ since $Z$ anticommutes with $X$ and $Y$. On the other hand, we also have that
\begin{eqnarray}
{(\Phi_{M\on{X}})_{r,s}^t} (-)%
&=& \sum_{u} \delta_{r\oplus t,u}\text{Tr}\left (\Pi_{(-1)^{s}\alpha}^{u}(-)\Pi_{(-1)^{s}\alpha}^{{u}}\right )\notag\\%
&=& \text{Tr}\left (\Pi_{(-1)^{s}\alpha}^{r+t}(-)\Pi_{(-1)^{s}\alpha}^{r+t}\right )= {(\Phi_{M_{\alpha}})_{s}^{r+t}} (-).\notag
\end{eqnarray}
}
	
{	
For the standard form we introduce the following label sets 
\begin{itemize}
\item \emph{preparation label set} $\mM_P$ consists 
of the $N_X$-label and $E$-label.
\item \emph{measurement label set} $\mM_M$ is the union of $M_\alpha$ where $\alpha=0,\pm \pi/4$, and ${\bB_{\oplus}}$,

\item \emph{correction label set} $\mM_C$ consists of $C$ where $C=X,Z$.
\end{itemize}

\begin{prop}\label{pro:standard form}
Let $\sf{\overline{DPG}}_\mM$ denote the quotient double category obtained by imposing the rewrite rules above. Then, in this category any $\mM$-labeled double port graph $\Gamma$ can be represented by a unique standard form  $\Gamma_C{\circ}\Gamma_M{\circ}\Gamma_P$ where $\Gamma_A$ is a $\mM_A$-labeled double port graph with $A=P,M,C$.
\end{prop}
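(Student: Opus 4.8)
The plan is to prove this by a normalization (rewriting) argument, closely following the strategy of the measurement calculus of \cite{danos2007measurement}, but phrased in the double-categorical language. First I would observe that an arbitrary $\mM$-labeled double port graph $\Gamma$, after applying the $\phi_\mM$ assignment, decomposes (up to the interchange law and the strict/weak composition structure of $\sf{DPG}_\mM$) into a composite of generating squares, each of which is one of $N_X$, $E$, $C_X$, $C_Z$, or $M_\alpha$ (together with affine Boolean labels, which by the standard-form conventions are absorbed into $\mM_M$ as part of the adaptive control structure). The goal is to push every correction label ($C_X$, $C_Z$) to the right past all entangling and measurement labels, and every entangling label to the right past all preparation labels, leaving the factored form $\Gamma_C \circ \Gamma_E \circ \Gamma_P$.

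The key steps, in order: (1) Show the rewrite system is \emph{locally confluent} — this is where the correction-vs-entangling and correction-vs-measurement rewrite rules displayed above come in. A correction appearing between two entangling operations, or on a wire feeding into a measurement, can be commuted past using precisely those rules; the rules also show that a correction meeting an $X$-preparation $N_X$ can be absorbed (since $X\ket{+}\bra{+}X = \ket{+}\bra{+}$ and similarly for $Z$ up to the standard eigenvector identities), so no correction survives to the left of the preparation layer. (2) Establish \emph{termination}: assign to each labeled double port graph a well-founded measure, e.g. the sum over correction-labeled vertices of the number of entangling and measurement vertices appearing ``earlier'' in the internal flow graph; each application of a commutation rule strictly decreases this measure, and absorption rules decrease the vertex count. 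The acyclicity of the internal flow graph (Definition~\ref{def:port graph}) guarantees this ordering is well-defined and well-founded. (3) Conclude by Newman's lemma that the rewrite system is confluent, hence every $\Gamma$ has a \emph{unique} normal form in $\sf{\overline{DPG}}_\mM$, and by construction this normal form has all preparations first, then all entanglers, then all corrections — that is, it is of the form $\Gamma_C \circ \Gamma_E \circ \Gamma_P$. (4) Verify that $\Gamma_P$ is genuinely $\mM_P$-labeled, $\Gamma_E$ likewise a composite of $E$'s, and $\Gamma_C$ is $\mM_C$-labeled: this is immediate once one checks that no rewrite rule introduces a label outside the relevant layer, which is visible from inspecting the right-hand sides of the displayed rules. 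The correctness computations for the rewrite rules themselves are already carried out in the excerpt.

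The main obstacle I expect is step (1), local confluence, in the presence of the adaptive Boolean control wires: when a correction is pushed past a measurement, the rewrite rule replaces the $C$-correction by an $\on{XOR}$ on the classical output wire, so commuting two corrections past the same measurement, or a correction past a measurement whose outcome already controls a downstream correction, produces overlapping redexes whose resolutions must be shown to coincide. One must carefully track how the Boolean control structure is modified and check that the resulting affine Boolean instruments agree — this uses the identity $\Psi_{i,r}=\Psi_{i',r'}$ whenever $i+r=i'+r' \bmod 2$ (the same identity invoked in the proof of Lemma~\ref{lem:hat Phi}) and the associativity/commutativity of $\oplus$. A secondary subtlety is that the decomposition into generators at the very start is only well-defined up to the interchange law and the coherence isomorphisms of the pseudo-double category $\sf{DPG}_\mM$; one should note that passing to the quotient $\sf{\overline{DPG}}_\mM$ and working with isomorphism classes of labeled double port graphs makes this harmless, since the rewrite rules are stated at the level of these isomorphism classes.
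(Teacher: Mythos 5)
Your overall strategy---normalize by rewriting, prove termination with a measure built from the acyclic internal flow graph, and get uniqueness via local confluence and Newman's lemma---is essentially the route the paper intends, since its proof simply defers to the standardization theorem (Theorem~3) of the measurement calculus \cite{danos2007measurement}, which is established by exactly such a rewriting argument. However, two concrete points in your execution are wrong. First, the target normal form is misdescribed. In the paper's label sets the entangling label $E$ belongs to $\mM_P$ together with $N_X$; the middle layer is the \emph{measurement} layer $\mM_M$ (the $M_\alpha$'s together with the affine Boolean labels carrying the adaptivity); and the corrections $\mM_C$ form the last layer (the ``$\Gamma_E$'' in the statement is a typo for $\Gamma_M$, as the indexing $A=P,M,C$ shows). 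Hence the only task of the rewrite rules is to push $C_X,C_Z$ rightwards: past $E$'s via the correction-vs-entangling rules, and past measurements via the correction-vs-measurement rules, where they turn into XORs on the classical wires. There is no rule, and no need, to commute entanglers past preparations, and your final form ``preparations, then entanglers, then corrections'' omits the measurement layer altogether, which is the essential middle factor.

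Second, your absorption step is false: a $Z$-correction is not absorbed by the $X$-preparation, since $Z\ket{+}\bra{+}Z=\ket{-}\bra{-}\neq\ket{+}\bra{+}$; only $X\ket{+}\bra{+}X=\ket{+}\bra{+}$ holds. More importantly, no such absorption relation is among the relations imposed in $\sf{\overline{DPG}}_\mM$, so even the true half cannot be invoked when arguing in the quotient double category---you may only use the displayed correction-vs-entangling and correction-vs-measurement rules together with the double-categorical structure. Fortunately absorption is not needed: a correction sitting just after a preparation is simply pushed to the right through the subsequent $E$'s and measurements until it reaches the final correction layer, and the claim ``no correction survives to the left of the preparation layer'' should be discarded. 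With the layering corrected and the spurious rules removed, your termination measure (corrections weighted by the number of later $E$/measurement vertices in the flow graph) and your analysis of overlapping redexes on the adaptive Boolean wires---where one indeed needs the mod-$2$ bookkeeping you mention---constitute the remaining work, and this is precisely what the proof cited from \cite{danos2007measurement} supplies.
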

\begin{proof}
Proof is the same as \cite[Theorem 3]{danos2007measurement}.
\end{proof}
}

\begin{figure}[t]
\centering
\begin{tikzpicture}[x=0.5em,y=0.5em]
\def\xshift{-20}
\def\vshift{10}
\def\del{5}
\begin{scope}[scale=0.75,transform shape]

\Block{\xshift-\vshift}{0}{$J(\pi/4)$}{J}

\Block{\xshift+\vshift}{5}{$N_X$}{NU}
\Block{\xshift+\vshift}{-2}{$N_T$}{NM}
\Block{\xshift+\vshift}{-9}{$N_X$}{ND}

\Block{\xshift+2*\vshift}{8.5}{$E$}{E}
\Block{\xshift+3*\vshift}{1.5}{$E$}{EE}
\Block{\xshift+4*\vshift}{-5.5}{$E$}{EEE}

\Block{\xshift+5*\vshift+\del}{21+20}{$\mathbf{0}$}{OL}
\Block{\xshift+7.4*\vshift+\del}{21+20}{$\mathbf{0}$}{OR}
\Block{\xshift+5*\vshift+\del}{13+20}{$M_{X/Y}$}{ML}
\Block{\xshift+5*\vshift+\del}{5+20}{$c$}{C}

\Block{\xshift+6*\vshift+\del}{21+12}{$\mathbf{0}$}{OLL}
\Block{\xshift+6*\vshift+\del}{13+12}{$M_{X/Y}$}{MLL}
\Block{\xshift+6*\vshift+\del}{5+12}{$c$}{CC}

\Block{\xshift+5*\vshift+\del}{-2}{$\on{AND}$}{A}
\Block{\xshift+6*\vshift+\del}{7}{$M_{X/Y}$}{MR}

\Block{\xshift+6.8*\vshift+\del}{7}{$c$}{CCC}

\Block{\xshift+7*\vshift+\del}{-11}{$\on{XOR}$}{XORMT}
\Block{\xshift+7.4*\vshift+\del}{-19}{$\on{XOR}$}{XORMM}
\Block{\xshift+8*\vshift+\del}{-27}{$\on{XOR}$}{XORMD}

\Block{\xshift+7.4*\vshift+\del}{-35}{$X$}{X}
\Block{\xshift+6.6*\vshift+\del}{-35}{$Z$}{Z}

\HDraw{Z}{1}{1}{X}{1}{1}{0}{180}

\VDraw{CCC}{2}{2}{XORMD}{2}{2}{-90}{90}
\VDraw{XORMM}{1}{1}{XORMD}{2}{1}{-90}{90}
\VDraw{XORMT}{1}{1}{XORMM}{2}{1}{-90}{90}
\VDraw{MR}{1}{1}{XORMT}{2}{2}{-90}{90}
\VDraw{A}{1}{1}{XORMT}{2}{1}{-90}{90}
\VDraw{C}{2}{2}{Z}{1}{1}{-90}{90}
\VDraw{XORMD}{1}{1}{X}{1}{1}{-90}{90}
\VDraw{OR}{1}{1}{XORMM}{2}{2}{-90}{90}

\VDraw{CC}{2}{2}{CCC}{1}{1}{-90}{90}
\VDraw{CC}{2}{1}{MR}{1}{1}{-90}{90}
\VDraw{CCC}{2}{1}{A}{2}{2}{-90}{90}
\VDraw{C}{2}{1}{A}{2}{1}{-90}{90}

\VDraw{OLL}{1}{1}{MLL}{1}{1}{-90}{90}
\VDraw{MLL}{1}{1}{CC}{1}{1}{-90}{90}
\VDraw{OL}{1}{1}{ML}{1}{1}{-90}{90}
\VDraw{ML}{1}{1}{C}{1}{1}{-90}{90}

\HIndraw{E}{2}{1}{12}
\HDraw{E}{2}{1}{ML}{1}{1}{0}{180}
\HDraw{E}{2}{2}{EE}{2}{1}{0}{180}
\HDraw{EE}{2}{1}{MLL}{2}{1}{0}{180}
\HDraw{EE}{2}{2}{EEE}{2}{1}{0}{180}
\HDraw{EEE}{2}{1}{MR}{1}{1}{0}{180}
\HDraw{EEE}{2}{2}{Z}{1}{1}{0}{180}

\HDraw{NU}{1}{1}{E}{2}{2}{0}{180}
\HDraw{NM}{1}{1}{EE}{2}{2}{0}{180}
\HDraw{ND}{1}{1}{EEE}{2}{2}{0}{180}

\HOutdraw{X}{1}{1}{10}

\HIndraw{J}{1}{1}{3}
\HOutdraw{J}{1}{1}{3}

\path (-20,0) node {$=$};
\end{scope}
\end{tikzpicture}
\caption{{The teleported $J(\pi/4)$ gadget written as an $\mM[\pi/4]$-labeled double port graph. The boxes $N_X$ and $N_T$ denote state preparations, the boxes $E$ denote entangling operations, the boxes $M_{X/Y}$ denote destructive $X/Y$ measurements, the boxes $c$, $\on{AND}$, and $\on{XOR}$ denote Boolean operations in the classical control part, and the boxes $X$ and $Z$ denote the corresponding correction operations. Solid horizontal wires represent qubits, whereas dashed vertical wires represent classical bits.}}
\label{fig:teleported-j-pi-over-4}
\end{figure}

\end{document}